\newtheorem{thm}{Theorem}[section]
\newtheorem{lem}[thm]{Lemma}
\newtheorem{cor}[thm]{Corollary}
\newtheorem{pro}[thm]{Proposition}
\newtheorem{ex}[thm]{Example}
\newtheorem{rmk}[thm]{Remark}
\newtheorem{defi}[thm]{Definition}
\newcommand {\emptycomment}[1]{}
\newcommand{\lon }{\,\rightarrow\,}
\newcommand{\be }{\begin{equation}}
\newcommand{\ee }{\end{equation}}
\newcommand{\g}{\mathfrak g}
\newcommand{\Real}{\mathbb R}
\newcommand{\huaL}{\mathcal{L}}
\newcommand{\huaE}{\mathcal{E}}
\newcommand{\huaC}{{\mathcal{C}}}
\newcommand{\huaH}{\mathcal{H}}
\newcommand{\CWM}{C^{\infty}(M)}
\newcommand{\XM}{\frkX(M)}
\newcommand{\frka}{\mathfrak a}
\newcommand{\frkg}{\mathfrak g}
\newcommand{\frkD}{\mathfrak D}
\newcommand{\frkE}{\mathfrak E}
\newcommand{\frkX}{\mathfrak X}
\newcommand{\Courant}[1]{\left\llbracket  #1\right\rrbracket }
\newcommand{\Id}{\rm{Id}}
\newcommand{\br}[1]{   [ \cdot,    \cdot  ]   }
\newcommand{\dM}{\mathrm{d}}
\newcommand{\Hom}{\mathrm{Hom}}
\newcommand{\Nat}{\mathbb N}
\newcommand{\Der}{\mathrm{Der}}
\newcommand{\preF}{pre-$F$-algebroid}
\newcommand{\preFs}{pre-$F$-algebroids}
\newcommand{\pr}{\mathrm{pr}}
\newcommand{\Int}{\mathbb Z}
\begin{document}

\title[$F$-algebroids and deformation quantization via pre-Lie algebroids]{$F$-algebroids and deformation quantization via pre-Lie algebroids}

 \author{John Alexander Cruz Morales}
\address{Departmento de matem\'aticas, Universidad Nacional de Colombia, Bogot\'a \\
Max Planck Institute for Mathematics, Vivatsgasse 7, Bonn, Germany}
\email{jacruzmo@unal.edu.co\\
alexcruz@mpim-bonn.mpg.de}

\author{Jiefeng Liu}
\address{School of Mathematics and Statistics, Northeast Normal University, Changchun 130024, Jilin, China}
\email{liujf534@nenu.edu.cn}

\author{Yunhe Sheng}
\address{Department of Mathematics, Jilin University, Changchun 130012, Jilin, China}
\email{shengyh@jlu.edu.cn}


\begin{abstract}
In this paper, first we introduce a new approach to the notion of  $F$-algebroids, which  is a generalization of $F$-manifold algebras and $F$-manifolds, and show that  $F$-algebroids are the corresponding semi-classical limits of pre-Lie formal deformations of commutative associative algebroids. Then we use the deformation cohomology of pre-Lie algebroids to study pre-Lie infinitesimal deformations and extension of pre-Lie $n$-deformations to pre-Lie $(n+1)$-deformations of a commutative associative algebroid. Next we develop the theory of  Dubrovin's dualities of $F$-algebroids with eventual identities and use Nijenhuis operators on $F$-algebroids to construct new $F$-algebroids. Finally we introduce the notion of   pre-$F$-algebroids, which is a generalization of $F$-manifolds with compatible flat connections. Dubrovin's dualities of pre-$F$-algebroids with eventual identities, Nijenhuis operators on pre-$F$-algebroids and their applications to integral systems are discussed.
\end{abstract}

\keywords{$F$-algebroid, pre-$F$-algebroid, eventual identity, Nijenhuis operator}



\maketitle

\tableofcontents

\allowdisplaybreaks


\section{Introduction}\label{sec:intr}

The concept of Frobenius manifolds was introduced by Dubrovin in \cite{Dub95} as a geometrical manifestation of the Witten-Dijkgraaf-Verlinde-Verlinde (WDVV) associativity equations in the $2$-dimensional topological field theories. Hertling and Manin weakened the conditions of a Frobenius manifold and introduced the notion of an $F$-manifold in \cite{HerMa}. Any Frobenius manifold has an underlying $F$-manifold structure. $F$-manifolds   appear in many fields of mathematics such as singularity theory \cite{Her02}, integrable systems \cite{ABLR,ALo2,ALo3,DS04,DS11,Lor,LPR11},  quantum K-theory \cite{LYP}, information geometry \cite{NoManin}, operad \cite{Merku} and so on.

The notion of a Lie algebroid was introduced by Pradines in 1967, which
is a generalization of Lie algebras and tangent bundles. Just as Lie algebras are the
infinitesimal objects of Lie groups, Lie algebroids are the infinitesimal objects of Lie groupoids. See \cite{General theory of Lie groupoid and Lie algebroid} for general
theory about Lie algebroids. Lie algebroids are now an active domain of research, with applications in various parts of mathematics, such as geometric mechanics, foliation theory, Poisson geometry, differential equations, singularity theory, operad and so on.  The notion of a pre-Lie algebroid (also called a left-symmetric algebroid or a Koszul-Vinberg algebroid) is a geometric generalization of a
pre-Lie algebra. Pre-Lie algebras arose from the study of convex homogeneous cones, affine manifolds and affine structures on Lie groups, deformation and cohomology theory of associative algebras and then appear in many fields in mathematics and mathematical physics. See the survey article \cite{Pre-lie algebra in geometry} for more details on pre-Lie algebras and \cite{lsb,LSBC,Boyom1,Boyom2} for more details on cohomology  and applications of pre-Lie algebroids. In \cite{Dot}, Dotsenko  showed that the graded object of the filtration of the operad encoding pre-Lie algebras is the operad encoding
$F$-manifold algebras, where the notion of an $F$-manifold algebra is the underlying algebraic structure of an $F$-manifold. In \cite{LSB},    the notion of  pre-Lie formal deformations of commutative associative algebras was introduced and it was shown that $F$-manifold algebras are the corresponding semi-classical limits. This result is parallel to that the semi-classical limit of an associative formal deformation of a commutative associative algebra is a Poisson algebra.

In this paper, we introduce the notion of   $F$-algebroids, which is a generalization of $F$-manifold algebras and $F$-manifolds. There is a slight difference
between this $F$-algebroid and the one introduced in \cite{CGT}.  We introduce the notion of  pre-Lie formal deformations of commutative associative algebroids and show that $F$-algebroids are the corresponding semi-classical limits. Viewing a commutative associative algebroid as a pre-Lie algebroid, we show that pre-Lie infinitesimal deformation and extension of pre-Lie $n$-deformations to pre-Lie $(n+1)$-deformations of a commutative associative algebroid are classified by the second and the third cohomology groups of the pre-Lie algebroid respectively.

$F$-manifolds with eventual identities were introduced by Manin in \cite{Manin1} and then were studied  systematically by David and Strachan in \cite{DS11}. In this paper, we generalize Dubrovin's dualities of $F$-manifolds with eventual identities to the case of $F$-algebroids. We introduce the notion of (pseudo-)eventual identities on $F$-algebroids and develop the theory of Dubrovin's dualities of $F$-algebroids with eventual identities. We introduce the notion of Nijenhuis operators on $F$-algebroids and use them to construct new $F$-algebroids. In particular, a pseudo-eventual identity naturally gives a Nijenhuis operator on an $F$-algebroid.

The notion of an $F$-manifold with a compatible flat connection was introduced by Manin in \cite{Manin1}. Applications of $F$-manifolds with compatible flat connections also appeared in Painlev\'e equations \cite{ALo1,ALo3,KMS,Lor} and integral systems \cite{ABLR,ALo2,KwM,LP,LPR11}. In this paper, we introduce the notion of  pre-$F$-algebroids, which is a generalization of $F$-manifolds with compatible flat connections. A pre-$F$-algebroid gives rise to an $F$-algebroid. We also study pre-$F$-algebroids with eventual identities and give a characterization of such eventual identities. Furthermore, The theory of Dubrovin's dualities of pre-$F$-algebroids with eventual identities were developed.  We introduce the notion of a Nijenhuis operator on a pre-$F$-algebroid, and show that a Nijenhuis operator gives rise to a deformed pre-$F$-algebroid.

Mirror symmetry, roughly speaking, is a duality between symplectic and complex geometry. The theory of Frobenius and $F$-manifolds plays an important role in this duality. We expect that the notion of $F$-algebroids might also be relevant in understanding the mirror phenomenon. In particular, the Dubrovin's dual of $F$-algebroids constructed in this paper should be related to the mirror construction along the way the Dubrovin's dual of Frobenius manifolds is related, at least in some situations, with mirror symmetry. More precisely the question is: Could we consider the construction of Dubrovin's dual of $F$-algebroids as a kind of mirror construction?  In order to answer the question above, we might need to add some extra structures to $F$-algebroids and include those structures in the construction of the Dubrovin's dual. This would allow us to give a comprehensible interpretation of our construction as a manifestation of a mirror phenomenon. We want to follow this line of thought in future works.

The paper is organized as follows. In Section \ref{sec:defi}, we introduce the notion of  $F$-algebroids and  give some constructions of $F$-algebroids including the action $F$-algebroids and direct product $F$-algebroids. In particular, we show that Poisson manifolds give rise to action $F$-algebroids naturally. In Section \ref{sec:deformation quantization}, we study pre-Lie formal deformations of a commutative associative algebroid, whose semi-classical limits are $F$-algebroids. We show that the equivalence classes of pre-Lie infinitesimal deformations of a commutative associative algebroid $A$ are classified by the second cohomology group in the deformation cohomology of $A$. Furthermore, we study  extensions of pre-Lie $n$-deformations to pre-Lie $(n+1)$-deformations of a commutative associative algebroid $A$ and show that a pre-Lie $n$-deformation can be extendable if and only if its obstruction class in the third cohomology group of the commutative associative algebroid $A$ is trivial. In Section \ref{sec:construction}, we first study Dubrovin's duality of $F$-algebroids with eventual identities. Then we use Nijenhuis operators on $F$-algebroids to construct deformed $F$-algebroids. In Section \ref{sec:pre-F-algebroid}, first we introduce the notion of a pre-$F$-algebroid, and show that a pre-$F$-algebroid gives rise to an $F$-algebroid. Then we study Dubrovin's duality of pre-$F$-algebroids with eventual identities. Finally, we introduce the notion of a Nijenhuis operator on a pre-$F$-algebroid, and show that a Nijenhuis operator on a pre-$F$-algebroid gives rise to a deformed pre-$F$-algebroid. Finally we give some applications to integrable systems.

\vspace{2mm}

\noindent
{\bf Acknowledgements.} This research was  supported by NSFC (11922110,11901501). The second author also supported by the National Key Research and Development Program of China (2021YFA1002000). The first author wants to thank Max Planck Institute for Mathematics where the last part of this work was written for its hospitality and financial support.




\section{$F$-algebroids}\label{sec:defi}

In this section, we introduce the notion of  $F$-algebroids, which are generalizations of $F$-manifolds and $F$-manifold algebras. We give some constructions of $F$-algebroids including the action $F$-algebroids and direct product $F$-algebroids.
\begin{defi}
An {\bf $F$-manifold algebra} is a triple $(\g,[-,-],\cdot)$, where $(\g,\cdot)$ is a commutative associative algebra and $(\g,[-,-])$ is a Lie algebra, such that for all $x,y,z,w\in \g$, the Hertling-Manin relation holds:
\begin{equation}\label{eq:HM relation}
P_{x\cdot y}(z,w)=x\cdot P_{y}(z,w)+y\cdot P_{x}(z,w),
\end{equation}
where $P_{x}(y,z)$ is defined by
\begin{equation}\label{eq:P}
 P_{x}(y,z)=[x,y\cdot z]-[x,y]\cdot z-y\cdot [x,z].
\end{equation}
\end{defi}


\begin{ex}{\rm
Any Poisson algebra is an $F$-manifold algebra.}
\end{ex}

\begin{defi}
An {\bf $F$-manifold} is a pair $(M,\bullet)$, where $M$ is a smooth manifold and $\bullet$ is a $C^\infty(M)$-bilinear,  commutative, associative multiplication on the tangent bundle $TM$  such that $(\frkX(M),[-,-]_{\frkX(M)},\bullet)$ is an $F$-manifold algebra, where $[-,-]_{\frkX(M)}$ is the Lie bracket of vector fields.
\end{defi}


The notion of   Lie algebroids was introduced
by Pradines in 1967, as a generalization of Lie algebras and
tangent bundles. See \cite{General theory of Lie groupoid and Lie
algebroid} for the general theory about Lie algebroids.
\begin{defi}
  A {\bf Lie algebroid} structure on a vector bundle $A\longrightarrow M$ is
a pair that consists of a Lie algebra structure $[-,-]_A$ on
the section space $\Gamma(A)$ and a vector bundle morphism
$a_A:A\longrightarrow TM$, called the anchor, such that the
following relation is satisfied:
$$~[X,fY]_A=f[X,Y]_A+a_A(X)(f)Y,\quad \forall~X,Y\in\Gamma(A),~f\in
\CWM.$$
\end{defi}
We usually denote a Lie algebroid by $(A,[-,-]_A,a_A)$, or $A$
if there is no confusion.

\begin{defi}
  A {\bf commutative associative algebroid} is a vector bundle
   $A$ over $M$ equipped with a $C^\infty(M)$-bilinear, commutative, associative multiplication $\cdot_A$ on the section space $\Gamma(A)$.

\end{defi}

We denote a commutative associative algebroid by $(A,\cdot_A)$.

In the following, we give the notion of $F$-algebroids, which are generalizations of $F$-manifold algebras and $F$-manifolds.

\begin{defi}
An {\bf $F$-algebroid}  is a vector bundle
   $A$ over $M$ equipped with a bilinear operation $\cdot_A:\Gamma(A)\times \Gamma(A)\rightarrow \Gamma(A)$, a skew-symmetric bilinear bracket $[-,-]_A:\Gamma(A)\times \Gamma(A)\rightarrow \Gamma(A)$, and a bundle map $a_A:A\rightarrow TM$, called the anchor, such that $(A,[-,-]_A,a_A)$ is a Lie algebroid, $(A,\cdot_A)$ is a commutative associative algebroid and $(\Gamma(A),[-,-]_A,\cdot_A)$ is an $F$-manifold algebra.

\end{defi}

We denote an $F$-algebroid by $(A,[-,-]_A,\cdot_A,a_A)$.
\emptycomment{\begin{defi}
An {\bf $F$-algebroid}  is a vector bundle
   $A$ over $M$ equipped with a $C^\infty(M)$-bilinear, commutative, associative multiplication $\cdot_A:\Gamma(A)\otimes \Gamma(A)\rightarrow \Gamma(A)$, a skew-symmetric bilinear bracket $[-,-]_A:\Gamma(A)\otimes \Gamma(A)\rightarrow \Gamma(A)$, and a bundle map $a_A:A\rightarrow TM$, called the anchor, such that $(A,[-,-]_A,a_A)$ is a Lie algebroid and $(\Gamma(A),[-,-]_A,\cdot_A)$ is an $F$-manifold algebra.
We denote an $F$-algebroid by $(A,[-,-]_A,\cdot_A,a_A)$.
\end{defi}}

\begin{rmk}
In \cite{CGT}, the authors had already defined an $F$-algebroid. There is a slight
difference between the above definition of an $F$-algebroid and that one. In \cite{CGT}, it is  assumed that the base manifold has an $F$-manifold structure $(M,\bullet)$. An $F$-algebroid defined in \cite{CGT} is a vector bundle
   $A$ over $M$ equipped with a bilinear operation $\cdot_A:\Gamma(A)\times \Gamma(A)\rightarrow \Gamma(A)$, a skew-symmetric bilinear bracket $[-,-]_A:\Gamma(A)\times \Gamma(A)\rightarrow \Gamma(A)$, and a bundle map $a_A:A\rightarrow TM$, such that $(A,[-,-]_A,a_A)$ is a Lie algebroid, $(A,\cdot_A)$ is a commutative associative algebroid, $(\Gamma(A),[-,-]_A,\cdot_A)$ is an $F$-manifold algebra and
   \begin{equation}\label{eq:F-algebroid-CGT}
   a_A(X\cdot_A Y)=a_A(X)\bullet a_A(Y),\quad\forall~X,Y\in\Gamma(A).
   \end{equation}
\end{rmk}

\begin{ex}
 Any $F$-manifold algebra is an $F$-algebroid over a point. Let $(M,\bullet)$ be an $F$-manifold. Then $(TM,[-,-]_{\frkX(M)},\bullet,\Id)$ is an $F$-algebroid.
\end{ex}

\begin{defi}
Let $(A,[-,-]_A,\cdot_A,a_A)$ and $(B,[-,-]_B,\cdot_B,a_B)$ be $F$-algebroids on $M$. A bundle map $\varphi:A\longrightarrow B$ is
called a {\bf homomorphism}  of $F$-algebroids, if for all $~X,Y\in\Gamma(A)$, the following
conditions are satisfied:
\begin{eqnarray*}
  \varphi(X\cdot_A Y)=\varphi(X)\cdot_B \varphi(Y),\quad
   \varphi([X, Y]_A)=[\varphi(X),\varphi(Y)]_B,\quad
   a_B\circ \varphi=a_A.
\end{eqnarray*}
\end{defi}

\begin{defi}
Let $(A,[-,-]_A,\cdot_A,a_A)$ be an $F$-algebroid. A section $e\in\Gamma(A)$ is  called the {\bf identity} if  $e\cdot_A X=X$ for all $X\in\Gamma(A)$. We denote an $F$-algebroid $(A,[-,-]_A,\cdot_A,a_A)$ with an identity $e$ by $(A,[-,-]_A,\cdot_A,e,a_A)$.
\end{defi}

\begin{pro}\label{pro:HM relation-tensor}
Let $(A,[-,-]_A,a_A)$ be a Lie algebroid equipped with a $C^\infty(M)$-bilinear, commutative, associative multiplication $\cdot_A:\Gamma(A)\times \Gamma(A)\rightarrow \Gamma(A)$. Define
\begin{equation}\label{eq:HM equation}
\Phi(X,Y,Z,W):=P_{X\cdot_A Y}(Z,W)-X\cdot_AP_Y(Z,W)-Y\cdot_AP_X(Z,W),\quad \forall~X,Y,Z,W\in\Gamma(A),
\end{equation}
where $P$ is given by \eqref{eq:P}. Then $\Phi$ is a tensor field  of type $(4,1)$ and
\begin{equation}
\Phi(X,Y,Z,W)=\Phi(Y,X,Z,W)=\Phi(X,Y,W,Z).
\end{equation}
\end{pro}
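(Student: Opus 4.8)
The plan is to verify the two required properties of $\Phi$ separately: first its $C^\infty(M)$-multilinearity (tensoriality), then its symmetry under swapping the first two arguments and under swapping the last two arguments. For the tensoriality, recall that $P_X(Y,Z)=[X,Y\cdot_A Z]-[X,Y]\cdot_A Z-Y\cdot_A[X,Z]$ is the failure of $\ad_X=[X,-]$ to be a derivation of $\cdot_A$; a standard computation using the Leibniz rule for the Lie algebroid bracket (the terms involving $a_A(X)(f)$, $a_A(Y)(f)$, etc., cancel in pairs) shows that $P_X(Y,Z)$ is $C^\infty(M)$-linear in $Y$ and in $Z$ but only a derivation-type expression in $X$. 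The expression $\Phi(X,Y,Z,W)$ is manifestly $C^\infty(M)$-linear in $Z$ and $W$ once $P$ is, so I would concentrate on the $X$- and $Y$-slots. Here the key point is that the combination $P_{X\cdot_A Y}(Z,W)-X\cdot_A P_Y(Z,W)-Y\cdot_A P_X(Z,W)$ is precisely the Hertling–Manin defect, and the non-tensorial (first-order differential) parts of the three terms must cancel: replacing $X$ by $fX$ produces correction terms $a_A(X)(f)(\cdots)$ from $P_{fX\cdot_A Y}=P_{f(X\cdot_A Y)}$ and from $P_{fX}$, and these are arranged to annihilate each other because $X\cdot_A Y$ and $X$ are being multiplied by the same function $f$. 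I would write out $\Phi(fX,Y,Z,W)-f\Phi(X,Y,Z,W)$ explicitly, collect the terms proportional to $a_A(X)(f)$, and check they sum to zero; by commutativity of $\cdot_A$ the same then handles the $Y$-slot.

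For the symmetry $\Phi(X,Y,Z,W)=\Phi(Y,X,Z,W)$, this is immediate from the definition \eqref{eq:HM equation}: the expression is visibly symmetric in $X$ and $Y$ since $X\cdot_A Y=Y\cdot_A X$ (commutativity of $\cdot_A$) and the two subtracted terms $X\cdot_A P_Y(Z,W)$ and $Y\cdot_A P_X(Z,W)$ are interchanged. For the symmetry $\Phi(X,Y,Z,W)=\Phi(X,Y,W,Z)$, it suffices to show $P_X(Z,W)=P_X(W,Z)$ for every $X$; but this follows directly from \eqref{eq:P} and the commutativity of $\cdot_A$, since $[X,Z\cdot_A W]=[X,W\cdot_A Z]$ and $[X,Z]\cdot_A W+Z\cdot_A[X,W]=[X,W]\cdot_A Z+W\cdot_A[X,Z]$. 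Hence each of $P_{X\cdot_A Y}$, $P_X$, $P_Y$ is symmetric in its last two slots, and so is $\Phi$.

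The main obstacle is the tensoriality in the $X$-slot: one must be careful to expand $P_{fX}(Z,W)$, $P_{f(X\cdot_A Y)}(Z,W)$ (using $C^\infty(M)$-bilinearity of $\cdot_A$ so that $fX\cdot_A Y=f(X\cdot_A Y)$), and then use the Leibniz rule $[fX,U]_A=f[X,U]_A-a_A(U)(f)X$ together with $[X,fU]_A=f[X,U]_A+a_A(X)(f)U$ to track every appearance of the anchor. The bookkeeping is routine but it is the only place where something could genuinely fail; once the anchor-terms are seen to cancel, tensoriality of type $(4,1)$ follows since $\Phi$ is then $C^\infty(M)$-linear in all four arguments and takes values in $\Gamma(A)$. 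I would also remark that tensoriality means $\Phi$ depends pointwise on $X,Y,Z,W$, so the Hertling–Manin relation \eqref{eq:HM relation} for $(\Gamma(A),[-,-]_A,\cdot_A)$ is equivalent to the vanishing of the section $\Phi$, which is what makes the $F$-algebroid condition a pointwise (tensorial) condition.
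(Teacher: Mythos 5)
Your proposal is correct and follows essentially the same route as the paper: the two symmetries are read off from the commutativity of $\cdot_A$, and tensoriality is reduced by those symmetries to one slot of each pair and then checked by expanding with the Leibniz rules and cancelling the anchor terms (your observation that $P_X(-,-)$ is already $C^\infty(M)$-bilinear handles the $Z,W$ slots, which the paper treats by the "similarly" remark). One small correction to your bookkeeping in the $X$-slot: replacing $X$ by $fX$, the non-tensorial corrections coming from $P_{f(X\cdot_A Y)}(Z,W)$ and from $Y\cdot_A P_{fX}(Z,W)$ are proportional to $a_A(Z)(f)$, $a_A(W)(f)$ and $a_A(Z\cdot_A W)(f)$ (not to $a_A(X)(f)$), and it is these terms that cancel in pairs, exactly as in the paper's displayed computation.
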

\begin{proof}
  By the  commutativity of the associative multiplication $\cdot_A$, we have
  $$\Phi(X,Y,Z,W)=\Phi(Y,X,Z,W)=\Phi(X,Y,W,Z).$$
  To prove that $\Phi$ is a tensor field  of type $(4,1)$, we only need to show
  $$\Phi(fX,Y,Z,W)=\Phi(X,Y,fZ,W)=f\Phi(X,Y,Z,W).$$
 By a direct calculation, we have
 \begin{eqnarray*}
 && \Phi(fX,Y,Z,W)\\
  &=&[f(X\cdot_A Y),Z\cdot_A W]_A-Z\cdot_A [f(X\cdot_A Y),W]_A-W\cdot_A [f(X\cdot_A Y),Z]_A\\
  &&-f\big(X\cdot_A P_Y(Z,W)\big)-Y\cdot_A \big([fX,Z\cdot_A W]_A-Z\cdot_A[fX,W]_A-W\cdot_A[fX,Z]_A\big)\\
  &=& fP_{X\cdot_A Y}(Z,W)-a_A(Z\cdot_A W)(f)(X\cdot_A Y)+a_A(W)(f)(X\cdot_A Y\cdot_A Z)\\
  &&+a_A(Z)(f)(X\cdot_A Y\cdot_A W)-f\big(X\cdot_A P_Y(Z,W)\big)-f\big(Y\cdot_A P_X(Z,W)\big)\\
  &&+a_A(Z\cdot_A W)(f)(X\cdot_A Y)-a_A(W)(f)(X\cdot_A Y\cdot_A Z)-a_A(Z)(f)(X\cdot_A Y\cdot_A W)\\
  &=& f \Phi(X,Y,Z,W).
 \end{eqnarray*}
 Similarly, we also have $\Phi(X,Y,fZ,W)=f\Phi(X,Y,Z,W)$.
\end{proof}

\begin{pro}\label{pro:identity}
 Let $(A,[-,-]_A,\cdot_A, a_A)$ be an $F$-algebroid with an identity $e$. Then $$P_e(X,Y)=0.$$
\end{pro}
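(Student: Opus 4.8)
The plan is to specialize the Hertling--Manin relation \eqref{eq:HM relation}, which holds for the $F$-manifold algebra $(\Gamma(A),[-,-]_A,\cdot_A)$ underlying the $F$-algebroid, to the case $x=y=e$. First I would record the only auxiliary fact needed, namely that $e\cdot_A e=e$; this is immediate from the defining property $e\cdot_A X=X$ of the identity applied with $X=e$. Then the left-hand side $P_{e\cdot_A e}(X,Y)$ of \eqref{eq:HM relation} collapses to $P_e(X,Y)$. On the right-hand side, each summand has the form $e\cdot_A P_e(X,Y)$, which again equals $P_e(X,Y)$ because $e$ is the identity. Therefore \eqref{eq:HM relation} becomes $P_e(X,Y)=2\,P_e(X,Y)$, and subtracting $P_e(X,Y)$ from both sides yields $P_e(X,Y)=0$ for all $X,Y\in\Gamma(A)$.

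Equivalently, one can phrase the argument through the tensor $\Phi$ of Proposition \ref{pro:HM relation-tensor}: since $(\Gamma(A),[-,-]_A,\cdot_A)$ is an $F$-manifold algebra, $\Phi$ defined by \eqref{eq:HM equation} vanishes identically. Evaluating on $(e,e,X,Y)$ and using $e\cdot_A e=e$ together with the identity property gives
\[
0=\Phi(e,e,X,Y)=P_e(X,Y)-e\cdot_A P_e(X,Y)-e\cdot_A P_e(X,Y)=-\,P_e(X,Y),
\]
so $P_e(X,Y)=0$.

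There is essentially no obstacle: the statement is a direct one-line specialization of the Hertling--Manin axiom, the only input beyond the axiom being the trivial observation that $e\cdot_A e=e$. The proof I would write is just the short computation above, presented in the first of the two forms.
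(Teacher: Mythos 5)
Your proof is correct and is exactly the argument the paper has in mind: its proof of Proposition \ref{pro:identity} simply says the claim follows directly from the Hertling--Manin relation \eqref{eq:HM relation}, and your specialization $x=y=e$ together with $e\cdot_A e=e$, giving $P_e(X,Y)=2P_e(X,Y)$, is the intended one-line computation.
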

\begin{proof}
It follows from  \eqref{eq:HM relation} directly.
\end{proof}

\begin{defi}
Let $(\frkg,[-,-],\cdot)$ be an $F$-manifold algebra. An {\bf action}
of $\frkg$ on a manifold $M$ is a linear map $\rho:\frkg\longrightarrow\frkX(M)$  from $\g$ to the space of vector fields on $M$,
such that for all $x,y\in\frkg$, we have
 $$
 \rho([x,y])=[\rho(x),\rho(y)]_{\frkX(M)}.
 $$
\end{defi}

 Given an action of
$\frkg$ on $M$, let $A=M\times
\g$ be the trivial bundle. Define an anchor map $a_\rho:A\longrightarrow TM$, a multiplication $\cdot_\rho:\Gamma(A)\times \Gamma(A)\longrightarrow \Gamma(A)$ and a bracket $[-,-]_\rho:\Gamma(A)\times \Gamma(A)\longrightarrow \Gamma(A)$   by
\begin{eqnarray}
a_\rho(m,u)&=&\rho(u)_m,\quad \forall ~m\in M, u\in\g,\label{action F1}\\
{X\cdot_\rho Y}&=&X\cdot Y, \label{action F2}\\
{[X,Y]_\rho}&=&\huaL_{\rho(X)}Y-\huaL_{\rho(Y)}X+[X,Y],\quad \forall~X,Y\in\Gamma(A), \label{action F3}
\end{eqnarray}
where  $X\cdot Y$ and $[X,Y]$ are the pointwise $C^{\infty}(M)$-bilinear multiplication and bracket, respectively.
\begin{pro}\label{pro:action F-algebroid}
With the above notations, $(A=M\times\frkg,[-,-]_\rho,\cdot_\rho,a_\rho)$ is an $F$-algebroid, which is called  an {\bf action $F$-algebroid},
where $[-,-]_\rho$, $\cdot_\rho $ and $a_\rho$ are given by $(\ref{action F3})$, $(\ref{action F2})$
and $(\ref{action F1})$, respectively.
\end{pro}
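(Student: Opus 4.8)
The plan is to verify the three defining axioms of an $F$-algebroid for the triple $(A=M\times\frkg,[-,-]_\rho,\cdot_\rho,a_\rho)$: that $(A,[-,-]_\rho,a_\rho)$ is a Lie algebroid, that $(A,\cdot_\rho)$ is a commutative associative algebroid, and that $(\Gamma(A),[-,-]_\rho,\cdot_\rho)$ is an $F$-manifold algebra, i.e.\ satisfies the Hertling-Manin relation \eqref{eq:HM relation}.

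First I would treat the Lie algebroid axiom. The bracket \eqref{action F3} is the standard action (transformation) Lie algebroid bracket: skew-symmetry is clear, and the Jacobi identity follows from the Jacobi identity of $[-,-]$ on $\frkg$, the fact that $\rho$ is a morphism of brackets, and the compatibility of the Lie derivative $\huaL$ with brackets of vector fields; this is the classical computation for action Lie algebroids and I would only sketch it. For the Leibniz rule, a direct computation gives $[X,fY]_\rho=\huaL_{\rho(X)}(fY)-\huaL_{f\rho(Y)}X+[X,fY]$; expanding $\huaL_{\rho(X)}(fY)=f\huaL_{\rho(X)}Y+\rho(X)(f)Y$ and $\huaL_{f\rho(Y)}X=f\huaL_{\rho(Y)}X$ (since $\huaL$ is $C^\infty(M)$-linear in the first slot for the zero connection on the trivial bundle) and using $C^\infty(M)$-bilinearity of the pointwise bracket $[-,-]$, one gets $[X,fY]_\rho=f[X,Y]_\rho+a_\rho(X)(f)Y$, as required, with $a_\rho(X)=\rho(X)$ regarded as a vector field along $M$. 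The associativity and commutativity of $\cdot_\rho$ are immediate from \eqref{action F2}, since the pointwise product is just the product of $\frkg$ applied fibrewise, and $C^\infty(M)$-bilinearity is likewise clear.

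The main work, and the main obstacle, is the Hertling-Manin relation. By Proposition \ref{pro:HM relation-tensor}, the obstruction $\Phi$ defined in \eqref{eq:HM equation} is a tensor field of type $(4,1)$, so it suffices to check $\Phi(X,Y,Z,W)=0$ on \emph{constant} sections $X,Y,Z,W\in\frkg\subset\Gamma(A)$ (a pointwise frame). For constant sections the pointwise bracket $[X,Y]$ is the constant section given by the $\frkg$-bracket, and the Lie-derivative terms vanish because $\huaL_{\rho(X)}Y=0$ when $Y$ is constant; hence $[X,Y]_\rho=[X,Y]$ reduces to the $\frkg$-bracket on constant sections, and $\cdot_\rho$ is the $\frkg$-product. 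Therefore on constant sections the tensor $P_X$ of \eqref{eq:P} for the $F$-algebroid coincides with the tensor $P_X$ of \eqref{eq:P} for the $F$-manifold algebra $(\frkg,[-,-],\cdot)$, and so does $\Phi$. Since $(\frkg,[-,-],\cdot)$ is an $F$-manifold algebra, \eqref{eq:HM relation} holds, so $\Phi$ vanishes on constant sections, hence identically. This is the step I expect to require the most care: one must be sure that all the Lie-derivative correction terms in $[-,-]_\rho$ genuinely drop out on a frame of constant sections, which is exactly what lets us import the $F$-manifold algebra identity from $\frkg$.

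Finally I would assemble these three verifications into the statement that $(A,[-,-]_\rho,\cdot_\rho,a_\rho)$ is an $F$-algebroid, and note that it deserves the name ``action $F$-algebroid'' by analogy with action Lie algebroids. If desired, one can also remark that the anchor is automatically compatible in the weaker sense relevant here, but since the present definition of $F$-algebroid (unlike that of \cite{CGT}) imposes no condition relating $a_A$ to a base $F$-manifold structure, nothing further need be checked.
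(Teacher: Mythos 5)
Your proof is correct and follows essentially the same route as the paper: the standard action Lie algebroid facts are taken as known, and the Hertling--Manin relation is verified by invoking the tensoriality of $\Phi$ from Proposition \ref{pro:HM relation-tensor} and evaluating on constant sections, where the structure reduces to the $F$-manifold algebra $\frkg$. The paper's proof is merely terser, writing $\Phi(f_1u_1,f_2u_2,f_3u_3,f_4u_4)=f_1f_2f_3f_4\Phi(u_1,u_2,u_3,u_4)=0$ without spelling out the vanishing of the Lie-derivative terms on constant sections as you do.
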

\begin{proof} Note that the multiplication $\cdot_\rho $ is a $C^\infty(M)$-bilinear, commutative and associative multiplication and $(A,[-,-]_\rho,a_\rho)$ is a Lie algebroid.
 By Proposition \ref{pro:HM relation-tensor} and the fact that $\g$ is an $F$-manifold algebra, for all $u_1,u_2,u_3,u_4\in\g$ and $f_1,f_2,f_3,f_4\in C^{\infty}(M)$, we have
 $$\Phi(f_1u_1,f_2u_2,f_3u_3,f_4u_4)=f_1f_2f_3f_4\Phi(u_1,u_2,u_3,u_4)=0,$$
which implies that $(\Gamma(A),[-,-]_\rho,\cdot_\rho)$ is an $F$-manifold algebra. Thus $(A,[-,-]_\rho,\cdot_\rho,a_\rho)$ is an $F$-algebroid.
\end{proof}

	\begin{ex}{\rm
  Let $\g$ be a $2$-dimensional vector space  with basis $\{e_1,e_2\}$. Then $(\g,[-,-],\cdot)$ with the non-zero multiplication $\cdot$ and the bracket $[-,-]$
  \begin{eqnarray*}
  e_1\cdot e_1=e_1,\quad e_1\cdot e_2=e_2\cdot e_1=e_2,\quad
  {[e_1,e_2]}= e_2
  \end{eqnarray*}
  is an $F$-manifold algebra with the identity $e_1$. Let $(t_1,t_2)$ be the canonical coordinate systems on $\Real^2$. It is straightforward to check that the map  $\rho:\g\longrightarrow \frkX(\Real^2)$ defined by
$$\rho(e_1)=t_2\frac{\partial}{\partial t_2},\quad \rho(e_2)=t_2\frac{\partial}{\partial t_1}+t^2_2\frac{\partial}{\partial t_2}$$
is an action  of the $F$-manifold algebra $\g$ on $\Real^2$. Then $(A=\Real^2\times\g,[-,-]_\rho,\cdot_\rho,a_\rho)$ is an $F$-algebroid with an identity $1\otimes e_1$,
where  $[-,-]_\rho$, $\cdot_\rho $ and $a_\rho$ are given by
\begin{eqnarray*}
a_\rho(m,c_1e_1+c_2 e_2)&=&(c_1t_2\frac{\partial}{\partial t_2}+c_2t_2\frac{\partial}{\partial t_1}+c_2t^2_2\frac{\partial}{\partial t_2})\mid_m,\quad \forall ~m\in \Real^2, \\
{f\otimes (c_1e_1)\cdot_\rho g\otimes (c_2e_i)}&=&(fg)\otimes (c_1c_2 e_i), \quad {f\otimes (c_1e_2)\cdot_\rho g\otimes (c_2e_2)}=0, \\
{[f\otimes (c_1e_1),g\otimes (c_2e_2)]_\rho}&=&fc_1t_2\frac{\partial g}{\partial t_2}\otimes (c_2 e_2)- gc_2(t_2\frac{\partial f}{\partial t_1}+t^2_2\frac{\partial f}{\partial t_2})\otimes (c_1 e_1)+ fg\otimes (c_1c_2[e_1, e_2]),
\end{eqnarray*}
where $f,g\in C^{\infty}(\Real^2),~c_1,c_2\in \Real,~i\in\{1,2\}$.

  }
\end{ex}

Let $(M,\pi)$ be a Poisson manifold and $(\CWM,\cdot,\{-,-\})$ be the corresponding Poisson algebra. Then for a given function $f$ on $M$, there is a unique vector field $H_f$ on $M$, called the Hamiltonian vector field of $f$, such that for any $g\in\CWM$, we have
$H_f(g)=\{f,g\}.$
Furthermore, the map $H:\CWM\to\XM$ defined by $ f\mapsto H_f$
 is a homomorphism from the Lie algebra $\CWM$ of smooth functions to the Lie algebra of smooth vector fields, i.e.
 $$H_{\{f,g\}}=[H_f,H_g]_{\XM},\quad \forall~f,g\in\CWM.$$

\begin{pro}
Let $(M,\pi)$ be a Poisson manifold and $(\CWM,\cdot,\{-,-\})$ be the corresponding Poisson algebra.
Then $(A=M\times\CWM,[-,-]_H,\cdot_H,a_H)$ is an $F$-algebroid, where the anchor map $a_H$, multiplication $\cdot_H$ and bracket $[-,-]_H$ are given   by
\begin{eqnarray*}
a_H(m,f)&=&H_f(m),\quad \forall ~m\in M, f\in\CWM,\label{Poisson action F1}\\
{(f_1 \otimes g_1)\cdot_H (f_2\otimes g_2)}&=&(f_1f_2)\otimes (g_1g_2), \label{Poisson action F2}\\
{[f_1 \otimes g_1,f_2\otimes g_2]_H}&=&f_1\{g_1,f_2\}\otimes \g_2-f_2\{g_2,f_1\}\otimes \g_1+f_1f_2\otimes \{g_1,g_2\}, \label{Poisson action F3}
\end{eqnarray*}
where $f_1,f_2,g_1,g_2$ are smooth functions on $M$.
\end{pro}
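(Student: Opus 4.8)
The plan is to recognize $(A=M\times\CWM,[-,-]_H,\cdot_H,a_H)$ as the action $F$-algebroid associated with a suitable action, and then to invoke Proposition \ref{pro:action F-algebroid}. First I would recall, from the example following the definition of an $F$-manifold algebra, that the Poisson algebra $(\CWM,\cdot,\{-,-\})$ is itself an $F$-manifold algebra, and, from the paragraph immediately preceding the statement, that the Hamiltonian map $H\colon\CWM\to\XM$, $f\mapsto H_f$, is linear and satisfies $H_{\{f,g\}}=[H_f,H_g]_{\XM}$. Hence $H$ is an action of the $F$-manifold algebra $(\CWM,\cdot,\{-,-\})$ on $M$ in the sense of the definition preceding Proposition \ref{pro:action F-algebroid}; no compatibility with $\cdot$ is demanded.

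The second step is to check that the structure maps $a_H$, $\cdot_H$, $[-,-]_H$ in the statement coincide with the maps $a_\rho$, $\cdot_\rho$, $[-,-]_\rho$ of \eqref{action F1}--\eqref{action F3} for the choice $\rho=H$. For the anchor this is \eqref{action F1} verbatim. For the multiplication, interpreting the section $f\otimes g$ as $m\mapsto f(m)g\in\CWM$, the pointwise product of \eqref{action F2} evaluated at $m$ is $\big(f_1(m)g_1\big)\cdot\big(f_2(m)g_2\big)=f_1(m)f_2(m)(g_1 g_2)$, that is, $(f_1f_2)\otimes(g_1g_2)$. For the bracket, $\rho(f_1\otimes g_1)$ is the vector field $f_1H_{g_1}$, so $\huaL_{\rho(f_1\otimes g_1)}(f_2\otimes g_2)=\big(f_1H_{g_1}(f_2)\big)\otimes g_2=f_1\{g_1,f_2\}\otimes g_2$; the skew term contributes $-f_2\{g_2,f_1\}\otimes g_1$; and the fibrewise bracket $[f_1\otimes g_1,f_2\otimes g_2]$, the fibre Lie bracket being the Poisson bracket, equals $f_1f_2\otimes\{g_1,g_2\}$. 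Summing the three terms reproduces $[-,-]_H$ exactly.

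With these identifications in place, the proposition follows immediately from Proposition \ref{pro:action F-algebroid}. I do not expect a genuine obstacle: the only care needed is notational bookkeeping, namely tracking that $f\otimes g$ denotes the section $m\mapsto f(m)g$ of the trivial bundle $M\times\CWM$ and that all fibrewise operations are those of the Poisson algebra, so that the derivation (anchor) terms generated by $\huaL_{\rho(-)}$ match precisely the Hamiltonian terms appearing in $[-,-]_H$. If one prefers to stay within finite-rank bundles, the identical computation applies after replacing $\CWM$ by any finite-dimensional Poisson subalgebra, and the general statement then follows from the locality of $a_H$, $\cdot_H$ and $[-,-]_H$.
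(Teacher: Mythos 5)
Your proposal is correct and is exactly the route the paper intends: the Poisson algebra $(\CWM,\cdot,\{-,-\})$ is an $F$-manifold algebra, the Hamiltonian map $H$ is an action in the sense of the definition preceding Proposition \ref{pro:action F-algebroid}, and the structure maps $a_H$, $\cdot_H$, $[-,-]_H$ are precisely $a_\rho$, $\cdot_\rho$, $[-,-]_\rho$ for $\rho=H$, so the result is an instance of the action $F$-algebroid construction (the paper states this proposition without proof, but the introduction makes this identification explicit). Your closing remark about reducing to finite-dimensional Poisson subalgebras is unnecessary (and such subalgebras need not exist); the construction and the tensoriality argument of Proposition \ref{pro:HM relation-tensor} apply verbatim with fibre $\CWM$, as in the paper's own statement.
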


Let $A_1$ and $A_2$ be  vector bundles over $M_1$ and $M_2$ respectively. Denote the projections from $M_1\times M_2$ to $M_1$ and $M_2$ by $\pr_1$ and $ \pr_2$ respectively. The product vector bundle $A_1\times A_2\rightarrow M_1\times M_2$ can be regarded as the Whitney sum over $M_1\times M_2$ of the pullback vector bundles $\pr_1^!A_1$ and $\pr_2^!A_2$. Sections of $\pr_1^!A_1$ are of the form $\sum u_i\otimes X^1_i$, where $u_i\in C^\infty(M_1\times M_2)$ and $X^1_i\in\Gamma(A_1)$. Similarly, sections of $\pr_2^!A_2$ are of the form $\sum u'_i\otimes X^2_i$, where $u'_i\in C^\infty(M_1\times M_2)$ and $X^2_i\in\Gamma(A_2)$. The tangent bundle $T(M_1\times M_2)$ may in the same way be regarded as the Whitney sum $\pr_1^!(TM_1)\oplus \pr_2^!(TM_2)$. Let $(A_1,[-,-]_{A_1},a_{A_1})$ and $(A_2,[-,-]_{A_2},a_{A_2})$ be two Lie algebroids over the base manifolds $M_1$ and $M_2$ respectively. We define the anchor $\frka:A_1\times A_2\longrightarrow T(M_1\times M_2)$ by
\begin{equation*}\frka(\sum (u_i\otimes X^1_i)\oplus\sum (u'_j\otimes X^2_j) )=\sum (u_i\otimes a_{A_1}(X^1_i))\oplus\sum (u'_j\otimes a_{A_2}(X^2_j)).
\end{equation*}
And the Lie bracket on $A_1\times A_2$ is determined by the following relations with the Leibniz rule
\begin{eqnarray}
  \nonumber\Courant{1\otimes X^1,1\otimes Y^1}&=&1\otimes [X^1,Y^1]_{A_1},\quad \Courant{1\otimes X^1,1\otimes Y^2}=0,\\
  \nonumber\Courant{1\otimes X^2,1\otimes Y^2}&=&1\otimes [X^2,Y^2]_{A_2},\quad \Courant{1\otimes X^2,1\otimes Y^1}=0,
\end{eqnarray}
for $X^1,Y^1\in \Gamma(A_1),X^2,Y^2\in\Gamma(A_2)$. See \cite{General theory of Lie groupoid and Lie algebroid} for more details of the direct product Lie algebroids.

\begin{pro}
  Let $(A_1,[-,-]_{A_1},\cdot_{A_1},a_{A_1})$ and $(A_2,[-,-]_{A_2},\cdot_{A_2},a_{A_2})$ be two $F$-algebroids over the base manifolds $M_1$ and $M_2$ respectively. Then   $(A_1\times A_2,\Courant{-,-},\diamond,\frka)$ is  an $F$-algebroid over $M_1\times M_2$, where for
  $$X=\sum (u_i\otimes X^1_i)\oplus\sum (u'_j\otimes X^2_j),\quad Y=\sum (v_k\otimes Y^1_k)\oplus\sum (v'_l\otimes Y^2_l),$$
   the associative multiplication $\diamond$ is defined by
  \begin{eqnarray*}
    X\diamond Y=\sum (u_iv_k\otimes (X^1_i\cdot_{A_1}Y^1_k ))\oplus\sum (u'_jv'_l\otimes (X^2_j\cdot_{A_2}Y^2_l )).
  \end{eqnarray*}
\end{pro}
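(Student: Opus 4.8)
The plan is to verify the three defining conditions of an $F$-algebroid for $(A_1 \times A_2, \Courant{-,-}, \diamond, \frka)$ in turn, leaning heavily on the fact that $\diamond$ and $\Courant{-,-}$ "decouple" along the two factors. First I would record that $(A_1 \times A_2, \Courant{-,-}, \frka)$ is a Lie algebroid; this is standard (see \cite{General theory of Lie groupoid and Lie algebroid}) and is stated in the excerpt, so nothing is needed here. Next I would check that $(A_1 \times A_2, \diamond)$ is a commutative associative algebroid: $C^\infty(M_1 \times M_2)$-bilinearity is built into the definition of $\diamond$ on sections of the pullback bundles, and commutativity and associativity follow componentwise from those of $\cdot_{A_1}$ and $\cdot_{A_2}$, since $\diamond$ sends a pair of sections to the direct sum of their products within each factor and mixed products are not present.

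The substantive step is the Hertling--Manin relation \eqref{eq:HM relation} for $(\Gamma(A_1 \times A_2), \Courant{-,-}, \diamond)$. Here I would invoke Proposition \ref{pro:HM relation-tensor}: the failure of the Hertling--Manin relation is governed by the $(4,1)$-tensor field $\Phi$ of \eqref{eq:HM equation}, so it suffices to show $\Phi$ vanishes when evaluated on a spanning set of sections. By $C^\infty(M_1\times M_2)$-linearity of $\Phi$ in each argument, it is enough to evaluate on generators of the form $1 \otimes X^1$ (with $X^1 \in \Gamma(A_1)$) and $1 \otimes X^2$ (with $X^2 \in \Gamma(A_2)$). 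I would then argue that, on such generators, both $\diamond$ and $\Courant{-,-}$ restrict to the operations of $\Gamma(A_1)$ on the "first-factor" generators and to those of $\Gamma(A_2)$ on the "second-factor" generators, while any expression mixing a first-factor and a second-factor generator produces zero (the product $1\otimes X^1 \diamond 1\otimes X^2$ has no component, and the bracket $\Courant{1\otimes X^1, 1\otimes X^2} = 0$). Consequently the quantity $P$ of \eqref{eq:P}, and hence $\Phi$, evaluated on any four generators, splits as the corresponding expression in $\Gamma(A_1)$ plus the one in $\Gamma(A_2)$ (with all genuinely mixed terms vanishing); since $(\Gamma(A_1),[-,-]_{A_1},\cdot_{A_1})$ and $(\Gamma(A_2),[-,-]_{A_2},\cdot_{A_2})$ are $F$-manifold algebras, each summand is zero, so $\Phi \equiv 0$.

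The main obstacle — really the only place one must be careful rather than merely formal — is the bookkeeping in that last decoupling argument: one must check that when $P_{X\diamond Y}(Z,W)$ is expanded for generators lying in different factors, every cross term (e.g. a bracket of a first-factor section against a product involving a second-factor section) is identically zero, so that no spurious "mixed" contribution to $\Phi$ survives. This is genuinely a consequence of the two vanishing relations $\Courant{1\otimes X^1, 1\otimes Y^2} = 0$ and the fact that $\diamond$ has no mixed component, but it requires enumerating the cases $(Z,W)$ in the same factor versus in different factors, and $X,Y$ likewise. Once that is done, the $F$-manifold algebra identity in each factor finishes the proof, and the Leibniz rule together with Proposition \ref{pro:HM relation-tensor} extends the conclusion from generators to all of $\Gamma(A_1 \times A_2)$.
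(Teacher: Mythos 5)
Your proposal is correct. The paper itself offers nothing beyond ``It follows from straightforward verifications,'' and your plan supplies exactly the verification one would expect: the Lie algebroid and commutative associative algebroid structures are immediate, and the Hertling--Manin relation is reduced, via the tensoriality of $\Phi$ from Proposition \ref{pro:HM relation-tensor}, to generators $1\otimes X^1$ and $1\otimes X^2$, on which the operations restrict to those of $\Gamma(A_1)$ and $\Gamma(A_2)$ and all mixed contributions vanish because $\Courant{1\otimes X^1,1\otimes Y^2}=0$ and $(1\otimes X^1)\diamond(1\otimes Y^2)=0$. This is the same tensoriality device the paper uses for the action $F$-algebroid (Proposition \ref{pro:action F-algebroid}), so your route is fully in line with the paper's toolkit; the case bookkeeping you flag does go through, since in every configuration with generators from different factors each term of $\Phi$ contains either a mixed bracket or a mixed product and hence vanishes, while the same-factor configurations vanish by the $F$-manifold algebra identity in the corresponding factor.
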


\begin{proof}
  It follows from straightforward verifications.
\end{proof}

The $F$-algebroid $(A_1\times A_2,\Courant{-,-},\diamond,\frka)$ is called the direct product $F$-algebroid.

\section{Pre-Lie deformation quantization of commutative associative algebroids}\label{sec:deformation quantization}

In this section, we study pre-Lie formal deformations of a commutative associative algebroid, whose semi-classical limits are $F$-algebroids. Viewing the commutative associative algebroid $A$ as a pre-Lie algebroid, we show that the equivalence classes of pre-Lie infinitesimal deformations of a commutative associative algebroid $A$ are classified by the second cohomology group in the deformation cohomology of $A$ and a pre-Lie $n$-deformation can be extended to a pre-Lie $(n+1)$-deformation if and only if its obstruction class in the third cohomology group of $A$ is trivial.

\begin{defi}
A {\bf pre-Lie algebra} is a pair $(\frkg,\ast)$, where $\g$ is a vector space  and  $\ast:\g\otimes \g\longrightarrow\g$ is a bilinear multiplication
satisfying that for all $x,y,z\in \g$, the associator
\begin{equation}\label{eq:associator}
(x,y,z)\triangleq x\ast(y\ast z)-(x\ast y)\ast z
\end{equation} is symmetric in $x,y$,
i.e.
$$(x,y,z)=(y,x,z),\;\;{\rm or}\;\;{\rm
equivalently,}\;\;x\ast(y\ast z)-(x\ast y)\ast z=y\ast(x\ast z)-(y\ast x)\ast
z.$$
\end{defi}

\begin{defi}{\rm(\cite{LSBC,Boyom1})}\label{defi:left-symmetric algebroid}
A {\bf pre-Lie algebroid} structure on a vector bundle
$A\longrightarrow M$ is a pair that consists of a pre-Lie
algebra structure $\ast_A$ on the section space $\Gamma(A)$ and a
vector bundle morphism $a_A:A\longrightarrow TM$, called the anchor,
such that for all $f\in\CWM$ and $X,Y\in\Gamma(A)$, the following
conditions are satisfied:
\begin{itemize}
\item[\rm(i)]$~X\ast_A(fY)=f(X\ast_A Y)+a_A(X)(f)Y$,
\item[\rm(ii)] $(fX)\ast_A Y=f(X\ast_A Y).$
\end{itemize}
\end{defi}

We usually denote a pre-Lie algebroid by $(A,\ast_A, a_A)$.
Any pre-Lie  algebra is a pre-Lie algebroid over a point.

  A connection $\nabla$ on a manifold $M$ is said to be {\bf flat} if the torsion and the curvature of the connection $\nabla$ vanish identically. A manifold $M$ endowed with a flat connection $\nabla$ is called a {\bf flat manifold}.

 \begin{ex}\label{ex:main}{\rm Let $M$
be a manifold with a flat connection
$\nabla$. Then $(TM,\nabla,\Id)$ is a pre-Lie algebroid whose sub-adjacent Lie algebroid is
exactly the tangent Lie algebroid. We denote this pre-Lie algebroid by $T_\nabla M$.
}
\end{ex}

\begin{pro}{\rm(\cite{LSBC})}\label{thm:sub-adjacent}
  Let $(A,\ast_A, a_A)$ be a pre-Lie algebroid. Define  a skew-symmetric bilinear bracket operation $[-,-]_A$ on $\Gamma(A)$ by
 \begin{equation}\label{eq:subadjacent bracket}
  [X,Y]_A=X\ast_A Y-Y\ast_A X,\quad \forall ~X,Y\in\Gamma(A).	
 \end{equation}
Then $(A,[-,-]_A,a_A)$ is a Lie algebroid, and denoted by
$A^c$, called the {\bf sub-adjacent Lie algebroid} of
 $(A,\ast_A,a_A)$.
\end{pro}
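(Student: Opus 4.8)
The plan is to verify the three defining axioms of a pre-Lie algebroid (the pre-Lie identity on sections plus the two Leibniz-type conditions (i) and (ii)) directly from the definition of the bracket $[X,Y]_A = X\ast_A Y - Y\ast_A X$, and then deduce the Jacobi identity and the Lie algebroid Leibniz rule. First I would check the anchor compatibility: one expects $a_A$ applied to the new bracket to land in $\frkX(M)$ appropriately, which follows once we know the bracket is a genuine Lie bracket. The computation splits into two largely mechanical halves.

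First I would establish the Lie algebroid Leibniz rule $[X,fY]_A = f[X,Y]_A + a_A(X)(f)Y$. Expanding the left side using \eqref{eq:subadjacent bracket}, we get $X\ast_A(fY) - (fY)\ast_A X$; applying condition (i) to the first term and condition (ii) to the second term gives $f(X\ast_A Y) + a_A(X)(f)Y - f(Y\ast_A X) = f[X,Y]_A + a_A(X)(f)Y$, as desired. This also forces the anchor to be a homomorphism $a_A([X,Y]_A) = [a_A(X),a_A(Y)]_{\frkX(M)}$ once the Jacobi identity is in hand — in fact this is typically extracted as a consequence of the Jacobi identity combined with the Leibniz rule (the standard argument: evaluate the Jacobiator on $X,Y,fZ$ and compare coefficients of $Z$).

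Next I would verify the Jacobi identity for $[-,-]_A$. Writing $J(X,Y,Z) = [[X,Y]_A,Z]_A + \text{c.p.}$ and fully expanding in terms of $\ast_A$, every term is a triple product; after collecting, the vanishing of $J$ is exactly equivalent to the left-symmetry of the associator $(x,y,z) - (y,x,z) = 0$. Concretely, $J(X,Y,Z)$ reduces to a sum of associators $\bigl((X,Y,Z)-(Y,X,Z)\bigr) + \bigl((Y,Z,X)-(Z,Y,X)\bigr) + \bigl((Z,X,Y)-(X,Z,Y)\bigr)$, each of which vanishes by \eqref{eq:associator}. I do not expect a genuine obstacle here — this is the classical fact that a pre-Lie algebra becomes a Lie algebra under the commutator — so the only real care needed is the bookkeeping of the six triple-product terms and confirming the anchor homomorphism property, which is the step most easily overlooked. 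The proof is therefore essentially a transcription of the pre-Lie algebra case with the two extra Leibniz conditions feeding exactly the vector-bundle refinements, and I would present it as such, citing \cite{LSBC} for the algebra-level statement.
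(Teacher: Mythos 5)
Your proposal is correct: the Leibniz rule follows exactly as you compute from conditions (i) and (ii), the Jacobiator is (up to an overall sign in your grouping, which is immaterial) a sum of differences of associators that vanish by left-symmetry, and the anchor-morphism property is indeed automatic from Jacobi plus Leibniz, so nothing is missing. The paper itself gives no proof — it states the proposition as a result quoted from \cite{LSBC} — and your argument is precisely the standard one that reference uses.
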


\begin{defi}
Let $E$ be a vector bundle over $M$. A {\bf multiderivation} of degree $n$ on $E$
is a pair $(D,\sigma_D)$, where  $D\in\Hom(\Lambda^{n-1}\Gamma(E)\otimes
\Gamma(E),\Gamma(E))$ and $\sigma_D\in \Gamma(\Hom(\Lambda^{n-1}E,TM))$, such that for all $f\in C^\infty(M)$ and sections
$X_i\in \Gamma(E)$, the following conditions are satisfied:
 \begin{eqnarray*}
D(X_1,\cdots,fX_i,\cdots,X_{n-1},X_{n})&=&f D(X_1,\cdots,X_i,\cdots,X_{n-1},X_{n}),\quad i=1,\cdots,n-1;\\
D(X_1,\cdots,X_{n-1},fX_{n})&=&f D(X_1,\cdots,X_{n-1},X_{n})+\sigma_D(X_1,\cdots,X_{n-1})(f)X_{n}.
\end{eqnarray*}
We will denote by $\Der^n(E)$ the space of multiderivations
of degree $n,~n\geq 1$.
\end{defi}

 Let $(A,\ast_A,a_A)$ be a pre-Lie algebroid. Recall that the deformation complex of $A$ is a cochain
complex $(\huaC_{\rm def}^{*}(A,A)=\bigoplus _{n\geq 0}\Der^n(A),\dM_{\rm def})$, where for all $X_i\in \Gamma(A),i=1,2\cdots,n+1$, the coboundary operator $\dM_{\rm def}:\Der^{n}(A)\longrightarrow \Der^{n+1}(A)$ is given
 by
 \begin{eqnarray*}
\dM_{\rm def}\omega(X_1,\cdots,X_{n+1})
 &=&\sum_{i=1}^{n}(-1)^{i+1}X_i\ast_A\omega(X_1,\cdots,\hat{X_i},\cdots,X_{n+1})\\
 &&+\sum_{i=1}^{n}(-1)^{i+1}\omega(X_1,\cdots,\hat{X_i},\cdots,X_n,X_i)\ast_A X_{n+1}\\
 &&-\sum_{i=1}^{n}(-1)^{i+1}\omega(X_1,\cdots,\hat{X_i},\cdots,X_n,X_i\ast_A X_{n+1})\\
 &&+\sum_{1\leq i<j\leq {n}}(-1)^{i+j}\omega([X_i,X_j]_A,X_1,\cdots,\hat{X_i},\cdots,\hat{X_j},\cdots,X_{n+1}),
\end{eqnarray*}
in which
$\sigma_{\dM_{\rm def}\omega}$
is given by
 \begin{eqnarray}
\nonumber\sigma_{\dM_{\rm def}\omega}(X_1,\cdots,X_n)
&=& \sum_{i=1}^{n}(-1)^{i+1}[a_A(X_i),\sigma_{\omega}(X_1,\cdots,\hat{X_i},\cdots,X_{n})]_{\frkX(M)}\\
\nonumber&&+\sum_{1\leq i<j\leq n}(-1)^{i+j}\sigma_{\omega}([X_i,X_j]_A,X_1,\cdots,\hat{X_i},\cdots,\hat{X_j},\cdots,X_n)\\
&&+\sum_{i=1}^{n}(-1)^{i+1}a_A(\omega(X_1,\cdots,\hat{X_i},\cdots,X_n,X_i)).\label{eq:simbol}
\end{eqnarray}
The corresponding cohomology, which we denote by $\huaH_{\rm def}^\bullet(A,A)$, is called the {\bf deformation cohomology} of the pre-Lie algebroid.

Since any commutative pre-Lie algebra is a commutative associative algebra, we have the following conclusion obviously.
\begin{lem}
 Any commutative pre-Lie algebroid is a commutative associative algebroid.
\end{lem}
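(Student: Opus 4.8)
The plan is to unwind the definitions and reduce the statement to the purely algebraic fact that any commutative pre-Lie algebra is commutative associative, applied fiberwise to the section space. Let $(A,\ast_A,a_A)$ be a commutative pre-Lie algebroid, meaning $X\ast_A Y=Y\ast_A X$ for all $X,Y\in\Gamma(A)$. First I would observe that commutativity forces the anchor to vanish: from Definition~\ref{defi:left-symmetric algebroid}(i) we have $X\ast_A(fY)=f(X\ast_A Y)+a_A(X)(f)Y$, while from (ii) together with commutativity we have $X\ast_A(fY)=(fY)\ast_A X=f(Y\ast_A X)=f(X\ast_A Y)$; comparing the two expressions gives $a_A(X)(f)Y=0$ for all $f\in\CWM$ and $X,Y\in\Gamma(A)$, hence $a_A=0$. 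In particular condition (i) now reads $X\ast_A(fY)=f(X\ast_A Y)$, so together with (ii) the multiplication $\ast_A$ is $\CWM$-bilinear.

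Next I would verify associativity of $\ast_A$ directly from the pre-Lie identity. The defining relation of a pre-Lie algebra says the associator $(X,Y,Z)=X\ast_A(Y\ast_A Z)-(X\ast_A Y)\ast_A Z$ is symmetric in its first two arguments. Using commutativity of $\ast_A$, rewrite $(X,Y,Z)=X\ast_A(Y\ast_A Z)-Z\ast_A(X\ast_A Y)$; symmetry in $X,Y$ gives $X\ast_A(Y\ast_A Z)-Z\ast_A(X\ast_A Y)=Y\ast_A(X\ast_A Z)-Z\ast_A(Y\ast_A X)=Y\ast_A(X\ast_A Z)-Z\ast_A(X\ast_A Y)$, and cancelling the common term $Z\ast_A(X\ast_A Y)$ yields $X\ast_A(Y\ast_A Z)=Y\ast_A(X\ast_A Z)$. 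Applying commutativity once more turns the right-hand side into $Y\ast_A(Z\ast_A X)=(Z\ast_A X)\ast_A Y$; relabelling, this is exactly the statement $(X\ast_A Y)\ast_A Z=X\ast_A(Y\ast_A Z)$, i.e. $\ast_A$ is associative. (This is precisely the cited elementary fact "any commutative pre-Lie algebra is a commutative associative algebra," now promoted to the algebroid setting.)

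Finally I would assemble the pieces: $\cdot_A:=\ast_A$ is a $\CWM$-bilinear, commutative, associative multiplication on $\Gamma(A)$, which is exactly the data of a commutative associative algebroid in the sense of the definition given in the excerpt; the vector bundle $A$ over $M$ is unchanged. I do not anticipate a real obstacle here — the content is genuinely a one-line consequence of the corresponding algebra fact — so the only care needed is the bookkeeping that the $\CWM$-linearity hypotheses (i) and (ii) of a pre-Lie algebroid, under the commutativity assumption, really do degenerate to plain $\CWM$-bilinearity with zero anchor, which is what makes the fiberwise/sectionwise application of the algebra statement legitimate. If one wished to be even terser, one could simply say: commutativity kills the anchor and the Leibniz defect, so a commutative pre-Lie algebroid is just a bundle of commutative pre-Lie algebras on sections, and the algebra-level lemma finishes it.
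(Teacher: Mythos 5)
Your proposal is correct and follows essentially the same route as the paper: reduce to the algebraic fact that a commutative pre-Lie algebra is commutative associative (which the paper invokes without proof and you verify directly from the symmetry of the associator), and note that commutativity together with the Leibniz conditions forces $a_A=0$, so $\ast_A$ becomes $\CWM$-bilinear — exactly the paper's accompanying remark. The only cosmetic point is that your final ``relabelling'' step tacitly uses commutativity once more (swapping $Y$ and $Z$), which you may wish to spell out.
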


Note that in a commutative pre-Lie algebroid, the anchor must be zero.

\begin{defi}
  Let $(A,\cdot_A)$ be a commutative associative algebroid. A {\bf pre-Lie formal deformation} of $A$ is a sequence of pairs $(\mu_k,\sigma_{\mu_k})\in\Der^2(A)$ with $\mu_0$ being the commutative associative algebroid multiplication $\cdot_A$ on $\Gamma(A)$ and $\sigma_{\mu_0}=0$ such that the $\Real[[\hbar]]$-bilinear product $\cdot_\hbar$ on $\Gamma(A)[[\hbar]]$ and $\Real[[\hbar]]$-linear map $\frka_\hbar:A\otimes\Real[[\hbar]]\rightarrow TM\otimes\Real[[\hbar]]$ determined by
\begin{eqnarray}
X\cdot_\hbar Y&=&\sum_{k=0}^\infty\hbar^k\mu_k(X,Y),\\
\frka_\hbar(X)&=&\sum_{k=0}^\infty\hbar^k \sigma_{\mu_k}(X),\quad\forall~X,Y\in \Gamma(A)
\end{eqnarray}
is a pre-Lie algebroid.
\end{defi}
It is straightforward to check that the rule of pre-Lie algebra product $\cdot_\hbar$ on $\Gamma(A)[[\hbar]]$ is equivalent to
  \begin{equation}\label{eq:pre-Lie rule}
  \sum_{i+j=k}\big(\mu_i(\mu_j(X,Y),Z)-\mu_i(X,\mu_j(Y,Z))\big)=\sum_{i+j=k}\big(\mu_i(\mu_j(Y,X),Z)-\mu_i(Y,\mu_j(X,Z))\big)
  \end{equation}
for $k\geq 0$.

\begin{thm}\label{thm:deformation quantization}
Let $(A,\cdot_A)$ be a commutative associative algebroid and $(A\otimes \Real[[\hbar]],\cdot_\hbar,\frka_\hbar)$ a pre-Lie formal deformation of $A$. Define a bracket $[-,-]_A:\Gamma(A)\times \Gamma(A)\longrightarrow \Gamma(A)$ by
$$[X,Y]_A=\mu_1(X,Y)-\mu_1(Y,X),\quad\forall~X,Y\in \Gamma(A).$$
Then $(A,[-,-]_A,\cdot_A,\sigma_{\mu_1})$ is an $F$-algebroid. The $F$-algebroid $(A,[-,-]_A,\cdot_A,\sigma_{\mu_1})$ is called the {\bf semi-classical limit} of $(A\otimes \Real[[\hbar]],\cdot_\hbar,\frka_\hbar)$. The pre-Lie algebroid $(A\otimes \Real[[\hbar]],\cdot_\hbar,\frka_\hbar)$ is called a {\bf pre-Lie deformation quantization} of $(A,\cdot_A)$.
\end{thm}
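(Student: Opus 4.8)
The plan is to verify the three defining properties of an $F$-algebroid for the triple $(A,[-,-]_A,\cdot_A,\sigma_{\mu_1})$: that $(A,[-,-]_A,\sigma_{\mu_1})$ is a Lie algebroid, that $(A,\cdot_A)$ is a commutative associative algebroid, and that $(\Gamma(A),[-,-]_A,\cdot_A)$ is an $F$-manifold algebra. The second point is given by hypothesis, so only the first and third require work, and both will be extracted from the pre-Lie rule \eqref{eq:pre-Lie rule} by expanding it in low orders of $\hbar$ together with the Leibniz-type constraints built into the definition of $\Der^2(A)$.

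First I would establish the Lie algebroid structure. The order-$0$ identity in \eqref{eq:pre-Lie rule} just restates that $\mu_0=\cdot_A$ is associative and commutative. Writing out the order-$1$ identity and antisymmetrizing appropriately, the left-symmetry condition at first order becomes precisely the statement that $P_X(Y,Z)$ (built from $[-,-]_A$ and $\cdot_A$) is symmetric in $Y,Z$ — more precisely one shows $\mu_1$ is a Hochschild-type $2$-cocycle whose antisymmetrization $[-,-]_A$ is a derivation of $\cdot_A$ in an appropriate sense and satisfies the Jacobi identity; this is the algebroid analogue of the classical fact that the semi-classical limit of an associative deformation is a Poisson bracket. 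The anchor $\sigma_{\mu_1}$ is a genuine bundle map because $(\mu_1,\sigma_{\mu_1})\in\Der^2(A)$, and the Leibniz rule $[X,fY]_A=f[X,Y]_A+\sigma_{\mu_1}(X)(f)Y$ follows directly from the two defining identities of a degree-$2$ multiderivation applied to $\mu_1$ (the symmetric part $\mu_0=\cdot_A$ being $C^\infty(M)$-bilinear contributes nothing to the anchor). The compatibility $\sigma_{[X,Y]_A}=[a_A(X),a_A(Y)]$ — here $a_A = \sigma_{\mu_1}$ — comes from the symbol part of the order-$1$ pre-Lie rule, using \eqref{eq:simbol}; equivalently one checks that $\dM_{\rm def}\mu_1=0$ at the relevant level encodes exactly the Lie algebroid axioms on the antisymmetrization.

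Next, for the $F$-manifold algebra axiom, the cleanest route is to invoke Proposition \ref{pro:HM relation-tensor}: since $(A,[-,-]_A,\sigma_{\mu_1})$ is now known to be a Lie algebroid and $\cdot_A$ is $C^\infty(M)$-bilinear, commutative and associative, the obstruction $\Phi(X,Y,Z,W)$ is a $C^\infty(M)$-multilinear tensor. Therefore it suffices to verify the Hertling-Manin relation \eqref{eq:HM relation} pointwise / on a local generating set of sections, which reduces it to an algebraic identity in the fibre. That algebraic identity is then precisely the one proved in \cite{LSB} for the semi-classical limit of a pre-Lie formal deformation of a commutative associative algebra: expanding \eqref{eq:pre-Lie rule} to order $2$ and combining with the order-$1$ relation yields, after antisymmetrization, exactly $P_{X\cdot_A Y}(Z,W)=X\cdot_A P_Y(Z,W)+Y\cdot_A P_X(Z,W)$. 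I would either reproduce that computation or cite it, noting that the $C^\infty(M)$-bilinearity secured by Proposition \ref{pro:HM relation-tensor} is exactly what lets the algebra-level result be transported to the algebroid setting.

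The main obstacle I anticipate is bookkeeping rather than conceptual: carefully tracking the order-$2$ terms of the pre-Lie rule and recombining $\mu_2$-contributions so that they cancel in the antisymmetrization $[-,-]_A$, leaving only $\mu_0$ and $\mu_1$ terms in the final Hertling-Manin identity. The potential subtlety is making sure the anchor terms (the $\sigma_{\mu_k}$ pieces) are handled consistently — in particular that $\sigma_{\mu_2}$ and higher do not leak into the degree-$(4,1)$ tensor $\Phi$; this is where appealing to Proposition \ref{pro:HM relation-tensor} to pass to a pointwise computation is genuinely helpful, since in the fibre over a point all anchor terms drop out and one is left with the purely algebraic identity already handled in \cite{LSB}. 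Everything else is a routine unwinding of definitions.
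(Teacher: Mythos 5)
Your overall skeleton is close to the paper's, but two steps as written do not hold up. First, the Jacobi identity for $[X,Y]_A=\mu_1(X,Y)-\mu_1(Y,X)$ does \emph{not} follow from the order-$1$ identity: if, say, $\cdot_A\equiv 0$, the $k=1$ relation in \eqref{eq:pre-Lie rule} is vacuous, while the antisymmetrization of an arbitrary $\mu_1$ need not satisfy Jacobi, so ``$\mu_1$ is a Hochschild-type $2$-cocycle'' is not enough. You need the $k=2$ relation, or, much more cheaply, the paper's argument: the commutator $[X,Y]_\hbar=X\cdot_\hbar Y-Y\cdot_\hbar X$ is a Lie algebroid bracket on $\Gamma(A)[[\hbar]]$ because $\cdot_\hbar$ is a pre-Lie algebroid structure; since the $\hbar^0$-part of $[-,-]_\hbar$ vanishes by commutativity of $\mu_0$, the $\hbar^2$-coefficient of its Jacobiator is exactly the Jacobiator of $[-,-]_A$, and the $\hbar$-coefficient of its Leibniz rule gives $[X,fY]_A=f[X,Y]_A+\sigma_{\mu_1}(X)(f)Y$ (which you also obtain, as you say, from the multiderivation axioms). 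The anchor-morphism property is then automatic for a Lie algebroid; in any case it is an order-$2$, not an order-$1$, statement (compare \eqref{eq:3-cocycle symbol} with $i=j=1$), so attributing it to the symbol part of the order-$1$ rule is off.

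Second, the reduction of the Hertling--Manin relation to a ``fibre'' computation is not legitimate: $\mu_1$ and $[-,-]_A$ are not $C^\infty(M)$-linear, so they induce no structure on a single fibre, and $\Phi(X,Y,Z,W)(p)$ is not computed from the values $X(p),\dots,W(p)$ alone. Tensoriality of $\Phi$ (Proposition \ref{pro:HM relation-tensor}) only tells you it suffices to evaluate $\Phi$ on arbitrary (or frame) sections; it does not make the anchor terms ``drop out'' pointwise, so there is no fibrewise algebra to which the result of \cite{LSB} could be applied. No such reduction is needed: the $k=1$ identity in \eqref{eq:pre-Lie rule} is an identity of $\Real$-multilinear maps on the commutative associative $\Real$-algebra $\Gamma(A)$, and the Hertling-type computation of \cite{Her02,LSB} is purely formal, using only commutativity and associativity of $\cdot_A$ together with that identity; applied verbatim to $\Gamma(A)$ it yields \eqref{eq:HM relation}. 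This is exactly what the paper does, and note that only the $k=1$ relation is used there --- no order-$2$ bookkeeping is required for the Hertling--Manin part, and no $\sigma_{\mu_k}$ enters it at all.
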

\begin{proof}
  Define the bracket $[-,-]_\hbar$ on $\Gamma(A)[[\hbar]]$ by
$$[X,Y]_\hbar=X\cdot_\hbar Y-Y\cdot_\hbar X=\hbar[X,Y]_A+\hbar^2(\mu_2(X,Y)-\mu_2(Y,X))+\cdots, \quad \forall~X,Y\in \Gamma(A).$$
By the fact that $(A\otimes \Real[[\hbar]],\cdot_\hbar,\frka_\hbar)$ is a pre-Lie algebroid, $(A[[\hbar]],[-,-]_\hbar,\frka_\hbar)$ is a Lie algebroid. The $\hbar^2$-terms of the Jacobi identity for $[-,-]_\hbar$ gives the Jacobi identity for $[-,-]_A$ and $\hbar$-terms of $[X,fY]_\hbar=f[X,Y]_\hbar+\frka_\hbar(X)(f) Y$ gives
 $$[X,fY]_A=f[X,Y]_A+\sigma_{\mu_1}(X)(f)Y.$$
  Thus $(A,[-,-]_A,\sigma_{\mu_1})$ is a Lie algebroid.

For $k=1$ in \eqref{eq:pre-Lie rule}, by the commutativity of $\mu_0$, we have
\begin{eqnarray*}
  &&\mu_0(\mu_1(X,Y),Z)-\mu_0(X,\mu_1(Y,Z))-\mu_1(X,\mu_0(Y,Z))\\
 & =&\mu_0(\mu_1(Y,X),Z)-\mu_0(Y,\mu_1(X,Z))-\mu_1(Y,\mu_0(X,Z)).
\end{eqnarray*}
By a similar proof given by Hertling in \cite{Her02}, we can show that the Hertling-Manin relation holds with $X\cdot_A  Y=\mu_0(X,Y)$ and $[X,Y]_A=\mu_1(X,Y)-\mu_1(Y,X)$ for $X,Y\in\Gamma(A)$.
Thus $(A,[-,-]_A,\cdot_A,\sigma_{\mu_1})$ is an $F$-algebroid.
\end{proof}

In the sequel, we study pre-Lie  $n$-deformations and pre-Lie  infinitesimal deformations of commutative associative algebroids.
\begin{defi}
  Let $(A,\cdot_A)$ be a commutative associative algebroid. A {\bf pre-Lie $n$-deformation} of $A$ is a sequence of pairs $(\mu_k,\sigma_{\mu_k})\in\Der^2(A)$ for $0\leq k\leq n$ with $\mu_0$ being the commutative associative algebroid multiplication $\cdot_A$ on $\Gamma(A)$ and $\sigma_{\mu_0}=0$, such that the $\Real[[\hbar]]/(\hbar^{n+1})$-bilinear product $\cdot_\hbar$ on $\Gamma(A)[[\hbar]]/(\hbar^{n+1})$ and $\Real[[\hbar]]/(\hbar^{n+1})$-linear map $\frka_\hbar:A\otimes\Real[[\hbar]]\rightarrow TM\otimes\Real[[\hbar]]$ determined by
\begin{eqnarray}
X\cdot_\hbar Y&=&\sum_{k=0}^n\hbar^k\mu_k(X,Y),\\
\frka_\hbar(X)&=&\sum_{k=0}^n\hbar^k \sigma_{\mu_k}(X),\quad\forall~X,Y\in \Gamma(A)
\end{eqnarray}
is a pre-Lie algebroid.
\end{defi}

We call a pre-Lie $1$-deformation of a commutative associative algebroid $(A,\cdot_A)$ a {\bf pre-Lie infinitesimal deformation} and denote it by $(A,\mu_1,a_A=\sigma_{\mu_1})$.

By direct calculations, $(A,\mu_1,\sigma_{\mu_1})$ is a pre-Lie infinitesimal deformation of a commutative associative algebroid $(A,\cdot_A)$  if and only if for all $X,Y,Z\in \Gamma(A)$
 \begin{eqnarray}
 \label{2-closed} &&\mu_1(X,Y)\cdot_A Z-X\cdot_A \mu_1(Y,Z)-\mu_1(X,Y\cdot_A Z)=\mu_1(Y,X)\cdot_A Z-Y\cdot_A \mu_1(X,Z)-\mu_1(Y,X\cdot_A Z).
\end{eqnarray}
Equation $(\ref{2-closed})$ means that  $\mu_1$ is a $2$-cocycle, i.e. $\dM_{{\rm def}}\mu_1=0$.

Two pre-Lie infinitesimal deformations  $A_\hbar=(A,\mu_1,\sigma_{\mu_1})$ and
$A'_\hbar=(A,\mu'_1,\sigma_{\mu'_1})$  of a commutative associative algebroid
$(A,\cdot_A)$ are said to be {\bf equivalent} if there exist
a family of pre-Lie algebroid homomorphisms
${\Id}+\hbar\varphi:A_\hbar\longrightarrow A'_\hbar$ modulo
$\hbar^2$ for $\varphi\in \Der^1(A)$. A pre-Lie infinitesimal deformation is said to be {\bf
trivial} if there exist  a family of pre-Lie algebroid
homomorphisms ${\Id}+\hbar\varphi:A_\hbar\longrightarrow
(A,\cdot_A,a_A=0)$ modulo $\hbar^2$.

By direct calculations, $A_\hbar$ and $A'_\hbar$  are
equivalent pre-Lie  infinitesimal deformations if and only if
\begin{eqnarray}
\sigma_{\mu_1}&=& \sigma_{\mu'_1},\label{eq:anchor}\\
\mu_1(X,Y)-\mu_1'(X,Y)&=&X\cdot_A \varphi(Y)+\varphi(X)\cdot_A Y-\varphi(X\cdot_A Y).\label{eq:2-exact}
\end{eqnarray}
Equation $(\ref{eq:2-exact})$ means that $\mu_1-\mu_1'=\dM_{\rm def}\varphi$ and \eqref{eq:anchor} can be obtained by \eqref{eq:2-exact}. Thus we have

\begin{thm}
 Let $(A,\cdot_A)$ be a commutative associative algebroid. There is a one-to-one correspondence between the space of equivalence classes of pre-Lie  infinitesimal deformations of $A$ and the second cohomology group $\huaH_{\rm def}^2(A,A)$.
\end{thm}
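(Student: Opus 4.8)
The plan is to turn the theorem into a direct corollary of the two computations recorded just above: the characterisation \eqref{2-closed} of pre-Lie infinitesimal deformations as $2$-cocycles, and the characterisation \eqref{eq:2-exact} of the equivalence relation in terms of $2$-coboundaries.

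First I would note that, unwinding definitions, a pre-Lie infinitesimal deformation $(A,\mu_1,\sigma_{\mu_1})$ of $(A,\cdot_A)$ is nothing but an element $\mu_1$ of $\Der^2(A)$, the pair $(\mu_1,\sigma_{\mu_1})$ being precisely a degree-$2$ cochain of the deformation complex of $A$ viewed as a pre-Lie algebroid; and that by \eqref{2-closed} such a pair is a pre-Lie infinitesimal deformation exactly when $\dM_{\rm def}\mu_1=0$. Hence pre-Lie infinitesimal deformations of $A$ are in bijection with $2$-cocycles $\mu_1\in\Der^2(A)$.

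Next I would define $\Psi$ on equivalence classes by $\Psi[(A,\mu_1,\sigma_{\mu_1})]=[\mu_1]\in\huaH_{\rm def}^2(A,A)$ and check it is a bijection. Well-definedness and injectivity are the same statement read in two directions: by \eqref{eq:2-exact}, two infinitesimal deformations $(A,\mu_1,\sigma_{\mu_1})$ and $(A,\mu_1',\sigma_{\mu_1'})$ are equivalent if and only if $\mu_1-\mu_1'=\dM_{\rm def}\varphi$ for some $\varphi\in\Der^1(A)$ --- for the ``only if'' one reads off \eqref{eq:2-exact} from the $\hbar^1$-term of the product-compatibility of $\Id+\hbar\varphi$, while for the ``if'' one takes $\Id+\hbar\varphi$ as the equivalence, its anchor-compatibility \eqref{eq:anchor} being forced by \eqref{eq:2-exact} upon passing to symbols; note also that $\Id+\hbar\varphi$ is invertible modulo $\hbar^2$ with inverse $\Id-\hbar\varphi$, so this is genuinely an equivalence relation matching the subgroup $B_{\rm def}^2(A,A)\subseteq Z_{\rm def}^2(A,A)$. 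Surjectivity is immediate from the previous paragraph: a representative cocycle of any class in $\huaH_{\rm def}^2(A,A)$, together with its symbol, is a pre-Lie infinitesimal deformation mapping to that class.

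I do not expect a genuine obstacle: once \eqref{2-closed} and \eqref{eq:2-exact} are in hand the result is pure bookkeeping, and the write-up should be only a few lines. The one point to be slightly careful about is that ``pre-Lie infinitesimal deformation'', as packaged in this paper, carries exactly the data of a degree-$2$ cochain of the deformation complex --- no more, no less --- so that the cocycle/coboundary dictionary transfers verbatim; this is the point on which I would spend a sentence.
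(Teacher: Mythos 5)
Your proposal is correct and follows exactly the paper's route: the paper derives \eqref{2-closed} (deformation $\Leftrightarrow$ $2$-cocycle) and \eqref{eq:2-exact} with \eqref{eq:anchor} (equivalence $\Leftrightarrow$ difference is $\dM_{\rm def}\varphi$, the anchor condition being implied), and then records the theorem as an immediate consequence, which is precisely your bookkeeping argument. Your extra remarks on well-definedness, invertibility of $\Id+\hbar\varphi$, and surjectivity via choosing a representative cocycle with its symbol are the same content made explicit.
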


It is routine to check that

\begin{pro}
  Let $(A,\cdot_A)$ be a commutative associative algebroid such that $$\huaH_{\rm def}^2(A,A)=0.$$ Then all pre-Lie  infinitesimal deformations of $A$ are trivial.
\end{pro}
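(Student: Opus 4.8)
The plan is to reduce this to the classification theorem just proved. By the preceding Theorem, the equivalence classes of pre-Lie infinitesimal deformations of $(A,\cdot_A)$ are in one-to-one correspondence with $\huaH_{\rm def}^2(A,A)$. So I would first recall that correspondence explicitly: an infinitesimal deformation $A_\hbar=(A,\mu_1,\sigma_{\mu_1})$ is closed, meaning $\dM_{\rm def}\mu_1=0$ by \eqref{2-closed}, and two such deformations are equivalent precisely when their difference is exact, $\mu_1-\mu_1'=\dM_{\rm def}\varphi$ together with the anchor condition \eqref{eq:anchor}, which by the remark after \eqref{eq:2-exact} is a consequence of \eqref{eq:2-exact} itself.

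Next I would use the hypothesis $\huaH_{\rm def}^2(A,A)=0$. Given any pre-Lie infinitesimal deformation $A_\hbar=(A,\mu_1,\sigma_{\mu_1})$, the $2$-cocycle $\mu_1$ is then a coboundary: there exists $\varphi\in\Der^1(A)$ with $\mu_1=\dM_{\rm def}\varphi$. The point is that the zero deformation $(A,\cdot_A,a_A=0)$, which corresponds to $\mu_1'=0$ (and $\sigma_{\mu_1'}=0$), is itself a valid pre-Lie infinitesimal deformation, since $\mu_0=\cdot_A$ is commutative associative so \eqref{2-closed} holds trivially for $\mu_1'=0$. Comparing $A_\hbar$ with this trivial deformation, equations \eqref{eq:2-exact} and \eqref{eq:anchor} are satisfied with this $\varphi$: indeed $\mu_1-0=\dM_{\rm def}\varphi$ gives \eqref{eq:2-exact}, and then \eqref{eq:anchor}, $\sigma_{\mu_1}=\sigma_{\mu_1'}=0$, follows automatically because, as noted, it is obtained from \eqref{eq:2-exact}. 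Hence $\Id+\hbar\varphi:A_\hbar\longrightarrow(A,\cdot_A,a_A=0)$ is a family of pre-Lie algebroid homomorphisms modulo $\hbar^2$, which is exactly the definition of $A_\hbar$ being trivial.

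I do not anticipate a serious obstacle: this is essentially a corollary of the classification theorem, and the only thing to be careful about is the bookkeeping of the symbol maps $\sigma_{\mu_1}$ — one must confirm that $\sigma_{\mu_1}=0$ genuinely follows once $\mu_1$ is a coboundary, which is precisely the content of the sentence ``\eqref{eq:anchor} can be obtained by \eqref{eq:2-exact}'' in the excerpt. So the proof is just: invoke $\huaH_{\rm def}^2(A,A)=0$ to write $\mu_1=\dM_{\rm def}\varphi$, and observe that this exhibits the equivalence with the trivial deformation. I would write it in one or two sentences.

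\begin{proof}
  Let $A_\hbar=(A,\mu_1,\sigma_{\mu_1})$ be a pre-Lie infinitesimal deformation of $(A,\cdot_A)$. By \eqref{2-closed}, $\mu_1$ is a $2$-cocycle, so $\dM_{\rm def}\mu_1=0$. Since $\huaH_{\rm def}^2(A,A)=0$, there exists $\varphi\in\Der^1(A)$ such that $\mu_1=\dM_{\rm def}\varphi$. Then \eqref{eq:2-exact} holds with $\mu_1'=0$, and consequently \eqref{eq:anchor} holds as well, giving $\sigma_{\mu_1}=0$. Therefore $\Id+\hbar\varphi:A_\hbar\longrightarrow(A,\cdot_A,a_A=0)$ is a family of pre-Lie algebroid homomorphisms modulo $\hbar^2$, which means that $A_\hbar$ is trivial.
\end{proof}
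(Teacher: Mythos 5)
Your proposal is correct and is exactly the argument the paper has in mind: the paper states the proposition as a routine consequence of the preceding classification theorem, and your proof supplies that routine check (vanishing of $\huaH_{\rm def}^2(A,A)$ makes the cocycle $\mu_1$ exact, and since the base pre-Lie algebroid $(A,\cdot_A,a_A=0)$ has zero anchor, $\sigma_{\mu_1}=\sigma_{\dM_{\rm def}\varphi}=0$, so $\Id+\hbar\varphi$ gives the trivializing equivalence). Your care about the symbol condition \eqref{eq:anchor} following from \eqref{eq:2-exact} is precisely the point the paper already noted, so no gap remains.
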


\begin{defi}
Let  $\{(\mu_1,\sigma_{\mu_1}), \cdots,(\mu_n,\sigma_{\mu_n})\}$   a pre-Lie $n$-deformation of a commutative associative algebroid $(A,\cdot_A)$. If there exists $(\mu_{n+1},\sigma_{\mu_{n+1}})\in\Der^2(A)$ such that  $$\{(\mu_1,\sigma_{\mu_1}), \cdots,(\mu_n,\sigma_{\mu_n}),(\mu_{n+1},\sigma_{\mu_{n+1}})\}$$ is a pre-Lie   $(n+1)$-deformation of $(A,\cdot_A)$, then  $\{(\mu_1,\sigma_{\mu_1}), \cdots,(\mu_n,\sigma_{\mu_n}),(\mu_{n+1},\sigma_{\mu_{n+1}})\}$ is   called an {\bf extension} of the pre-Lie   $n$-deformation   $\{(\mu_1,\sigma_{\mu_1}), \cdots,(\mu_n,\sigma_{\mu_n})\}$.
\end{defi}

\begin{thm}
  For any pre-Lie $n$-deformation of a  commutative associative algebroid $(A,\cdot_A)$, the pair $(\Theta_n,\sigma_{\Theta_n})\in \Der^3(A)$ defined by
\begin{eqnarray}
\label{eq:3-cocycle}\qquad\Theta_n(X,Y,Z)&=&\sum\limits_{i+j=n+1,\atop i,j\geq 1}\big(\mu_i(\mu_j(X,Y),Z)-\mu_i(X,\mu_j(Y,Z))-\mu_i(\mu_j(Y,X),Z)+\mu_i(Y,\mu_j(X,Z))\big),\\
\label{eq:3-cocycle symbol}\sigma_{\Theta_n}(X,Y)&=&\sum\limits_{i+j=n+1,\atop i,j\geq 1}\big(\sigma_{\mu_i}(\mu_j(X,Y)-\mu_j(Y,X))-[\sigma_{\mu_i}(X),\sigma_{\mu_j}(Y)]_{\frkX (M)}\big)
\end{eqnarray}
is a cocycle, i.e. $\dM_{\rm def}\Theta_n=0$.

 Moreover, the pre-Lie  $n$-deformation $\{(\mu_1,\sigma_{\mu_1}), \cdots,(\mu_n,\sigma_{\mu_n})\}$  extends to some pre-Lie  $(n + 1)$-deformation if and only if $[\Theta_n]=0$ in $\huaH_{\rm def}^3(A,A)$.
\end{thm}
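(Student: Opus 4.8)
The plan is to mimic the classical deformation-theory argument (as for associative or Lie algebras) adapted to the pre-Lie rule \eqref{eq:pre-Lie rule}, working throughout with pairs $(\mu_k,\sigma_{\mu_k})\in\Der^2(A)$ so that the bundle/anchor compatibilities are tracked alongside the algebraic ones. First I would unwind what it means for $\{(\mu_1,\sigma_{\mu_1}),\dots,(\mu_n,\sigma_{\mu_n}),(\mu_{n+1},\sigma_{\mu_{n+1}})\}$ to be a pre-Lie $(n+1)$-deformation: collecting the $\hbar^{n+1}$-coefficient of \eqref{eq:pre-Lie rule} (and the corresponding anchor identity coming from $\frka_\hbar(X\cdot_\hbar Y)=[\frka_\hbar(X),\frka_\hbar(Y)]_{\frkX(M)}$ together with the $\Der^2$ Leibniz condition) and separating the terms that involve $\mu_{n+1}$ from those that do not, the equation becomes exactly
$$
\dM_{\rm def}\mu_{n+1}=\Theta_n,\qquad \sigma_{\dM_{\rm def}\mu_{n+1}}=\sigma_{\Theta_n},
$$
with $\Theta_n$, $\sigma_{\Theta_n}$ as in \eqref{eq:3-cocycle}, \eqref{eq:3-cocycle symbol}. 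Here one uses that the $\hbar^{i+j=n+1}$ sum splits into the $i=0$ (or $j=0$) pieces — which assemble into $\dM_{\rm def}\mu_{n+1}$ by comparing with the explicit formula for $\dM_{\rm def}$ on $\Der^2(A)$, the commutativity and associativity of $\mu_0=\cdot_A$, and the skew part defining $[-,-]_A$ from $\mu_1$ — and the remaining $i,j\geq 1$ pieces, which by definition are $\Theta_n$. This is the computational heart of the argument and the step I expect to be the main obstacle: it is the bookkeeping of reindexing the pre-Lie associator sum, checking that the Lie-algebroid Jacobi and anchor parts are subsumed (using that $A^c$ for a pre-Lie algebroid is automatically a Lie algebroid by Proposition \ref{thm:sub-adjacent}), and verifying the symbol identity \eqref{eq:3-cocycle symbol} is the $\sigma$-component of the same coboundary equation.

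Granting that reformulation, the theorem follows in two halves. For the cocycle claim $\dM_{\rm def}\Theta_n=0$: I would show directly that $\Theta_n$ is a $3$-cocycle by a ``$\mathrm{d}^2=0$ in disguise'' computation. Concretely, one expands $\dM_{\rm def}\Theta_n$ using the explicit $\Der^3(A)\to\Der^4(A)$ formula, substitutes the definition \eqref{eq:3-cocycle} of $\Theta_n$, and uses the lower-order pre-Lie relations \eqref{eq:pre-Lie rule} for $k\leq n$ (i.e. the fact that we already have a pre-Lie $n$-deformation) to cancel everything; alternatively, observe that $\Theta_n$ is precisely the degree-$(n+1)$ obstruction term in the Maurer–Cartan / formal-deformation framework for the graded Lie algebra $(\huaC^\ast_{\rm def}(A,A)[?],[-,-],\dM_{\rm def})$ controlling pre-Lie algebroid deformations, so $\dM_{\rm def}\Theta_n=0$ is the standard consequence of the graded Jacobi identity together with $\mu_1,\dots,\mu_n$ satisfying the truncated MC equation. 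Either route works; I would pick whichever is cleaner given the conventions already fixed in the excerpt, and I would be careful to include the anchor/symbol parts (the bracket $[\sigma_{\mu_i}(X),\sigma_{\mu_j}(Y)]_{\frkX(M)}$ terms) throughout, since $\Theta_n\in\Der^3(A)$ is a genuine pair.

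For the extension criterion: if the $n$-deformation extends, then $(\mu_{n+1},\sigma_{\mu_{n+1}})$ solves $\dM_{\rm def}\mu_{n+1}=\Theta_n$, so $\Theta_n=\dM_{\rm def}\mu_{n+1}$ is exact and $[\Theta_n]=0$ in $\huaH^3_{\rm def}(A,A)$. Conversely, if $[\Theta_n]=0$ choose $(\mu_{n+1},\sigma_{\mu_{n+1}})\in\Der^2(A)$ with $\dM_{\rm def}\mu_{n+1}=\Theta_n$ (note the symbol part $\sigma_{\mu_{n+1}}$ is determined up to the kernel, and one must check it can be taken to satisfy \eqref{eq:3-cocycle symbol} — this follows because $\sigma_{\Theta_n}$ is the symbol of $\Theta_n$ and $\dM_{\rm def}$ respects symbols via \eqref{eq:simbol}); then by the reformulation above the augmented sequence satisfies \eqref{eq:pre-Lie rule} for all $k\leq n+1$, and the anchor map $\frka_\hbar=\sum_{k=0}^{n+1}\hbar^k\sigma_{\mu_k}$ together with $\cdot_\hbar=\sum_{k=0}^{n+1}\hbar^k\mu_k$ defines a pre-Lie algebroid structure on $\Gamma(A)[[\hbar]]/(\hbar^{n+2})$, i.e. a pre-Lie $(n+1)$-deformation extending the given one. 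I would close by remarking that this is the pre-Lie-algebroid analogue of the obstruction theory for associative and Lie algebras, with the commutative associative algebroid $(A,\cdot_A)$ viewed as the pre-Lie algebroid with zero anchor, and that Theorem \ref{thm:deformation quantization} guarantees the $n=1$ case is consistent with the $F$-algebroid semi-classical limit.
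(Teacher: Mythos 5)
Your proposal is correct and follows essentially the same route as the paper: verify that $(\Theta_n,\sigma_{\Theta_n})$ is a genuine element of $\Der^3(A)$, assert closedness by a direct calculation with the truncated pre-Lie relations, and identify the $\hbar^{n+1}$-coefficient of the pre-Lie rule with the equation $\Theta_n=\dM_{\rm def}\mu_{n+1}$, which gives both directions of the extension criterion at once. One small correction to your bookkeeping: since the base structure is the commutative associative algebroid $(A,\cdot_A)$ viewed as a pre-Lie algebroid with zero anchor, the bracket terms $[X_i,X_j]_A$ in the formula for $\dM_{\rm def}$ vanish identically (they are not ``defined from $\mu_1$''), so the assembly of the $i=0$ and $j=0$ pieces into $\dM_{\rm def}\mu_{n+1}$ uses only $\cdot_A$.
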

\begin{proof}
It is obvious that
$
\Theta_n(X,Y,Z)=-\Theta_n(Y,Z,X)
$ for all $X,Y,Z\in \Gamma(A).$
It is straightforward to check that
\begin{eqnarray*}
\Theta_n(X,fY,Z)&=& f\Theta_n(X,Y,Z),\\
\Theta_n(X,Y,fZ)&=&f\Theta_n(X,Y,Z)+\sigma_{\Theta_n}(X,Y)(f)Z.
\end{eqnarray*}
Thus $\Theta_n$ is an element of $\Der^3(A)$. By a direct calculation,  the cochain $\Theta_n\in \Der^3(A)$ is closed.

Assume that the pre-Lie  $(n+1)$-deformation $\{(\mu_1,\sigma_{\mu_1}), \cdots,(\mu_{n+1},\sigma_{\mu_{n+1}})\}$ of a commutative associative algebroid $(A,\cdot_A)$    is an extension of the pre-Lie $n$-deformation     $\{(\mu_1,\sigma_{\mu_1}), \cdots,(\mu_n,\sigma_{\mu_n})\}$, then we have
\begin{eqnarray*}
  \label{eq:n+1 term}&&\sum_{i+j=n+1,i,j\geq 1}\big(\mu_i(\mu_j(X,Y),Z)-\mu_i(X,\mu_j(Y,Z))-\mu_i(\mu_j(Y,X),Z)+\mu_i(Y,\mu_j(X,Z))\big)\\
 \nonumber  &=&X\cdot_A \mu_{n+1}(Y,Z)-Y\cdot_A \mu_{n+1}(X,Z)+ \mu_{n+1}(Y,X)\cdot_A Z-\mu_{n+1}(X,Y)\cdot_A Z\\
   \nonumber&&+\mu_{n+1}(Y,X)\cdot_A Z-\mu_{n+1}(X,Y)\cdot_A Z.
\end{eqnarray*}
Note that the left-hand side of the above equality is just $\Theta_n(X,Y,Z)$. We can rewrite the above equality as
$$\Theta_n(X,Y,Z)=\dM_{\rm def}\mu_{n+1}(X,Y,Z).$$
We conclude that, if a pre-Lie $n$-deformation of a  commutative associative algebroid $(A,\cdot_A)$ extends to a pre-Lie  $(n + 1)$-deformation, then $\Theta_n$ is a coboundary.

Conversely, if $\Theta_n$ is a coboundary, then there exists an element $(\psi,\sigma_\psi )\in\Der^2(A)$ such that
$$\Theta_n(X,Y,Z)=\dM_{\rm def}\psi(X,Y,Z).$$
It is not hard to check that $\{(\mu_1,\sigma_{\mu_1}), \cdots,(\mu_{n+1},\sigma_{\mu_{n+1}})\}$ with $\mu_{n+1}=\psi$ is a  pre-Lie $(n+1)$-deformation of $(A,\cdot_A)$ and thus this pre-Lie  $(n+1)$-deformation is an extension of the pre-Lie $n$-deformation  $\{(\mu_1,\sigma_{\mu_1}), \cdots,(\mu_n,\sigma_{\mu_n})\}$.
\end{proof}

\section{Some constructions of $F$-algebroids}\label{sec:construction}

In this section, we use eventual identities and Nijenhuis operators to construct  $F$-algebroids. In particular, a pseudo-eventual identity naturally gives a Nijenhuis operator on an $F$-algebroid.
\subsection{(Pseudo-)Eventual identities and Dubrovin's dual of $F$-algebroids}
\begin{defi}
 Let $(A,[-,-]_A,\cdot_A,a_A)$ be an $F$-algebroid with an identity $e$. A section $\huaE\in\Gamma(A)$  is called a {\bf pseudo-eventual identity} on the $F$-algebroid if the following equality holds:
 \begin{equation}\label{eq:Eventual1}
  P_{\huaE}(X,Y)=[e,\huaE]_A\cdot_A X \cdot_A Y,\quad\forall~X,Y\in\Gamma(A).
 \end{equation}

 A pseudo-eventual identity $\huaE$ on the $F$-algebroid $A$ is called an {\bf eventual identity} if it is invertible, i.e. there is a section $\huaE^{-1}\in\Gamma(A)$ such that $\huaE^{-1}\cdot_A\huaE=\huaE\cdot_A \huaE^{-1}=e$.
\end{defi}
Denote the set of all pseudo-eventual identities on an $F$-algebroid $A$ by $E(A)$, i.e.
$$E(A)=\{\huaE\in\Gamma(A)\mid P_\huaE(X,Y)=[e,\huaE]_A\cdot_A X\cdot_A Y,\quad \forall~X,Y\in\Gamma(A)\}.$$

\begin{pro}\label{pro:property of eventual identity}
Let $(A,[-,-]_A,\cdot_A,a_A)$ be an $F$-algebroid with an identity $e$. Then $E(A)$ is an $F$-manifold subalgebra of $\Gamma(A)$. Moreover, if $\huaE\in\Gamma(A)$ is an eventual identity on the $F$-algebroid $A$, then $\huaE^{-1}$ is also an eventual identity on $A$.
\end{pro}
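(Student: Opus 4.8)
The plan is to verify the two assertions separately. For the first, I must show that $E(A)$ is closed under the Lie bracket $[-,-]_A$ and under the multiplication $\cdot_A$, so that it forms an $F$-manifold subalgebra of $\Gamma(A)$. The natural strategy is to work directly with the defining equation \eqref{eq:Eventual1}, using the tensor $\Phi$ from Proposition \ref{pro:HM relation-tensor} as the main bookkeeping device. Observe first that \eqref{eq:Eventual1} can be rephrased as saying that the map $X,Y\mapsto P_\huaE(X,Y)-[e,\huaE]_A\cdot_A X\cdot_A Y$ vanishes identically. To show $\huaE_1\cdot_A\huaE_2\in E(A)$ when $\huaE_1,\huaE_2\in E(A)$, I would expand $P_{\huaE_1\cdot_A\huaE_2}(X,Y)$ using the Hertling-Manin relation \eqref{eq:HM relation} (equivalently, the vanishing of $\Phi$), which gives $P_{\huaE_1\cdot_A\huaE_2}(X,Y)=\huaE_1\cdot_A P_{\huaE_2}(X,Y)+\huaE_2\cdot_A P_{\huaE_1}(X,Y)$; then substitute the two instances of \eqref{eq:Eventual1} and use the Leibniz-type identity $[e,\huaE_1\cdot_A\huaE_2]_A=\huaE_1\cdot_A[e,\huaE_2]_A+\huaE_2\cdot_A[e,\huaE_1]_A+P_e(\huaE_1,\huaE_2)$, the last term vanishing by Proposition \ref{pro:identity}. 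Matching the two sides then reduces to commutativity and associativity of $\cdot_A$.

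For closure under the bracket, I would compute $P_{[\huaE_1,\huaE_2]_A}(X,Y)$. The key algebraic input is the standard identity expressing $P_{[x,y]}$ in terms of the Lie-algebra action of $x$ and $y$ on $P$; concretely, $P_{[\huaE_1,\huaE_2]_A}(X,Y)$ can be written as a difference of "derived" versions of $P_{\huaE_1}$ and $P_{\huaE_2}$, which after substituting \eqref{eq:Eventual1} for each $\huaE_i$ collapses, using the Jacobi identity and the Leibniz rule for $[-,-]_A$ over $\cdot_A$-products, into $[e,[\huaE_1,\huaE_2]_A]_A\cdot_A X\cdot_A Y$. Again $P_e=0$ is used to discard the terms involving $[e,-]$ acting through an associator. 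This is the step I expect to be the main obstacle: the bracket computation involves several nested applications of the Jacobi identity and of the Leibniz rule, and keeping track of the signs and of which terms are $\Phi$-type (hence zero) is delicate; a clean way to organize it is to first record as a sublemma the identity $P_{[X,Y]_A}(Z,W)=[X,P_Y(Z,W)]_A-P_{[X,Y]_A\text{-shifted terms}}\cdots$ — more precisely the fact that $\operatorname{ad}_X$ acts as a derivation of the bilinear map $P_{(-)}(Z,W)$ in its subscript up to $\Phi$-terms — and then invoke it twice.

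For the second assertion, suppose $\huaE$ is an eventual identity, so $\huaE^{-1}$ exists. I would apply the first part: since $\huaE\in E(A)$ and $e=\huaE\cdot_A\huaE^{-1}\in E(A)$ trivially (indeed $P_e=0=[e,e]_A\cdot_A X\cdot_A Y$), and $E(A)$ is a subalgebra, it is tempting but not quite immediate to conclude $\huaE^{-1}\in E(A)$ since inversion is not an algebra operation. Instead I would argue directly: write $\huaE\cdot_A\huaE^{-1}=e$ and apply $P_{(-)}(X,Y)$-type manipulations, or better, use the Hertling-Manin relation with the product $\huaE\cdot_A\huaE^{-1}$: from $0=P_e(X,Y)=\huaE\cdot_A P_{\huaE^{-1}}(X,Y)+\huaE^{-1}\cdot_A P_\huaE(X,Y)$ we solve for $\huaE\cdot_A P_{\huaE^{-1}}(X,Y)=-\huaE^{-1}\cdot_A P_\huaE(X,Y)=-\huaE^{-1}\cdot_A[e,\huaE]_A\cdot_A X\cdot_A Y$, and then multiply both sides by $\huaE^{-1}$ and use $[e,e]_A=0$ together with the Leibniz rule to identify $\huaE^{-2}\cdot_A[e,\huaE]_A$ with $-[e,\huaE^{-1}]_A$. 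This yields $P_{\huaE^{-1}}(X,Y)=[e,\huaE^{-1}]_A\cdot_A X\cdot_A Y$, i.e. $\huaE^{-1}\in E(A)$, and invertibility of $\huaE^{-1}$ is witnessed by $\huaE$ itself. Throughout, the $C^\infty(M)$-bilinearity built into Proposition \ref{pro:HM relation-tensor} guarantees that all these pointwise identities are legitimate.
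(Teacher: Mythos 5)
Your proposal is correct and follows essentially the same route as the paper: closure of $E(A)$ under $\cdot_A$ via the Hertling--Manin relation plus $P_e=0$, closure under $[-,-]_A$ via the expansion identity for $P_{[\huaE_1,\huaE_2]_A}$ followed by substituting \eqref{eq:Eventual1} and collapsing with the Jacobi identity to $[e,[\huaE_1,\huaE_2]_A]_A\cdot_A X\cdot_A Y$, and the inverse case handled directly from $0=P_{\huaE\cdot_A\huaE^{-1}}$ and $P_e(\huaE,\huaE^{-1})=0$ to get $[e,\huaE]_A\cdot_A\huaE^{-2}=-[e,\huaE^{-1}]_A$. The only caveat is cosmetic: in the bracket-closure step the ``Leibniz rule'' for $[-,-]_A$ over $\cdot_A$ holds only up to $P$-corrections, which is exactly what \eqref{eq:Eventual1} supplies, as in the paper's computation.
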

\begin{proof}
For the first claim, by a straightforward calculation, $E(A)$ is a subspace of the vector space $\Gamma(A)$.

Then we show that the multiplication $\cdot_A$ of two pseudo-eventual identities is still a pseudo-eventual identity. In fact, for any two  pseudo-eventual identities $\huaE_1$ and $\huaE_2$, by \eqref{eq:HM relation},
\begin{eqnarray*}
  P_{\huaE_1\cdot_A \huaE_2}(X,Y)&=&\huaE_1\cdot_A P_{\huaE_2}(X,Y)+\huaE_2\cdot_AP_{\huaE_1}(X,Y)\\
  &=&(\huaE_1\cdot_A [e,\huaE_2]_A+\huaE_2\cdot_A [e,\huaE_1]_A)\cdot_A X\cdot_A Y\\
  &=&[e,\huaE_1\cdot_A \huaE_2]_A\cdot_A X\cdot_A Y,
\end{eqnarray*}
where in the last equality we used $P_e(\huaE_1,\huaE_2)=0$. Thus $\huaE_1\cdot_A \huaE_2$ is a pseudo-eventual identity.

Finally, we show that the Lie bracket of two pseudo-eventual identities is also a pseudo-eventual identity. By \eqref{eq:HM relation}, we can show that for any $X,Y,Z,W\in\Gamma(A)$, we have
\begin{eqnarray*}
 P_{[X,Y]_A}(Z,W)&=&[X,P_Y(Z,W)]_A-P_Y([X,Z]_A,W)-P_Y(Z,[X,W]_A)\\
  &&-[Y,P_X(Z,W)]_A+P_X([Y,Z]_A,W)+P_X(Z,[Y,W]_A).
\end{eqnarray*}
Let $\huaE_1$ and $\huaE_2$ be eventual identities. Taking $X=\huaE_1$ and $Y=\huaE_2$, then by \eqref{eq:Eventual1},
\begin{eqnarray*}
 P_{[\huaE_2,\huaE_2]_A}(Z,W)&=&[\huaE_1,[e,\huaE_2]_A\cdot_A Z\cdot_A W]_A-[e,\huaE_2]_A\cdot_A[\huaE_1,Z]_A\cdot_A W\\
  &&-[e,\huaE_2]_A\cdot_A Z\cdot_A[\huaE_1,W]_A-[\huaE_2,[e,\huaE_1]_A\cdot_A Z\cdot_A W]_A\\
  &&+[e,\huaE_1]_A\cdot_A[\huaE_2,Z]_A\cdot_A W+[e,\huaE_1]_A\cdot_A Z\cdot_A[\huaE_2,W]_A.
\end{eqnarray*}
On the other hand, by \eqref{eq:Eventual1}, we have
\begin{eqnarray*}
 [\huaE_1,[e,\huaE_2]_A\cdot_A Z\cdot_A W]_A&=&2[e,\huaE_1]_A\cdot[e,\huaE_2]_A\cdot_AZ\cdot_A W+[\huaE_1,[e,\huaE_2]_A]_A\cdot_AZ\cdot_A W\\
 &&+[e,\huaE_2]_A\cdot_A[\huaE_1,Z]_A\cdot_A W+[e,\huaE_2]_A\cdot_A Z\cdot_A[\huaE_1,W]_A;\\
  {[\huaE_2,[e,\huaE_1]_A}\cdot_A Z\cdot_A W]_A&=&2[e,\huaE_2]_A\cdot[e,\huaE_1]_A\cdot_AZ\cdot_A W+[\huaE_2,[e,\huaE_1]_A]_A\cdot_AZ\cdot_A W\\
 &&+[e,\huaE_1]_A\cdot_A[\huaE_2,Z]_A\cdot_A W+[e,\huaE_1]_A\cdot_A Z\cdot_A[\huaE_2,W]_A.
\end{eqnarray*}
Thus
\begin{eqnarray*}
  P_{[\huaE_2,\huaE_2]_A}(Z,W)&=&[\huaE_1,[e,\huaE_2]_A]_A\cdot_AZ\cdot_A W-[\huaE_2,[e,\huaE_1]_A]_A\cdot_AZ\cdot_A W\\
&=&[e,[\huaE_1,\huaE_2]_A]_A\cdot_AZ\cdot_A W,
\end{eqnarray*}
which implies that $[\huaE_1,\huaE_2]_A$ is a pseudo-eventual identity. Therefore,  $E(A)$ is an  $F$-manifold subalgebra of $\Gamma(A)$.

Assume that $\huaE$ is an eventual identity on the $F$-algebroid $A$. By Proposition \ref{pro:identity}, we have $P_e(X,Y)=0$. Applying the Hertling-Manin relation with $X=\huaE$ and $Y=\huaE^{-1}$, we obtain
$$0=P_{\huaE\cdot_A \huaE^{-1}}(X,Y)=\huaE\cdot_AP_{\huaE^{-1}}(X,Y)+\huaE^{-1}\cdot_AP_{\huaE}(X,Y).$$
Combining this relation with \eqref{eq:Eventual1}, we have
$$P_{\huaE^{-1}}(X,Y)=-\huaE^{-2}\cdot_A[e,\huaE]_A\cdot_A X\cdot_A Y.$$
On the other hand, by  $P_e(X,Y)=0$, we have
$$0=P_e(\huaE,\huaE^{-1})=-[e,\huaE]_A\cdot_A \huaE^{-1}-\huaE\cdot_A[e,\huaE^{-1}]_A,$$
and then
$$[e,\huaE]_A\cdot_A\huaE^{-2}=([e,\huaE]_A\cdot_A\huaE^{-1})\cdot_A\huaE^{-1}=(-\huaE\cdot_A[e,\huaE^{-1}]_A)\cdot_A\huaE^{-1}=-[e,\huaE^{-1}]_A.$$
Thus we have
 \begin{equation*}\label{eq:Eventual2}
     P_{\huaE^{-1}}(X,Y)=[e,\huaE^{-1}]_A\cdot_A X \cdot_A Y,
 \end{equation*}
 which implies that $\huaE^{-1}$ is also an eventual identity on $A$.
\end{proof}

\emptycomment{\begin{pro}\label{pro:Eventual1}
 Let $(A,[-,-]_A,\cdot_A,a_A)$ be an $F$-algebroid with an identity $e$. If $\huaE\in\Gamma(A)$ is invertible, then
\eqref{eq:Eventual1} holds if and only if
 \begin{equation}\label{eq:Eventual2}
     P_{\huaE^{-1}}(X,Y)=[e,\huaE^{-1}]_A\cdot_A X \cdot_A Y,\quad\forall~X,Y\in\Gamma(A).
 \end{equation}
\end{pro}
\begin{proof} By Proposition \ref{pro:identity}, we have $P_e(X,Y)=0$. Applying the Hertling-Manin relation with $X=\huaE$ and $Y=\huaE^{-1}$, we obtain
$$0=P_{\huaE\cdot_A \huaE^{-1}}(X,Y)=\huaE\cdot_AP_{\huaE^{-1}}(X,Y)+\huaE^{-1}\cdot_AP_{\huaE}(X,Y).$$
Combining this relation with \eqref{eq:Eventual1}, we have
$$P_{\huaE^{-1}}(X,Y)=-\huaE^{-2}\cdot_A[e,\huaE]_A\cdot_A X\cdot_A Y.$$
On the other hand, by  $P_e(X,Y)=0$, we have
$$0=P_e(\huaE,\huaE^{-1})=-[e,\huaE]_A\cdot_A \huaE^{-1}-\huaE\cdot_A[e,\huaE^{-1}]_A,$$
and thus
$$[e,\huaE]_A\cdot_A\huaE^{-2}=([e,\huaE]_A\cdot_A\huaE^{-1})\cdot_A\huaE^{-1}=(-\huaE\cdot_A[e,\huaE^{-1}]_A)\cdot_A\huaE^{-1}=-[e,\huaE^{-1}]_A.$$
\eqref{eq:Eventual2} follows immediately. The converse can be proved similarly. \end{proof}}

A pseudo-eventual identity on an $F$-algebroid gives a new $F$-algebroid.

\begin{thm}\label{thm:construction F-algebroid}
  Let $(A,[-,-]_A,\cdot_A,a_A)$ be an $F$-algebroid with an identity $e$. Then $\huaE$ is a pseudo-eventual identity on $A$ if and only if $(A,[-,-]_A,\cdot_{\huaE},a_A)$ is  an $F$-algebroid, where $\cdot_{\huaE}:\Gamma(A)\times \Gamma(A)\longrightarrow \Gamma(A)$ is defined by
  \begin{equation}\label{eq:New associative mult}
    X\cdot_{\huaE} Y=X\cdot_A Y\cdot_A \huaE,\quad\forall~X,Y\in\Gamma(A).
  \end{equation}
\end{thm}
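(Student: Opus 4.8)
The plan is to begin by observing that passing from $\cdot_A$ to $\cdot_{\huaE}$ leaves almost all of the $F$-algebroid structure untouched: the pair $(A,[-,-]_A,a_A)$ is still a Lie algebroid, and $\cdot_{\huaE}$ defined by \eqref{eq:New associative mult} is evidently $C^\infty(M)$-bilinear and commutative, while $(X\cdot_{\huaE}Y)\cdot_{\huaE}Z = X\cdot_A Y\cdot_A Z\cdot_A\huaE\cdot_A\huaE$ is symmetric in $X,Y,Z$, so $(A,\cdot_{\huaE})$ is again a commutative associative algebroid. Hence, by Proposition \ref{pro:HM relation-tensor} applied to the Lie algebroid $(A,[-,-]_A,a_A)$ together with $\cdot_{\huaE}$, the theorem reduces to the single assertion that the Hertling--Manin tensor
\[ \Phi_{\huaE}(X,Y,Z,W) := P^{\huaE}_{X\cdot_{\huaE}Y}(Z,W) - X\cdot_{\huaE}P^{\huaE}_Y(Z,W) - Y\cdot_{\huaE}P^{\huaE}_X(Z,W) \]
vanishes identically if and only if $\huaE\in E(A)$, where $P^{\huaE}_X(Y,Z) := [X,Y\cdot_{\huaE}Z]_A - [X,Y]_A\cdot_{\huaE}Z - Y\cdot_{\huaE}[X,Z]_A$ is the quantity \eqref{eq:P} formed with $\cdot_{\huaE}$ instead of $\cdot_A$.

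First I would rewrite $P^{\huaE}$ in terms of the original $P$. Expanding $[X,Y\cdot_A Z\cdot_A\huaE]_A$ by \eqref{eq:P} (applied twice) and cancelling gives
\[ P^{\huaE}_X(Y,Z) = P_X(Y,Z)\cdot_A\huaE + P_X(Y\cdot_A Z,\huaE) + (Y\cdot_A Z)\cdot_A[X,\huaE]_A . \]
Substituting this into $\Phi_{\huaE}$, expanding $P^{\huaE}_{X\cdot_{\huaE}Y}$ likewise, and using throughout that the Hertling--Manin relation \eqref{eq:HM relation} holds for $\cdot_A$ — equivalently, the tensor $\Phi$ of Proposition \ref{pro:HM relation-tensor} is identically zero, since $(\Gamma(A),[-,-]_A,\cdot_A)$ is an $F$-manifold algebra — one should see many cubic and quartic terms cancel: the $[X,\huaE]_A$- and $[Y,\huaE]_A$-terms cancel by commutativity of $\cdot_A$, while the $P_{X\cdot_A Y}$- and $P_{X\cdot_A Y\cdot_A\huaE}$-terms, expanded via $\Phi\equiv 0$, cancel against the $X\cdot_{\huaE}P^{\huaE}_Y$ and $Y\cdot_{\huaE}P^{\huaE}_X$ contributions. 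I expect the outcome to collapse to the compact expression
\[ \Phi_{\huaE}(X,Y,Z,W) = (X\cdot_A Y)\cdot_A \Psi(Z,W) - (Z\cdot_A W)\cdot_A \Psi(X,Y), \qquad \Psi(Z,W) := P_{\huaE}(Z\cdot_A W,\huaE) + \huaE\cdot_A P_{\huaE}(Z,W) . \]
Carrying this computation out and organizing the cancellations is the bulk of the work; the rest is short.

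Granting that identity, the equivalence follows by specialization. Taking $Z=W=e$ gives $\Phi_{\huaE}(X,Y,e,e) = (X\cdot_A Y)\cdot_A\Psi(e,e) - \Psi(X,Y)$, so $\Phi_{\huaE}\equiv 0$ is equivalent to $\Psi(X,Y) = (X\cdot_A Y)\cdot_A\Psi(e,e)$ for all $X,Y$. A direct computation using $e\cdot_A\huaE=\huaE$, $[\huaE,\huaE]_A=0$ and skew-symmetry of $[-,-]_A$ gives $P_{\huaE}(e,\huaE)=[e,\huaE]_A\cdot_A\huaE$ and $P_{\huaE}(e,e)=[e,\huaE]_A$, whence $\Psi(e,e)=2\,\huaE\cdot_A[e,\huaE]_A$; thus $\Phi_{\huaE}\equiv 0$ is equivalent to
\[ P_{\huaE}(X\cdot_A Y,\huaE) + \huaE\cdot_A P_{\huaE}(X,Y) = 2\,\huaE\cdot_A[e,\huaE]_A\cdot_A X\cdot_A Y \qquad \text{for all } X,Y\in\Gamma(A) . \]
If $\huaE\in E(A)$, then $P_{\huaE}(X,Y)=[e,\huaE]_A\cdot_A X\cdot_A Y$ by \eqref{eq:Eventual1}, and this relation is immediate; this proves the ``only if'' direction. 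Conversely, starting from the displayed relation, setting $Y=e$ and using $P_{\huaE}(X,e)=X\cdot_A[e,\huaE]_A$ gives $P_{\huaE}(X,\huaE)=\huaE\cdot_A[e,\huaE]_A\cdot_A X$; feeding this back (with $X\cdot_A Y$ in place of $X$) yields $\huaE\cdot_A\big(P_{\huaE}(X,Y)-[e,\huaE]_A\cdot_A X\cdot_A Y\big)=0$, which gives \eqref{eq:Eventual1} once the factor $\huaE$ is removed.

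The step I expect to be genuinely laborious is the middle one: tracking the dozen-or-so cubic and quartic terms produced by expanding $\Phi_{\huaE}$ and recognizing, after repeated use of $\Phi\equiv 0$ and of commutativity, that they reassemble into $(X\cdot_A Y)\cdot_A\Psi(Z,W)-(Z\cdot_A W)\cdot_A\Psi(X,Y)$. A smaller point I would keep an eye on is that the converse argument above delivers the pseudo-eventual identity condition only after cancelling a factor of $\huaE$; this is harmless when $\huaE$ is an eventual (invertible) identity, but for a general pseudo-eventual identity I would double-check that the last reduction can be arranged without that cancellation.
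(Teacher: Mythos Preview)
Your approach is essentially the paper's: both compute $P^{\huaE}_X$ in terms of $P_X$, expand $\Phi_\huaE$ using the Hertling--Manin relation for $\cdot_A$, and arrive at the same key identity
\[
\Phi_\huaE(X,Y,Z,W)=(X\cdot_A Y)\cdot_A P^{\huaE}_{\huaE}(Z,W)-(Z\cdot_A W)\cdot_A P^{\huaE}_{\huaE}(X,Y),
\]
your $\Psi$ being precisely $P^{\huaE}_{\huaE}$ (the paper writes it via the alternative expansion $P_{\huaE}(\huaE\cdot_A Z,W)+P_{\huaE}(\huaE,Z)\cdot_A W$, which equals your $P_{\huaE}(Z\cdot_A W,\huaE)+\huaE\cdot_A P_{\huaE}(Z,W)$). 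Your organization is in fact slightly cleaner: you isolate this identity \emph{before} invoking \eqref{eq:Eventual1}, whereas the paper substitutes the pseudo-eventual condition mid-computation. The forward direction then goes through exactly as you outline.

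On the converse: the paper simply says ``the converse can be proved similarly'' and gives no details, so you have gone further than the paper here. The concern you flag is genuine. Your argument produces
\[
\huaE\cdot_A\bigl(P_{\huaE}(X,Y)-[e,\huaE]_A\cdot_A X\cdot_A Y\bigr)=0,
\]
and cancelling $\huaE$ is exactly what one cannot do for a merely pseudo-eventual (possibly non-invertible) $\huaE$. I do not see a way to extract \eqref{eq:Eventual1} from $\Phi_\huaE\equiv 0$ alone without some regularity assumption on $\huaE$; the identity $\Phi_\huaE=(X\cdot_A Y)\Psi(Z,W)-(Z\cdot_A W)\Psi(X,Y)$ only sees $\huaE\cdot_A P_\huaE(X,Y)$, never $P_\huaE(X,Y)$ itself. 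So for the converse as stated (with ``pseudo-eventual'' rather than ``eventual''), either an additional argument is needed or the statement should be read with invertible $\huaE$---as in David--Strachan, on which the paper's proof is modeled. Your instinct to double-check this step is well placed.
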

\begin{proof}
The proof of this theorem is similar to the proof of Theorem 3 in \cite{DS11}. We give a sketchy proof here for completeness. Assume that $\huaE$ is a pseudo-eventual identity on $A$. It is straightforward to check that the multiplication $\cdot_{\huaE}$ defined by \eqref{eq:New associative mult} is $\CWM$-bilinear, commutative and associative.

For $X,Y,Z\in\Gamma(A)$, we set
$$P^\huaE_{X}(Y,Z):=[X,Y\cdot_\huaE Z]_A-[X,Y]_A\cdot_\huaE Z-Y\cdot_\huaE [X,Z]_A.$$
By a direct calculation, we have
\begin{eqnarray}\label{eq:eventual relation 1}
P^\huaE_X(Y,Z)=P_X(\huaE\cdot_A Y,Z)+P_X(\huaE,Y)\cdot_A Z+[Z,\huaE]_A\cdot_A X\cdot_A Y.
\end{eqnarray}
Using \eqref{eq:eventual relation 1} and the Hertling-Manin relation, we have
\begin{eqnarray*}
P^\huaE_{X\cdot_\huaE Y}(Z,W)&=&\huaE\cdot_A X\cdot_AP_Y(\huaE\cdot_A Z,W)	+\huaE\cdot_A Y\cdot_AP_X(\huaE\cdot_A Z,W)\\
&&+X\cdot_A Y\cdot_AP_\huaE(\huaE\cdot_A Z,W)+\huaE\cdot_A X\cdot_AW\cdot_AP_Y(\huaE,Z)	\\
&&+\huaE\cdot_A Y\cdot_AW\cdot_AP_X(\huaE,Z)	+X\cdot_A Y\cdot_AW\cdot_AP_\huaE(\huaE,Z)	\\
&&+[X\cdot_A Y\cdot_A \huaE,\huaE]_A\cdot_A Z\cdot_A W.
\end{eqnarray*}
Since $\huaE$ is a pseudo-eventual identity on $A$,  by \eqref{eq:eventual relation 1}, we have
\begin{eqnarray*}
&&P^\huaE_{X\cdot_\huaE Y}(Z,W)-X\cdot_\huaE 	P^\huaE_{Y}(Z,W)-Y\cdot_\huaE 	P^\huaE_{X}(Z,W)\\
&=&X\cdot_A Y\cdot_A\big(P_\huaE(\huaE\cdot_A Z,W)+W\cdot_AP_\huaE(\huaE,Z)	\big)\\
&&-Z\cdot_A W\cdot_A\big([X\cdot_A Y\cdot_A \huaE,\huaE]_A+\huaE\cdot_A X\cdot_A[Y,\huaE]_A+\huaE\cdot_A Y\cdot_A[X,\huaE]_A\big)\\
&=&X\cdot_A Y\cdot_A\big(P_\huaE(\huaE\cdot_A Z,W)+W\cdot_AP_\huaE(\huaE,Z)	\big)-Z\cdot_A W\cdot_A\big(P_\huaE(\huaE,X)\cdot_A Y+P_\huaE(\huaE\cdot_A X,Y)\big).\\
&=&X\cdot_A Y\cdot_A([e,\huaE]_A\cdot_A\huaE\cdot_A Z\cdot_A W+[e,\huaE]_A\cdot_A\huaE\cdot_A Z\cdot_A W)\\
&&-Z\cdot_A W\cdot_A\big([e,\huaE]_A\cdot_A\huaE\cdot_A X\cdot_A Y+[e,\huaE]_A\cdot_A\huaE\cdot_A X\cdot_A Y\big)\\
&=&2[e,\huaE]_A\cdot_A\huaE\cdot_A X\cdot_A Y\cdot_A Z\cdot_A W-2[e,\huaE]_A\cdot_A\huaE\cdot_A X\cdot_A Y\cdot_A Z\cdot_A W\\
&=&0,
\end{eqnarray*}
 which implies
 that  $(A,[-,-]_A,\cdot_{\huaE},a_A)$ is an $F$-algebroid.	

 The converse can be proved similarly. We omit the details.
\end{proof}

\begin{thm}\label{thm:involution1}
  Let $(A,[-,-]_A,\cdot_A,a_A)$ be an $F$-algebroid with an identity $e$. Then $\huaE$ is an eventual identity on $A$ if and only if $(A,[-,-]_A,\cdot_{\huaE},a_A)$ is also an $F$-algebroid with the identity $\huaE^{-1}$, which is called the Dubrovin's dual of $(A,[-,-]_A,\cdot_A,a_A)$, where $\cdot_{\huaE}$ is given by \eqref{eq:New associative mult}.  Moreover, $e$ is an eventual identity on the $F$-algebroid $(A,[-,-]_A,\cdot_{\huaE},\huaE^{-1},a_A)$  and the map
  \begin{equation}\label{eq:involution}
  (A,[-,-]_A,\cdot_A,e,a_A,\huaE)\longrightarrow(A,[-,-]_A,\cdot_{\huaE},\huaE^{-1},a_A,e^{\dag})
  \end{equation}
  is an involution of the set of $F$-algebroids with eventual identities, where $e^{\dag}:=\huaE^{-2}$ is the inverse of $e$ with respect to the multiplication $\cdot_\huaE$.
\end{thm}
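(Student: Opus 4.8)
The plan is to deduce the whole statement from Theorem~\ref{thm:construction F-algebroid} together with Propositions~\ref{pro:identity} and~\ref{pro:property of eventual identity}, carefully tracking which multiplicative inverse is meant at each stage. For the stated equivalence: if $\huaE$ is an eventual identity then it is in particular a pseudo-eventual identity, so Theorem~\ref{thm:construction F-algebroid} already gives that $(A,[-,-]_A,\cdot_{\huaE},a_A)$ is an $F$-algebroid, and $\huaE^{-1}\cdot_{\huaE}X=\huaE^{-1}\cdot_A X\cdot_A\huaE=X$ shows $\huaE^{-1}$ is its identity. Conversely, if $(A,[-,-]_A,\cdot_{\huaE},a_A)$ is an $F$-algebroid possessing an identity $f$, then Theorem~\ref{thm:construction F-algebroid} forces $\huaE$ to be a pseudo-eventual identity, while $f\cdot_{\huaE}e=e$ reads $f\cdot_A\huaE=e$, so $\huaE$ is $\cdot_A$-invertible with $\huaE^{-1}=f$ and hence an eventual identity.

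Next I would show that $e$ is an eventual identity of the dual $(A,[-,-]_A,\cdot_{\huaE},\huaE^{-1},a_A)$. Since $\huaE$ is a pseudo-eventual identity of $(A,\cdot_A)$, Proposition~\ref{pro:property of eventual identity} gives that $E(A)$ is an $F$-manifold subalgebra of $\Gamma(A)$, so $\huaE\cdot_A\huaE$ is again a pseudo-eventual identity of $(A,\cdot_A)$; hence, by Theorem~\ref{thm:construction F-algebroid}, the bilinear operation $(X,Y)\mapsto X\cdot_A Y\cdot_A\huaE\cdot_A\huaE$ is the multiplication of an $F$-algebroid with bracket $[-,-]_A$ and anchor $a_A$. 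But \eqref{eq:New associative mult} shows this operation equals $(X,Y)\mapsto X\cdot_{\huaE}Y\cdot_{\huaE}e$, the ``$e$-twist'' of $\cdot_{\huaE}$; so Theorem~\ref{thm:construction F-algebroid}, now applied to the $F$-algebroid $(A,[-,-]_A,\cdot_{\huaE},\huaE^{-1},a_A)$ with the section $e$ in the role of $\huaE$, yields that $e$ is a pseudo-eventual identity of the dual. As $\huaE^{-2}\cdot_{\huaE}e=\huaE^{-2}\cdot_A e\cdot_A\huaE=\huaE^{-1}$ and $\huaE^{-1}$ is the identity of $\cdot_{\huaE}$, the section $e$ is $\cdot_{\huaE}$-invertible with inverse $e^{\dag}:=\huaE^{-2}$, so $e$ is in fact an eventual identity of the dual. (Alternatively, one can compute $P^{\huaE}_e(X,Y)=X\cdot_A Y\cdot_A[e,\huaE]_A$ directly, using that $P_e=0$ makes $[e,-]_A$ a derivation of $\cdot_A$, and compare it with $[\huaE^{-1},e]_A\cdot_{\huaE}X\cdot_{\huaE}Y$ by means of the relation $[e,\huaE^{-1}]_A=-\huaE^{-2}\cdot_A[e,\huaE]_A$ established inside the proof of Proposition~\ref{pro:property of eventual identity}.)

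Finally, to see that \eqref{eq:involution} is an involution: Proposition~\ref{pro:property of eventual identity}, applied to the dual $F$-algebroid, shows that $e^{\dag}=\huaE^{-2}$, being the $\cdot_{\huaE}$-inverse of the eventual identity $e$, is itself an eventual identity of the dual; hence the target $(A,[-,-]_A,\cdot_{\huaE},\huaE^{-1},a_A,e^{\dag})$ is again an $F$-algebroid with an eventual identity, and the correspondence is well defined. Running the construction a second time on this target: its new multiplication is $(X,Y)\mapsto X\cdot_{\huaE}Y\cdot_{\huaE}e^{\dag}=X\cdot_A Y\cdot_A\huaE\cdot_A\huaE^{-2}\cdot_A\huaE=X\cdot_A Y$, recovering $\cdot_A$; the $\cdot_{\huaE}$-inverse of $e^{\dag}=\huaE^{-2}$ is $e$, recovering the identity; and the $\cdot_A$-inverse of $\huaE^{-1}$, which is the identity of $\cdot_{\huaE}$, is $\huaE$, recovering the eventual identity. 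So applying the construction twice is the identity. I expect the only delicate point to be the bookkeeping of the three inverses in play (the $\cdot_A$-inverse of $\huaE$, the $\cdot_{\huaE}$-inverse of $e$, and the $\cdot_A$-inverse of $\huaE^{-1}$) and the sign in $[e,\huaE^{-1}]_A$; everything else is a formal consequence of Theorem~\ref{thm:construction F-algebroid} and Proposition~\ref{pro:property of eventual identity}.
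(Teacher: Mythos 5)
Your proposal is correct, and its central part takes a genuinely different route from the paper. For the claim that $e$ is an eventual identity on the Dubrovin dual, the paper computes the Hertling--Manin defect directly: it establishes the general identity \eqref{eq:Eventua2}, sets $Z=e$, uses $P_e=0$, and then invokes the relation $[e,\huaE]_A\cdot_A\huaE^{-2}=[\huaE^{-1},e]_A$ extracted from the proof of Proposition \ref{pro:property of eventual identity} --- exactly the computation you sketch only as your parenthetical alternative. Your main argument is instead structural: you use the closure of $E(A)$ under $\cdot_A$ (Proposition \ref{pro:property of eventual identity}) to see that $\huaE^{2}$ is pseudo-eventual, observe that the $e$-twist of $\cdot_{\huaE}$ coincides with the $\huaE^{2}$-twist of $\cdot_A$, and then apply \emph{both} directions of Theorem \ref{thm:construction F-algebroid} (to $A$ with $\huaE^2$, and to the dual with $e$) to conclude that $e$ is pseudo-eventual on the dual, finishing with the invertibility check $e^{\dag}\cdot_{\huaE}e=\huaE^{-1}$. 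This buys a computation-free argument, at the price of leaning on the converse half of Theorem \ref{thm:construction F-algebroid}, whose proof the paper only asserts (``can be proved similarly''); the paper's direct computation is self-contained at that point. You also spell out the converse of the first equivalence (a general identity $f$ of $\cdot_{\huaE}$ satisfies $f\cdot_A\huaE=e$, forcing invertibility of $\huaE$), which the paper leaves implicit, while your treatment of the involution (Proposition \ref{pro:property of eventual identity} for $e^{\dag}$, then recovering $\cdot_A$, $e$, $\huaE$ on a second application) matches the paper's, with the bookkeeping of the three inverses done slightly more explicitly.
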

\begin{proof}
By Theorem \ref{thm:construction F-algebroid}, $(A,[-,-]_A,\cdot_{\huaE},a_A)$ is an $F$-algebroid. It is obvious that $\huaE^{-1}$ is the identity with respect to the multiplication $\cdot_{\huaE}$ defined by \eqref{eq:New associative mult}.

Next, we show that $e$ is an eventual identity on $(A,[-,-]_A,\cdot_{\huaE},\huaE^{-1},a_A)$. Since the identity with respective to the multiplication $\cdot_\huaE$ is $\huaE^{-1}$, we need to show that
$$[e,X\cdot_{\huaE} Y]_A-[e,X]_A\cdot_\huaE Y-X\cdot_\huaE [e,Y]_A=[\huaE^{-1}, e]_A\cdot_\huaE X \cdot_\huaE Y,\quad\forall~X,Y\in\Gamma(A).$$
By a straightforward computation, for any $Z\in\Gamma(A)$, we have
\begin{eqnarray}\label{eq:Eventua2}
\qquad ~[Z,X\cdot_{\huaE} Y]_A-[Z,X]_A\cdot_\huaE Y-X\cdot_\huaE [Z,Y]_A
  &=&P_Z(\huaE\cdot_A X,Y)+P_Z(\huaE,X)\cdot_A Y\\
 \nonumber &&+[Z,\huaE]_A\cdot_A X\cdot_A Y.
\end{eqnarray}
Letting $Z=e$ in \eqref{eq:Eventua2} and using $P_e(X,Y)=0$, we have
$$[e,X\cdot_{\huaE} Y]_A-[e,X]_A\cdot_\huaE Y-X\cdot_\huaE [e,Y]_A=[e,\huaE]_A\cdot_A X\cdot_A Y=([e,\huaE]_A\cdot_A \huaE^{-2})\cdot_\huaE X\cdot_\huaE Y.$$
Recall now from the proof of Proposition \ref{pro:property of eventual identity} that $[e,\huaE]_A\cdot_A \huaE^{-2}=[\huaE^{-1},e]_A$. Thus $e$ is an eventual identity on the $F$-algebroid $(A,[-,-]_A,\cdot_{\huaE},\huaE^{-1},a_A)$.

Now we show that the map \eqref{eq:involution} is an involution. Note that $e^{\dag}:=\huaE^{-2}$ is the inverse of $e$ with respect to the multiplication $\cdot_\huaE$. By Proposition \ref{pro:property of eventual identity}, $e^{\dag}$ is also an eventual identity on the $F$-algebroid $(A,[-,-]_A,\cdot_{\huaE},\huaE^{-1},a_A)$. Furthermore, for $X,Y\in \Gamma(A)$, we have
$$X\cdot_A Y=X\cdot_\huaE Y\cdot_\huaE \huaE^{-2}=X\cdot_\huaE Y\cdot_\huaE {e}^{\dag},$$
which implies that the map defined by \eqref{eq:involution} is an involution of the set of $F$-algebroids with eventual identities.
\end{proof}

\begin{defi}
  An $F$-manifold $(M,\bullet,e)$ is called {\bf semi-simple} if there exists canonical local coordinates $(u^1,\cdots,u^n)$ on $M$ such that $e=\frac{\partial}{\partial u^1}+\cdots+\frac{\partial}{\partial u^n}$ and
$$\frac{\partial}{\partial u^i}\bullet \frac{\partial}{\partial u^j}=\delta_{ij}\frac{\partial}{\partial u^j},\quad i,j\in\{1,2,\cdots,n\}$$
\end{defi}

\begin{ex}
Let $(M,\bullet,e)$ be a semi-simple  $F$-manifold. Then $(TM,[-,-]_{\frkX(M)},\bullet,\Id)$ is an $F$-algebroid with an identity $e$.
It is straightforward to check that any pseudo-eventual identity on  $(TM,[-,-]_{\frkX(M)},\bullet,\Id)$ is of the form
$$\huaE=f_1(u^1)\frac{\partial}{\partial u^1}+\cdots+f_n(u^n)\frac{\partial}{\partial x_n},$$
where $f_i(u^i)\in\CWM$ depends only on $u^i$ for $i=1,2,\cdots,n$. Furthermore, it was shown in \cite{DS11} that if all $f_i(u^i)$ are  non-vanishing everywhere, then $\huaE\in \XM$ is an  eventual identity.
\end{ex}

\emptycomment{\begin{pro}
\begin{itemize}
\item[$\rm(i)$]Eventual identities form a subgroup of the group of invertible sections on an $F$-algebroid.
\item[$\rm(ii)$]The Lie bracket of two eventual identities is an eventual identity, provided that is invertible.
\item[$\rm(iii)$]If $\huaE$ is an eventual identity on an $F$-algebroid $(A,[-,-]_A,\cdot_A,a_A,e)$, then
  \begin{eqnarray}
    [\huaE^n,\huaE^m]_A=(m-n)\huaE^{m+n-1}\cdot_A[e,\huaE]_A,\quad\forall~m,n\in \Int.
  \end{eqnarray}
\end{itemize}
\end{pro}
\begin{proof} (i) If $\huaE_1$ and $\huaE_2$ are eventual identities, then $\huaE_1\cdot_A \huaE_2$ is invertible and for any $X,Y\in\Gamma(A)$, by \eqref{eq:HM relation},
\begin{eqnarray*}
  P_{\huaE_1\cdot_A \huaE_2}(X,Y)&=&\huaE_1\cdot_A P_{\huaE_2}(X,Y)+\huaE_2\cdot_AP_{\huaE_1}(X,Y)\\
  &=&(\huaE_1\cdot_A [e,\huaE_2]_A+\huaE_2\cdot_A [e,\huaE_1]_A)\cdot_A X\cdot_A Y\\
  &=&[e,\huaE_1\cdot_A \huaE_2]_A\cdot_A X\cdot_A Y,
\end{eqnarray*}
where in the last equality we used $P_e(\huaE_1,\huaE_2)=0$. Thus $\huaE_1\cdot_A \huaE_2$ is an eventual identity. By Lemma \ref{lem:Eventual1}, if $\huaE$ is an eventual identity, then $\huaE^{-1}$ is also an eventual identity. The first claim follows.

(ii) By \eqref{eq:HM relation}, we can show that for any $X,Y,Z,W\in\Gamma(A)$, we have
\begin{eqnarray*}
 P_{[X,Y]_A}(Z,W)&=&[X,P_Y(Z,W)]_A-P_Y([X,Z]_A,W)-P_Y(Z,[X,W]_A)\\
  &&-[Y,P_X(Z,W)]_A+P_X([Y,Z]_A,W)+P_X(Z,[Y,W]_A).
\end{eqnarray*}
Let $\huaE_1$ and $\huaE_2$ be eventual identities. Taking $X=\huaE_1$ and $Y=\huaE_2$, then by \eqref{eq:Eventual1},
\begin{eqnarray*}
 P_{[\huaE_2,\huaE_2]_A}(Z,W)&=&[\huaE_1,[e,\huaE_2]_A\cdot_A Z\cdot_A W]_A-[e,\huaE_2]_A\cdot_A[\huaE_1,Z]_A\cdot_a W\\
  &&-[e,\huaE_2]_A\cdot_A Z\cdot_A[\huaE_1,W]_A-[\huaE_2,[e,\huaE_1]_A\cdot_A Z\cdot_A W]_A\\
  &&+[e,\huaE_1]_A\cdot_A[\huaE_2,Z]_A\cdot_a W+[e,\huaE_1]_A\cdot_A Z\cdot_A[\huaE_2,W]_A.
\end{eqnarray*}
On the other hand, by \eqref{eq:Eventual1}, we have
\begin{eqnarray*}
 [\huaE_1,[e,\huaE_2]_A\cdot_A Z\cdot_A W]_A&=&2[e,\huaE_1]_A\cdot[e,\huaE_2]_A\cdot_AZ\cdot_A W+[\huaE_1,[e,\huaE_2]_A]_A\cdot_AZ\cdot_A W\\
 &&+[e,\huaE_2]_A\cdot_A[\huaE_1,Z]_A\cdot_a W+[e,\huaE_2]_A\cdot_A Z\cdot_A[\huaE_1,W]_A;\\
  {[\huaE_2,[e,\huaE_1]_A}\cdot_A Z\cdot_A W]_A&=&2[e,\huaE_2]_A\cdot[e,\huaE_1]_A\cdot_AZ\cdot_A W+[\huaE_2,[e,\huaE_1]_A]_A\cdot_AZ\cdot_A W\\
 &&+[e,\huaE_1]_A\cdot_A[\huaE_2,Z]_A\cdot_a W+[e,\huaE_1]_A\cdot_A Z\cdot_A[\huaE_2,W]_A.
\end{eqnarray*}
Thus
\begin{eqnarray*}
  P_{[\huaE_2,\huaE_2]_A}(Z,W)&=&[\huaE_1,[e,\huaE_2]_A]_A\cdot_AZ\cdot_A W-[\huaE_2,[e,\huaE_1]_A]_A\cdot_AZ\cdot_A W\\
&=&[e,[\huaE_1,\huaE_2]_A]_A\cdot_AZ\cdot_A W.
\end{eqnarray*}
The second claim follows.

(iii) The proof is by induction. We omit the details.
\end{proof}}

\subsection{Nijenhuis operators and deformed $F$-algebroids}

Recall from \cite{CGM} that a Nijenhuis operator on a commutative associative algebra $(A,\cdot_A)$ is a linear map $N:A\lon A$ such that
\begin{equation}\label{eq:Nij-Struc1}
 N(x)\cdot_A N(y)=N\big(N(x)\cdot_A y+x\cdot_A N(y)-N(x\cdot_A y)\big),\quad\forall~x,y\in A.
\end{equation}
Recall that a Nijenhuis operator on a Lie algebroid $(A,[-,-]_A,a_A)$ is a bundle map $N:A\lon A$ such that
\begin{equation}\label{eq:Nij-Struc3}
 [N(X),N(Y)]_A=N\big([N(X), Y]_A+[X,N(Y)]_A-N([X, Y]_A)\big),\quad\forall~X,Y\in\Gamma(A).
\end{equation}

Based on these structures, we introduce the notion of a Nijenhuis operator on an $F$-algebroid as follows.

\begin{defi}
  Let $(A,[-,-]_A,\cdot_A,a_A)$ be an $F$-algebroid. A bundle map $N:A\lon A$ is called a {\bf Nijenhuis operator} on the $F$-algebroid  $A$ if $N$ is both a Nijenhuis operator on the commutative associative algebra $(\Gamma(A),\cdot_A)$ and a Nijenhuis operator on the Lie algebroid $(A,[-,-]_A,a_A)$.
\end{defi}

Define the deformed operation $\cdot_N:\Gamma(A)\times\Gamma(A)\longrightarrow \Gamma(A)$ and  the deformed bracket $[-,-]_N:\Gamma(A)\times\Gamma(A)\longrightarrow \Gamma(A)$ by
\begin{eqnarray}
\label{eq:deformCA}  X\cdot_N Y&=&N(X)\cdot_A Y+X\cdot_A N(Y)-N(X\cdot_A Y),\\
\label{eq:deforLA} [X,Y]_N&=&[N(X),Y]_A+[X,N(Y)]_A-N([X,Y]_A),\quad\forall~X,Y\in \Gamma(A).
\end{eqnarray}
\begin{thm}\label{pro:Nijenhuis operator}
  Let $N:A\longrightarrow A$ be a Nijenhuis operator on an $F$-algebroid $(A,[-,-]_A,\cdot_A,a_A)$. Then $(A,[-,-]_N,\cdot_N,a_N=a_A\circ N)$
 is an $F$-algebroid and $N$ is an $F$-algebroid homomorphism from the $F$-algebroid $(A,[-,-]_N,\cdot_N,a_N=a_A\circ N)$ to $(A,[-,-]_A,\cdot_A,a_A)$.
\end{thm}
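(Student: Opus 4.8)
The statement has two parts: first, that $(A,[-,-]_N,\cdot_N,a_N)$ is an $F$-algebroid; second, that $N$ is an $F$-algebroid homomorphism into the original structure. I will dispatch the second part first, since it is the lightweight half and it also organizes the first. By the defining identities \eqref{eq:Nij-Struc1} and \eqref{eq:Nij-Struc3}, applying $N$ to \eqref{eq:deformCA} and \eqref{eq:deforLA} gives immediately $N(X\cdot_N Y)=N(X)\cdot_A N(Y)$ and $N([X,Y]_N)=[N(X),N(Y)]_A$; the anchor compatibility $a_A\circ N = a_N$ is the definition of $a_N$. So once we know the deformed structure is an $F$-algebroid, $N$ is automatically a homomorphism. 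This also means the classical facts about Nijenhuis deformations can be invoked wholesale: $(A,[-,-]_N,a_N)$ is a Lie algebroid (standard for Lie algebroids, e.g.\ because $N$ is Nijenhuis on $(A,[-,-]_A,a_A)$), and $(\Gamma(A),\cdot_N)$ is a commutative associative algebra (standard for Nijenhuis operators on commutative associative algebras, cf.\ \cite{CGM}); moreover $\cdot_N$ is $\CWM$-bilinear since $N$ is a bundle map and $\cdot_A$ is $\CWM$-bilinear. So $(A,\cdot_N)$ is a commutative associative algebroid.

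\textbf{The heart of the matter.} What remains is the Hertling--Manin relation for the pair $([-,-]_N,\cdot_N)$. By Proposition \ref{pro:HM relation-tensor} applied to the Lie algebroid $(A,[-,-]_N,a_N)$ with the $\CWM$-bilinear commutative associative multiplication $\cdot_N$, the obstruction
$$\Phi_N(X,Y,Z,W):=P^N_{X\cdot_N Y}(Z,W)-X\cdot_N P^N_Y(Z,W)-Y\cdot_N P^N_X(Z,W)$$
is $\CWM$-multilinear, where $P^N_X(Y,Z):=[X,Y\cdot_N Z]_N-[X,Y]_N\cdot_N Z-Y\cdot_N[X,Z]_N$. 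Hence it suffices to prove $\Phi_N\equiv 0$, and by tensoriality this is a purely algebraic identity: one may work inside $(\Gamma(A),[-,-]_A,\cdot_A)$, which \emph{is} an $F$-manifold algebra by hypothesis, forgetting the algebroid subtleties entirely. So the plan is to reduce to the following purely algebraic claim: \emph{if $(\g,[-,-],\cdot)$ is an $F$-manifold algebra and $N$ is simultaneously a Nijenhuis operator for $\cdot$ and for $[-,-]$, then the deformed bracket and product again satisfy the Hertling--Manin relation.} This is presumably exactly (or a special case of) a lemma already in the literature on Nijenhuis operators and $F$-manifold algebras; if so, cite it and we are done. If not, one proves it directly.

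\textbf{Executing the algebraic identity.} The direct computation proceeds by expressing $P^N$ in terms of $P$. The natural move: a Nijenhuis operator $N$ on the commutative associative algebra yields a chain of ``deformed'' products $\cdot_N, \cdot_{N^2},\dots$, and likewise a Nijenhuis operator on the Lie algebra yields deformed brackets; one shows the key intertwining formula
$$P^N_X(Y,Z) \;=\; P_{N(X)}(Y,Z)\;+\; (\text{terms involving } P_X \text{ of } N\text{-shifted arguments and } N \text{ applied to } P_X),$$
i.e.\ $P^N$ is built from $P$ by the same bookkeeping that produces $[-,-]_N$ and $\cdot_N$ from $[-,-]$ and $\cdot$. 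Substituting this into $\Phi_N$, every term is a value of the original $\Phi$ (the Hertling--Manin obstruction for $(\g,[-,-],\cdot)$) with arguments variously hit by $N$, plus correction terms that are $N$ applied to such values; since $\Phi\equiv 0$ by hypothesis, all of these vanish and $\Phi_N\equiv 0$. The main obstacle is precisely controlling this substitution: the interaction of the two Nijenhuis conditions with the Leibniz-type expansion of $P^N$ produces many terms, and one must check that they reorganize into $\Phi$-values rather than leaving a residue. The cleanest route is to mimic, line by line, the proof of the classical statement that the Poisson-case analogue holds (Nijenhuis operators producing compatible Poisson structures), since the combinatorics is parallel; alternatively, observe that pre-Lie deformation (Theorem \ref{thm:deformation quantization}) together with the known fact that a Nijenhuis operator on a pre-Lie algebroid deforms it into another pre-Lie algebroid could give a deformation-theoretic proof, passing the Hertling--Manin relation through the semi-classical limit. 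I would present the direct tensorial reduction and then either cite the algebraic lemma or carry out the (routine but lengthy) term-matching, flagging it as the only real computation in the argument.
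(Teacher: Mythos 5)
Your overall architecture coincides with the paper's: invoke \cite{CGM} for the deformed commutative associative structure and \cite{Kosmann1} for the deformed Lie algebroid, observe that the homomorphism property of $N$ is immediate from \eqref{eq:Nij-Struc1}--\eqref{eq:Nij-Struc3} applied to \eqref{eq:deformCA}--\eqref{eq:deforLA} (the paper likewise dismisses it as obvious), and reduce the theorem to the vanishing of the Hertling--Manin obstruction $\Phi_N$ for the pair $([-,-]_N,\cdot_N)$. The reduction itself is sound, though the appeal to Proposition \ref{pro:HM relation-tensor} is unnecessary: the Hertling--Manin relation is already a statement about sections, and $N$ is a Nijenhuis operator on the Lie algebra and commutative algebra $\Gamma(A)$, so the purely algebraic claim suffices with no tensoriality argument.

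The gap is that the decisive step --- showing $\Phi_N$ actually reorganizes into values of $\Phi$ with $N$-shifted arguments --- is asserted, not proved, and it is the entire content of the theorem. The paper's proof \emph{is} this computation; it establishes the explicit identity
\begin{align*}
\Phi_N(X,Y,Z,W)\;=\;&\sum_{\text{three arguments $N$-shifted}}\Phi(\cdot,\cdot,\cdot,\cdot)\;-\;N\Big(\sum_{\text{two arguments $N$-shifted}}\Phi(\cdot,\cdot,\cdot,\cdot)\Big)\\
&\;+\;N^2\Big(\sum_{\text{one argument $N$-shifted}}\Phi(\cdot,\cdot,\cdot,\cdot)\Big)\;-\;N^3\big(\Phi(X,Y,Z,W)\big),
\end{align*}
whence $\Phi_N=0$ since $\Phi=0$. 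Your proposed intermediate formula expressing $P^N_X$ alone through $P$ is left vague, and it is not clear such a termwise expression exists: the cancellations use both Nijenhuis identities simultaneously and are only guaranteed for the full four-argument combination --- exactly the ``residue'' risk you flag but do not resolve. The fallbacks you offer do not close this: no reference covering the combined $F$-manifold-algebra case is produced (\cite{CGM} and \cite{Kosmann1} treat the associative and Lie cases separately, and the paper proves the combined statement itself), and the deformation-theoretic detour via Theorem \ref{thm:deformation quantization} is not available as stated, since an $F$-algebroid carries only a Lie bracket rather than a pre-Lie product and no pre-Lie deformation quantization of $(A,\cdot_A)$ adapted to $N$ is known to exist. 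As written, the proposal therefore assumes the theorem's essential computation rather than carrying it out.
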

\begin{proof}
  Since $N$ is a Nijenhuis operator on the commutative associative algebra $(\Gamma(A),\cdot_A)$, it follows that $(\Gamma(A),\cdot_N)$ is a commutative associative algebra (\cite{CGM}). Since $N$ is a Nijenhuis operator on  the Lie algebroid $(A,[-,-]_A,a_A)$, $(A,[-,-]_N,a_N)$ is a Lie algebroid (\cite{Kosmann1}).

Define
\begin{equation}\label{eq:HM deformed equation}
\Phi_N(X,Y,Z,W):=P^N_{X\cdot_N Y}(Z,W)-X\cdot_NP^N_Y(Z,W)-Y\cdot_NP^N_X(Z,W),
\end{equation}
where $X,Y,Z,W\in\Gamma(A)$ and
$$P^N_{X}(Y,Z):=[X,Y\cdot_N Z]_N-[X,Y]_N\cdot_N Z-Y\cdot_N [X,Z]_N.$$
  Since $A$ is an $F$-algebroid and $N$ is a Nijenhuis operator on $A$, by a direct calculation, we have
  \begin{eqnarray*}
  	\Phi_N(X,Y,Z,W)
  	&=&\Phi(N(X), N(Y),N(Z),W)+\Phi(N(X), N(Y),Z,N(W)) \\
  	&& +\Phi(N(X), Y,N(Z),N(W))+\Phi(X, N(Y),N(Z),N(W))\\
  	&&-N\Big(\Phi(N(X), N(Y),Z,W)+\Phi(N(X), Y,N(Z),W)+\Phi(N(X), Y,Z,N(W))\\
  	&&+\Phi(X, N(Y),N(Z),W)+\Phi(X,Y,N(Z),N(W))+\Phi(X, N(Y),Z,N(W))\Big)\\&&+N^2\Big(\Phi(N(X), Y,Z,W)+\Phi(X, N(Y),Z,W)\\
  	&&+\Phi(X, Y,N(Z),W)+\Phi(X, Y,Z,N(W))\Big)\\
  &&-N^3(\Phi(X, Y,Z,W))\\
  &=&0,
  	\end{eqnarray*}
which implies that
$$P^N_{X\cdot_N Y}(W,Z)-X\cdot_NP^N_Y(W,Z)-Y\cdot_NP^N_X(W,Z)=0.$$
Thus $(A,[-,-]_N,\cdot_N,a_N=a_A\circ N)$
 is an $F$-algebroid. It is obvious that $N$ is an $F$-algebroid homomorphism from the $F$-algebroid $(A,[-,-]_N,\cdot_N,a_N=a_A\circ N)$ to $(A,[-,-]_A,\cdot_A,a_A)$.
\end{proof}

\begin{lem}\label{lem:Niejproperty}
  Let $(A,[-,-]_A,\cdot_A,a_A)$  be an $F$-algebroid and $N$ a Nijenhuis operator on $A$. For all $k,l\in\Nat$,
  \begin{itemize}
\item[$\rm(i)$]$(A,[-,-]_{N^k},\cdot_{N^k},a_{N^k})$ is an $F$-algebroid;
\item[$\rm(ii)$]$N^l$ is also a Nijenhuis operator on the $F$-algebroid $(A,[-,-]_{N^k},\cdot_{N^k},a_{N^k})$;
\item[$\rm(iii)$]The $F$-algebroids $(A,([-,-]_{N^k})_{N^l},(\cdot_{N^k})_{N^l},a_{N^{k+l}})$ and $(A,[-,-]_{N^{k+l}},\cdot_{N^{k+l}},a_{N^{k+l}})$ are the same;
\item[$\rm(iv)$]$N^l$ is an $F$-algebroid homomorphism between the $F$-algebroid $(A,[-,-]_{N^{k+l}},\cdot_{N^{k+l}},a_{N^{k+l}})$ and $(A,[-,-]_{N^k},\cdot_{N^k},a_{N^{k}})$.
  \end{itemize}
\end{lem}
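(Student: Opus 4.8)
The plan is to reduce the whole lemma to Theorem~\ref{pro:Nijenhuis operator} together with the two identities that a Nijenhuis operator $N$ on the $F$-algebroid $A$ satisfies automatically,
\[
N(X\cdot_N Y)=N(X)\cdot_A N(Y),\qquad N([X,Y]_N)=[N(X),N(Y)]_A,\quad\forall~X,Y\in\Gamma(A),
\]
obtained by applying $N$ to \eqref{eq:deformCA} and \eqref{eq:deforLA} and invoking \eqref{eq:Nij-Struc1} and \eqref{eq:Nij-Struc3}, respectively. For (i), I would first recall that a power $N^k$ of a Nijenhuis operator is again a Nijenhuis operator, on the commutative associative algebra $(\Gamma(A),\cdot_A)$ by \cite{CGM} and on the Lie algebroid $(A,[-,-]_A,a_A)$ by \cite{Kosmann1}; hence $N^k$ is a Nijenhuis operator on the $F$-algebroid $A$, and Theorem~\ref{pro:Nijenhuis operator} gives at once that $(A,[-,-]_{N^k},\cdot_{N^k},a_{N^k})$ is an $F$-algebroid, with $a_{N^k}=a_A\circ N^k$.

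For (ii) and (iii) the key point is the ``semigroup law'' for the deformation, which I would establish by induction on $k$: assuming that $N$ is still a Nijenhuis operator on the $F$-algebroid $(A,[-,-]_{N^k},\cdot_{N^k},a_{N^k})$, one deforms this $F$-algebroid once more by $N$ and reproduces $(A,[-,-]_{N^{k+1}},\cdot_{N^{k+1}},a_{N^{k+1}})$, with $N$ still Nijenhuis --- the commutative associative part of this statement is \cite{CGM} and the Lie-algebroid part is \cite{Kosmann1}, and in both cases the computation amounts to expanding \eqref{eq:deformCA}--\eqref{eq:deforLA} applied to the already-deformed operations and cancelling the cross terms by means of the two displayed identities above. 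Iterating $l$ times yields $(\cdot_{N^k})_{N^l}=\cdot_{N^{k+l}}$ and $([-,-]_{N^k})_{N^l}=[-,-]_{N^{k+l}}$; since powers of a Nijenhuis operator are again Nijenhuis --- now inside the $F$-algebroid $(A,[-,-]_{N^k},\cdot_{N^k},a_{N^k})$ --- the operator $N^l$ is a Nijenhuis operator there, which is (ii), and because $a_{N^k}\circ N^l=a_A\circ N^{k+l}=a_{N^{k+l}}$ the two $F$-algebroids in (iii) literally share the same anchor, bracket and multiplication.

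Finally, (iv) is immediate: by (ii) the operator $N^l$ is a Nijenhuis operator on the $F$-algebroid $(A,[-,-]_{N^k},\cdot_{N^k},a_{N^k})$, so the homomorphism part of Theorem~\ref{pro:Nijenhuis operator} says that $N^l$ is an $F$-algebroid homomorphism from $(A,([-,-]_{N^k})_{N^l},(\cdot_{N^k})_{N^l},a_{N^k}\circ N^l)$ to $(A,[-,-]_{N^k},\cdot_{N^k},a_{N^k})$, and by (iii) the source is exactly $(A,[-,-]_{N^{k+l}},\cdot_{N^{k+l}},a_{N^{k+l}})$.

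The main obstacle I foresee is the semigroup law used in (iii), namely that deforming first by $N^k$ and then by $N^l$ coincides with a single deformation by $N^{k+l}$. I would organize this as the inductive step $l=1$ described above, so that each stage applies the deformation formula \eqref{eq:deformCA}--\eqref{eq:deforLA} only once; one then expands, groups the terms according to how many times $N$ has been applied, and uses $N(X\cdot_N Y)=N(X)\cdot_A N(Y)$ and $N([X,Y]_N)=[N(X),N(Y)]_A$ repeatedly to obtain the cancellations. The remaining ingredients --- that powers of Nijenhuis operators are Nijenhuis, and that a Nijenhuis operator induces an $F$-algebroid homomorphism --- are already available from the cited literature and from Theorem~\ref{pro:Nijenhuis operator}.
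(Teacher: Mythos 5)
Your proposal is correct and follows essentially the same route as the paper: the paper's own proof simply observes that all four statements hold simultaneously for the commutative associative part (\cite{CGM}) and the Lie algebroid part (\cite{Kosmann1}) and then invokes Theorem \ref{pro:Nijenhuis operator} to combine them into the $F$-algebroid statements. Your extra sketch of the semigroup law via $N(X\cdot_N Y)=N(X)\cdot_A N(Y)$ and $N([X,Y]_N)=[N(X),N(Y)]_A$ is a sound way to make the cited facts explicit, but it is not a different method.
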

\begin{proof}
  Since the above conclusions with respect to Nijenhuis operators on commutative associative algebras (\cite{CGM}) and Lie algebroids (\cite{Kosmann1}) simultaneously hold, by Theorem \ref{pro:Nijenhuis operator}, the conclusions follow immediately.
\end{proof}
At the end of this section, we show that a pseudo-eventual identity naturally gives a Nijenhuis operator on an $F$-algebroid.

\begin{pro}
Let $(A,[-,-]_A,\cdot_A,a_A)$ be an $F$-algebroid with an identity $e$ and	$\huaE$   a pseudo-eventual identity on $A$. Then the endomorphism $N=\huaE\cdot_A$ is a Nijenhuis operator on the $F$-algebroid $A$.
Consequently, $(A,[-,-]_\huaE,\cdot_\huaE,a_\huaE)$ is an $F$-algebroid, where the bracket $[-,-]_\huaE$ is given by
\begin{equation}
[X,Y]_\huaE=[\huaE\cdot_A X,Y]_A+[X,\huaE\cdot_A Y]_A-\huaE\cdot_A[X,Y]_A,\quad \forall~X,Y\in\Gamma(A),\end{equation}
the multiplication $\cdot_\huaE$ is given by \eqref{eq:New associative mult} and $a_\huaE(X)=a_A(\huaE\cdot_A X)$.
\end{pro}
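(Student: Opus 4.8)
The plan is to show that the $\CWM$-linear endomorphism $N:=\huaE\cdot_A$ of $\Gamma(A)$, $X\mapsto\huaE\cdot_A X$, is a Nijenhuis operator on the $F$-algebroid $A$, i.e.\ simultaneously a Nijenhuis operator on the commutative associative algebra $(\Gamma(A),\cdot_A)$ and on the Lie algebroid $(A,[-,-]_A,a_A)$; the stated conclusions then follow immediately from Theorem~\ref{pro:Nijenhuis operator}, once one observes that for $N=\huaE\cdot_A$ the deformed operations \eqref{eq:deformCA} and \eqref{eq:deforLA} are exactly $\cdot_\huaE$ (as in \eqref{eq:New associative mult}) and the bracket $[-,-]_\huaE$ displayed in the statement, while $a_N=a_A\circ N=a_\huaE$.

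First I would dispose of the associative part \eqref{eq:Nij-Struc1}: using only commutativity and associativity of $\cdot_A$, both $N(X)\cdot_A N(Y)$ and $N\big(N(X)\cdot_A Y+X\cdot_A N(Y)-N(X\cdot_A Y)\big)$ equal $\huaE\cdot_A\huaE\cdot_A X\cdot_A Y$, so \eqref{eq:Nij-Struc1} holds with no hypothesis on $\huaE$ at all.

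The substantive point is the Lie algebroid part \eqref{eq:Nij-Struc3}, i.e.\ the vanishing of the Nijenhuis torsion
\[
T_N(X,Y):=[N X,N Y]_A-N\big([N X,Y]_A+[X,N Y]_A-N[X,Y]_A\big).
\]
I would expand every bracket of the form $[Z,\huaE\cdot_A U]_A$ by the identity $[Z,\huaE\cdot_A U]_A=P_Z(\huaE,U)+[Z,\huaE]_A\cdot_A U+\huaE\cdot_A[Z,U]_A$ (a rearrangement of the definition \eqref{eq:P} of $P$), use skew-symmetry of $[-,-]_A$ together with $[\huaE,\huaE]_A=0$, and expand the one genuinely nonlinear term $P_{\huaE\cdot_A X}(\huaE,Y)$ using the Hertling-Manin relation \eqref{eq:HM relation} as $\huaE\cdot_A P_X(\huaE,Y)+X\cdot_A P_\huaE(\huaE,Y)$. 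Collecting terms — the contributions involving $P_X(\huaE,Y)$, $P_Y(\huaE,X)$, $\huaE\cdot_A[\huaE,Y]_A\cdot_A X$ and $\huaE\cdot_A\huaE\cdot_A[X,Y]_A$ cancel between the two halves of $T_N$, and $\huaE\cdot_A[\huaE,X]_A\cdot_A Y$ cancels $\huaE\cdot_A[X,\huaE]_A\cdot_A Y$ — the torsion collapses to
\[
T_N(X,Y)=X\cdot_A P_\huaE(\huaE,Y)-P_\huaE(\huaE,X)\cdot_A Y .
\]
This is the only place where the pseudo-eventual identity hypothesis enters: substituting $P_\huaE(\huaE,Y)=[e,\huaE]_A\cdot_A\huaE\cdot_A Y$ and $P_\huaE(\huaE,X)=[e,\huaE]_A\cdot_A\huaE\cdot_A X$ from \eqref{eq:Eventual1} and using commutativity of $\cdot_A$, the two remaining terms coincide, so $T_N=0$.

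With $N$ established as a Nijenhuis operator on the $F$-algebroid, Theorem~\ref{pro:Nijenhuis operator} gives that $(A,[-,-]_N,\cdot_N,a_N=a_A\circ N)$ is an $F$-algebroid, and by the first paragraph this is precisely $(A,[-,-]_\huaE,\cdot_\huaE,a_\huaE)$. I expect the only mildly laborious step to be the torsion computation: it is a long but entirely mechanical expansion, whose essential inputs are the $P$-rewriting of the bracket, the Hertling-Manin relation, and — at the very last step — the pseudo-eventual identity condition \eqref{eq:Eventual1}.
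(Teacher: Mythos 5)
Your proposal is correct and follows essentially the same route as the paper: the associative Nijenhuis identity is immediate from commutativity and associativity, the Lie algebroid Nijenhuis torsion is reduced via the Hertling--Manin relation to the expression $X\cdot_A P_\huaE(\huaE,Y)-P_\huaE(\huaE,X)\cdot_A Y$ (the paper writes this same quantity as $([\huaE\cdot_A X,\huaE]_A-[X,\huaE]_A\cdot_A\huaE)\cdot_A Y-([\huaE\cdot_A Y,\huaE]_A-[Y,\huaE]_A\cdot_A\huaE)\cdot_A X$ and kills it with the specialization of \eqref{eq:Eventual1} at $Y=\huaE$, which is exactly your substitution $P_\huaE(\huaE,\cdot)=[e,\huaE]_A\cdot_A\huaE\cdot_A(\cdot)$), and the deformed $F$-algebroid is then obtained from Theorem~\ref{pro:Nijenhuis operator}. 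The differences are only in bookkeeping, not in substance.
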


\begin{proof}
First, we show that $N=\huaE\cdot_A$ is a Nijenhuis operator on the associative algebra $(\Gamma(A),\cdot_A)$. For any $X,Y\in\Gamma(A)$, we have
\begin{eqnarray*}
&&N(X)\cdot_A N(Y)-N\big(N(X)\cdot_A Y+X\cdot_A N(Y)-N(X\cdot_A Y)\big)\\
&=&X\cdot_A Y\cdot_A \huaE^2-\huaE\cdot_A\big(X\cdot_A Y\cdot_A \huaE +X\cdot_A Y\cdot_A \huaE-X\cdot_A Y\cdot_A \huaE \big)\\
&=&X\cdot_A Y\cdot_A \huaE^2-X\cdot_A Y\cdot_A \huaE^2\\
&=&0.
\end{eqnarray*}
Thus $N=\huaE\cdot_A$ is a Nijenhuis operator on the associative algebra $(\Gamma(A),\cdot_A)$.	

Then we show that $N=\huaE\cdot_A$ is a Nijenhuis operator on the Lie algebroid $(A,[-,-]_A,a_A)$. It is obvious that $N$ is a bundle map.
Since $\huaE$ is a pseudo-eventual identity on the $F$-algebroid $A$, taking $Y=\huaE$ in \eqref{eq:Eventual1}, we have
\begin{equation}\label{eq:Nij-relation4}
[X\cdot_A \huaE,\huaE]_A-[X,\huaE]_A\cdot_A \huaE=[\huaE,e]_A\cdot_A X\cdot_A \huaE.	
\end{equation}
 For any $X,Y\in\Gamma(A)$, expanding $[\huaE\cdot_A X,\huaE\cdot_A Y]_A$ using the Hertling-Manin relation and by \eqref{eq:Nij-relation4}, we have
\begin{eqnarray*}
&&[N(X),N(Y)]_A-N\big([N(X), Y]_A+[X,N(Y)]_A-N([X, Y]_A)\big)\\
&=&[\huaE\cdot_A X,\huaE\cdot_A Y]_A-\huaE\cdot_A\big([\huaE\cdot_A X, Y]_A+[X,\huaE\cdot_A Y]_A-\huaE\cdot_A [X, Y]_A\big)\\
&=&[\huaE\cdot_A X,\huaE]_A\cdot_A Y-[X,\huaE]_A\cdot_A \huaE\cdot_A Y-[\huaE\cdot_A Y,\huaE]_A\cdot_A X+[Y,\huaE]_A\cdot_A \huaE\cdot_A X\\
&=&([\huaE\cdot_A X,\huaE]_A-[X,\huaE]_A\cdot_A \huaE)\cdot_A Y-([\huaE\cdot_A Y,\huaE]_A-[Y,\huaE]_A\cdot_A \huaE)\cdot_A X\\
&=&[\huaE,e]_A\cdot_A X\cdot_A \huaE \cdot_A Y-[\huaE,e]_A\cdot_A Y\cdot_A \huaE \cdot_A X\\
&=&0.
\end{eqnarray*}
Thus $N=\huaE\cdot_A$ is a Nijenhuis operator on the Lie algebroid $(A,[-,-]_A,a_A)$. Therefore, $N=\huaE\cdot_A$ is a Nijenhuis operator on the $F$-algebroid $A$.

The second claim follows from Theorem \ref{pro:Nijenhuis operator}.
\end{proof}

\begin{cor}
Let $(M,\bullet)$ be an $F$-manifold with an identity $e$ and $\huaE$ a pseudo-eventual identity on $M$. Then there is a new $F$-algebroid  structure on $TM$ given by
\begin{eqnarray*}
X\bullet_\huaE Y&=&X\bullet Y\bullet \huaE,\\	
{[X,Y]}_\huaE &=&[\huaE\bullet X,Y]_{\frkX(M)}+[X,\huaE\bullet Y]_{\frkX(M)}-\huaE\bullet [X,Y]_{\frkX(M)},\\
a_{\huaE}(X)&=&\huaE\bullet X,\quad\forall~X,Y\in\frkX(M).
\end{eqnarray*}

\end{cor}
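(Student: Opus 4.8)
The plan is to obtain this as a direct specialization of the preceding proposition to the $F$-algebroid attached to an $F$-manifold. First I would recall, from the Example in Section~\ref{sec:defi}, that for an $F$-manifold $(M,\bullet)$ the triple $(TM,[-,-]_{\frkX(M)},\bullet,\Id)$ is an $F$-algebroid, and that if $M$ has an identity $e$ then $e\in\frkX(M)$ is an identity of this $F$-algebroid, since $e\bullet X=X$ for all $X\in\frkX(M)$. Next I would observe that a pseudo-eventual identity on the $F$-manifold $(M,\bullet)$ is, by convention, precisely a section $\huaE\in\frkX(M)$ satisfying the condition~\eqref{eq:Eventual1} for the $F$-algebroid $(TM,[-,-]_{\frkX(M)},\bullet,\Id)$, i.e. $P_{\huaE}(X,Y)=[e,\huaE]_{\frkX(M)}\bullet X\bullet Y$; hence the hypotheses of the preceding proposition are met with $A=TM$.

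Then I would apply the preceding proposition with $A=TM$, $[-,-]_A=[-,-]_{\frkX(M)}$, $\cdot_A=\bullet$ and $a_A=\Id$. It gives that $N=\huaE\bullet\colon TM\to TM$ (that is, $N(X)=\huaE\bullet X$) is a Nijenhuis operator on $(TM,[-,-]_{\frkX(M)},\bullet,\Id)$, and that the deformed triple $(TM,[-,-]_\huaE,\bullet_\huaE,a_\huaE)$ is again an $F$-algebroid. It then remains only to substitute $a_A=\Id$ into the three structure maps furnished by that proposition: $\bullet_\huaE$ is the multiplication \eqref{eq:New associative mult}, namely $X\bullet_\huaE Y=X\bullet Y\bullet\huaE$; the bracket is $[X,Y]_\huaE=[\huaE\bullet X,Y]_{\frkX(M)}+[X,\huaE\bullet Y]_{\frkX(M)}-\huaE\bullet[X,Y]_{\frkX(M)}$; and the anchor is $a_\huaE(X)=a_A(\huaE\bullet X)=\Id(\huaE\bullet X)=\huaE\bullet X$. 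These agree with the formulas in the statement, which proves the corollary.

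The argument is short and I do not anticipate any real obstacle: all the work — in particular the verification of the Hertling-Manin relation for the deformed structure, via Theorem~\ref{pro:Nijenhuis operator} — has already been carried out in the general setting. The only point I would take care to state explicitly at the outset is the identification of a pseudo-eventual identity on an $F$-manifold with a pseudo-eventual identity on its associated tangent $F$-algebroid, since it is this identification that makes the hypotheses of the preceding proposition literally applicable.
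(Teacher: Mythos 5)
Your proposal is correct and is exactly the intended argument: the corollary is the immediate specialization of the preceding proposition to the tangent $F$-algebroid $(TM,[-,-]_{\frkX(M)},\bullet,\Id)$ with identity $e$, with the anchor $a_A=\Id$ giving $a_\huaE(X)=\huaE\bullet X$. Your explicit remark identifying a pseudo-eventual identity on the $F$-manifold with one on its tangent $F$-algebroid is the only point needing mention, and you handle it correctly.
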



\section{Pre-$F$-algebroids and eventual identities}\label{sec:pre-F-algebroid}

In this section, first we introduce the notion of a pre-$F$-algebroid, and show that a pre-$F$-algebroid gives rise to an $F$-algebroid. Then we study eventual identities on a pre-$F$-algebroid, which give new pre-$F$-algebroids. Finally, we introduce the notion of a Nijenhuis operator on a pre-$F$-algebroid, and show that a Nijenhuis operator gives rise to a deformed pre-$F$-algebroid. Finally we give some applications to integrable systems.

\subsection{Some Properties of \preFs}
\begin{defi}

Let $(\g,\cdot)$ is a commutative associative algebra and $(\g,\ast)$ is a pre-Lie algebra. Define $\Psi:\otimes^3\g\longrightarrow \g$   by
\begin{equation}\label{eq:Com-Prelie relation2}
  \Psi(x,y,z):=x\ast(y\cdot z)-(x\ast y)\cdot z-y\cdot(x\ast z).
\end{equation}
\begin{itemize}
  \item[{\rm(i)}] The triple $(\g,\ast,\cdot)$ is called a {\bf pre-$F$-manifold algebra} if
  \begin{equation}\label{eq:pseudo-pre-HM1}
 \Psi(x,y,z)= \Psi(y,x,z),\quad \forall~x,y,z\in \g,
\end{equation}
   \item[{\rm(ii)}] The triple $(\g,\ast,\cdot)$ is called a {\bf pre-Lie commutative algebra (or PreLie-Com algebra)} if
   \begin{equation}\label{eq:prelie-com relation}
  \Psi(x,y,z)=0,\quad\forall~x,y,z\in \g.
\end{equation}
\end{itemize}

\end{defi}

It is obvious that a PreLie-Com algebra is a pre-$F$-manifold algebra.

\begin{ex}{\rm (\cite{LSB})}\label{ex:pre-Lie com algebra1}
 Let $(\g,\cdot)$ be a commutative associative algebra with a derivation $D$. Then the new product
  \begin{eqnarray*}
    x\ast y&=&x\cdot D (y),\quad\forall~x,y\in \g
  \end{eqnarray*}
 makes $(\g,\ast,\cdot)$ being a PreLie-Com algebra. Furthermore, $(\g,[-,-],\cdot)$ is an $F$-manifold algebra, where the bracket is given by
  $$ [x,y]=x\ast y-y\ast x=x\cdot D( y)-y\cdot D(x),\quad\forall~x,y\in \g.$$
\end{ex}

Let $\g=\Real[u^1,x_2,\cdots,x_n]$ be the algebra of polynomials in $n$ variables. Denote by $\frkD_n=\{\sum_{i=1}^np_i\partial_{u^i}\mid p_i\in \g\}$ the space of derivations.

\begin{ex}{\rm (\cite{LSB})}\label{ex:poly}
{\rm
  Let $\g$ be the algebra of polynomials in $n$ variables. Define $\cdot:\frkD_n\times \frkD_n\longrightarrow \frkD_n $ and $\ast:\frkD_n\times \frkD_n\longrightarrow \frkD_n$ by
  \begin{eqnarray*}
  (p\partial_{u^i})\cdot (q\partial_{u^j})&=&(pq)\delta_{ij}\partial_{u^i},\\
  (p\partial_{u^i})\ast (q\partial_{u^j})&=&p\partial_{u^i}(q)\partial_{u^j},\quad\forall~p,q\in \g.
  \end{eqnarray*}
  Then $(\frkD_n,\ast,\cdot)$ is a PreLie-Com algebra  with the identity
  $e=\partial_{u^1}+\cdots \partial_{x_n}.$
  Furthermore, it follows that $(\frkD_n,[-,-],\cdot)$ is an $F$-manifold algebra  with the identity $e$, where the bracket is given by
  $$ [p\partial_{u^i},q\partial_{u^j}]=p\partial_{u^i}(q)\partial_{u^j}-q\partial_{u^j}(p)\partial_{u^i},\quad\forall~p,q\in \g.$$}
\end{ex}

\emptycomment{\begin{ex}{\rm (\cite{LSB})}\label{ex:poly2}\rm{
 Let $\g=\Real[u^1,x_2]$ be the algebra of polynomials in two variables. Besides the PreLie-Com algebra structure given in Example~\ref{ex:poly} on
 $\frkD_2$, there is another PreLie-Com algebra  $(\frkD_2,\ast,\cdot)$ with the identity $\partial_{u^1}$, where the operations $\cdot$ and  $\ast$ are determined by
\begin{eqnarray*}
 \partial_{u^1}\cdot \partial_{u^1}=\partial_{u^1},~\quad
 \partial_{u^1}\cdot \partial_{x_2}&=& \partial_{x_2}\cdot \partial_{u^1}=\partial_{x_2},~
  \quad \partial_{x_2}\cdot \partial_{x_2}=0,\\
  (p\partial_{u^i})\ast (q\partial_{u^j})&=&p\partial_{u^i}(q)\partial_{u^j},\quad\forall~p,q\in \g~(i,j=1,2).
  \end{eqnarray*}
Furthermore, $(\frkD_2,[-,-],\cdot)$ is an $F$-manifold algebra  with the identity $\partial_{u^1}$, where the bracket is given by
  $$ [p\partial_{u^i},q\partial_{u^j}]=p\partial_{u^i}(q)\partial_{u^j}-q\partial_{u^j}(p)\partial_{u^i},\quad\forall~p,q\in \g.$$}
\end{ex}}

\begin{defi}
A {\bf \preF} is a vector bundle $A$ over $M$ equipped with bilinear operations $\cdot_A:\Gamma(A)\times \Gamma(A)\rightarrow \Gamma(A)$,  $\ast_A:\Gamma(A)\times \Gamma(A)\rightarrow \Gamma(A)$, and a bundle map $a_A:A\rightarrow TM$, called the anchor, such that $(A,\ast_A,a_A)$ is a pre-Lie algebroid, $(A,\cdot_A)$ is a commutative associative algebroid and $(\Gamma(A),[-,-]_A,\cdot_A)$ is a pre-$F$-manifold algebra. In particular, if $(\Gamma(A),\ast_A,\cdot_A)$ is a PreLie-Com algebra, we call this \preF~ a {\bf PreLie-Com algebroid}.

\end{defi}
We denote a \preF~(or PreLie-Com algebroid) by $(A,\ast_A,\cdot_A,a_A)$.

\begin{defi}
Let $(A,\ast_A,\cdot_A,a_A)$ and $(B,\ast_B,\cdot_B,a_B)$ be \preFs~ on $M$.  A bundle map $\varphi:A\longrightarrow B$ is
called a {\bf homomorphism}  of \preFs, if the following
conditions are satisfied:
\begin{eqnarray*}
  \varphi(X\cdot_A Y)=\varphi(X)\cdot_B \varphi(Y),\quad
   \varphi(X\ast_A Y)=\varphi(X)\ast_B \varphi(Y),\quad
   a_B\circ\varphi =a_A,\quad\forall~X,Y\in\Gamma(A).
\end{eqnarray*}
\end{defi}


 \begin{pro}\label{pro:preLie-Com construction}
Let $(A,\ast_A,\cdot_A,a_A)$ be a \preF. Then $(A,[-,-]_A,\cdot_A, a_A)$ is an $F$-algebroid, and denoted by
$A^c$, called the {\bf sub-adjacent $F$-algebroid} of
the \preF, where the bracket $[-,-]_A$ is given by
\begin{equation}\label{eq:Lie-bracket}
[X,Y]_A=X\ast_A Y-Y\ast_A X,\quad\forall~X,Y\in \Gamma(A).
\end{equation}
\end{pro}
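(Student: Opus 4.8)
The plan is to verify the three defining conditions of an $F$-algebroid for $(A,[-,-]_A,\cdot_A,a_A)$, using what is already known about the \preF\ structure. First, since $(A,\ast_A,a_A)$ is a pre-Lie algebroid, Proposition \ref{thm:sub-adjacent} immediately tells us that the skew-symmetric bracket $[X,Y]_A=X\ast_A Y-Y\ast_A X$ makes $(A,[-,-]_A,a_A)$ into a Lie algebroid (its sub-adjacent Lie algebroid $A^c$). Second, $(A,\cdot_A)$ is a commutative associative algebroid by hypothesis, so the only real content is the third condition: that $(\Gamma(A),[-,-]_A,\cdot_A)$ is an $F$-manifold algebra, i.e. that the Hertling-Manin relation \eqref{eq:HM relation} holds. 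By Proposition \ref{pro:HM relation-tensor}, the obstruction $\Phi$ defined in \eqref{eq:HM equation} is a tensor field of type $(4,1)$, symmetric in its first two and last two arguments, so it suffices to show $\Phi$ vanishes pointwise, and in checking this we are free to work purely algebraically with the bracket expressed via $\ast_A$.

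The key step is a computation expressing $P_X(Y,Z)$ in terms of the tensor $\Psi$ of \eqref{eq:Com-Prelie relation2}. Expanding
$$P_X(Y,Z)=[X,Y\cdot_A Z]_A-[X,Y]_A\cdot_A Z-Y\cdot_A[X,Z]_A$$
by substituting $[U,V]_A=U\ast_A V-V\ast_A U$ everywhere, and using commutativity and associativity of $\cdot_A$ together with the Leibniz-type identities (i), (ii) of Definition \ref{defi:left-symmetric algebroid}, one gets
$$P_X(Y,Z)=\Psi(X,Y,Z)-\big((Y\cdot_A Z)\ast_A X-Y\cdot_A(Z\ast_A X)-Z\cdot_A(Y\ast_A X)\big),$$
and the parenthesized term is symmetric in $Y$ and $Z$; call it $Q(X;Y,Z)$. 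Because the \preF\ condition \eqref{eq:pseudo-pre-HM1} says $\Psi(X,Y,Z)=\Psi(Y,X,Z)$, the "non-symmetric" part of $P$ is governed entirely by $Q$. Plugging this decomposition into $\Phi(X,Y,Z,W)$ and using the Hertling-Manin-type bookkeeping exactly as in Hertling's original argument (referenced in the proof of Theorem \ref{thm:deformation quantization} via \cite{Her02}), the $\Psi$-contributions cancel against each other by the symmetry \eqref{eq:pseudo-pre-HM1} and the $Q$-contributions cancel by commutativity/associativity of $\cdot_A$, yielding $\Phi(X,Y,Z,W)=0$.

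The main obstacle is purely organizational: the expansion of $\Phi$ produces a large number of terms (each $P$ unpacks into six $\ast_A$-terms, and there are three $P$'s, each appearing inside a further multiplication), and one must group them carefully so that the cancellations are visible. The cleanest route is probably to first establish the intermediate identity for $P_X(Y,Z)$ above as a lemma, then observe that $P_{X\cdot_A Y}(Z,W)-X\cdot_A P_Y(Z,W)-Y\cdot_A P_X(Z,W)$ splits as
$$\big(\Psi\text{-part}\big)+\big(Q\text{-part}\big),$$
where the $\Psi$-part is precisely the Hertling-Manin obstruction for the \emph{\preF} condition applied in its symmetrized form and the $Q$-part vanishes identically by a short direct check using only the commutative associative structure. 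An alternative, even shorter, route is to invoke the analogous statement at the level of algebras — namely that a pre-$F$-manifold algebra has an underlying $F$-manifold algebra, which is essentially Example \ref{ex:pre-Lie com algebra1} in the PreLie-Com case and is the known fact behind \cite{Dot,LSB} in general — and then simply note that the pointwise nature of $\Phi$ reduces the algebroid statement to this algebra-level statement fiber by fiber, together with the already-established Lie algebroid structure from Proposition \ref{thm:sub-adjacent}.
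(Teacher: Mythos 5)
Your closing alternative is exactly the paper's proof: the Lie algebroid structure comes from Proposition \ref{thm:sub-adjacent} (i.e.\ \cite{LSBC}), and the Hertling--Manin relation is obtained by applying the algebra-level theorem of \cite{Dot} (see also \cite{LSB}) directly to the $\mathbb{R}$-algebra $\Gamma(A)$, which by the definition of a \preF\ is a pre-$F$-manifold algebra with respect to $\ast_A$ and $\cdot_A$. No appeal to the tensoriality of $\Phi$ or any ``fiber by fiber'' reduction is needed --- nor would a literal fiberwise argument make sense, since $\ast_A$ and $[-,-]_A$ are not $C^\infty(M)$-bilinear; one simply applies the algebra statement to $\Gamma(A)$ as it stands. (Also, Example \ref{ex:pre-Lie com algebra1} is a particular construction from a derivation, not the general algebra-level statement.) Via this citation route your argument is complete and coincides with the paper's.

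The self-contained computational route you present as the main argument, however, has a genuine flaw in the claimed cancellation pattern. The decomposition $P_X(Y,Z)=\Psi(X,Y,Z)-Q(X;Y,Z)$ is correct, but the $Q$-contributions do \emph{not} vanish by commutativity and associativity alone: regrouping terms (using only commutativity of $\cdot_A$) gives
\begin{align*}
&Q(X\cdot_A Y;Z,W)-X\cdot_A Q(Y;Z,W)-Y\cdot_A Q(X;Z,W)\\
&\qquad=\Psi(Z\cdot_A W,X,Y)-Z\cdot_A\Psi(W,X,Y)-W\cdot_A\Psi(Z,X,Y),
\end{align*}
so that, setting $D(X,Y;Z,W)=\Psi(X\cdot_A Y,Z,W)-X\cdot_A\Psi(Y,Z,W)-Y\cdot_A\Psi(X,Z,W)$, one has $\Phi(X,Y,Z,W)=D(X,Y;Z,W)-D(Z,W;X,Y)$. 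Neither summand vanishes separately in general: already for $\ast_A=\cdot_A$ on a unital commutative associative algebra (a pre-$F$-manifold algebra with $\Psi(X,Y,Z)=-X\cdot_A Y\cdot_A Z$) one finds $D(X,Y;Z,W)=X\cdot_A Y\cdot_A Z\cdot_A W\neq 0$, so the ``$Q$-part'' is itself a nontrivial $\Psi$-expression and the ``$\Psi$-part'' is not killed by the symmetry \eqref{eq:pseudo-pre-HM1} alone. The actual content of the Hertling--Manin relation is the symmetry of $D$ under exchanging the pairs $(X,Y)\leftrightarrow(Z,W)$, and proving this requires the full symmetry of $\Psi$ together with identities of the type \eqref{eq:HM pre-Lie function2}--\eqref{eq:HM pre-Lie function3}; this is precisely the computation carried out in \cite{Dot,LSB}. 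So if you want the direct proof you must supply that argument --- as written, the asserted cancellations would fail, and your proposal stands only through the citation route.
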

\begin{proof}
  Since $(A,\ast_A,a_A)$ is a pre-Lie algebroid, $(A,[-,-]_A,a_A)$ is a Lie algebroid (\cite{LSBC}). Since $(\Gamma(A),\ast_A,\cdot_A)$ is a pre-$F$-manifold algebra,  $(\Gamma(A),[-,-]_A,\cdot_A)$ is an $F$-manifold algebra (\cite{Dot}).  Thus $(A,[-,-]_A,\cdot_A, a_A)$ is an $F$-algebroid.
\end{proof}

The notion of an $F$-manifold with a compatible flat connection was introduced by Manin in \cite{Manin1}. Recall that an {\bf   $F$-manifold with a compatible flat connection (PreLie-Com manifold)} is a triple $(M,\nabla,\bullet)$, where $M$ is a manifold, $\nabla$ is a flat connection and $\bullet$ is a $C^\infty(M)$-bilinear,  commutative, associative multiplication on the tangent bundle $TM$  such that $(TM,\nabla,\bullet,\Id)$ is a \preF~(PreLie-Com algebroid).    It is obvious that  an $F$-manifold with a compatible flat connection is a special case of \preFs.
An $F$-manifold with a compatible flat connection (resp. PreLie-Com manifold) is called {\bf semi-simple} if its sub-adjacent $F$-manifold is semi-simple.

\begin{pro}
  Let $(M,\nabla,\bullet,e)$ be a semi-simple PreLie-Com manifold with the canonical local coordinate systems $(u^1,\cdots,u^n)$. Then we have
  $$
   { \nabla_{\frac{\partial}{\partial u^i}}{\frac{\partial}{\partial u^j}}}=0,\quad i,j\in\{1,2,\cdots,n\}.
$$
\end{pro}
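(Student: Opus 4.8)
The plan is to exploit the PreLie-Com condition $\Psi(X,Y,Z)=0$ together with the defining axioms of a PreLie-Com algebroid and the explicit description of the multiplication in canonical coordinates. Write $\partial_i=\frac{\partial}{\partial u^i}$, so that $\partial_i\cdot\partial_j=\delta_{ij}\partial_i$ and $e=\sum_i\partial_i$. Since $(TM,\nabla,\bullet,\Id)$ is a PreLie-Com algebroid, the pre-Lie product is $\ast=\nabla$, and the PreLie-Com relation \eqref{eq:prelie-com relation} reads
$$
\nabla_X(Y\cdot Z)-(\nabla_XY)\cdot Z-Y\cdot(\nabla_XZ)=0,\qquad\forall~X,Y,Z\in\frkX(M),
$$
i.e. each $\nabla_X$ is a derivation of the multiplication $\bullet$. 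Also recall that $\nabla$ is a flat connection, so its torsion vanishes: $\nabla_XY-\nabla_YX=[X,Y]_{\frkX(M)}$; in particular $\nabla_{\partial_i}\partial_j-\nabla_{\partial_j}\partial_i=[\partial_i,\partial_j]_{\frkX(M)}=0$, so the Christoffel-type symbols are symmetric in the lower indices.

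First I would take $X=\partial_k$, $Y=\partial_i$, $Z=\partial_j$ in the derivation identity above. Using $\partial_i\cdot\partial_j=\delta_{ij}\partial_i$ this gives
$$
\delta_{ij}\,\nabla_{\partial_k}\partial_i=(\nabla_{\partial_k}\partial_i)\cdot\partial_j+\partial_i\cdot(\nabla_{\partial_k}\partial_j).
$$
Expand $\nabla_{\partial_k}\partial_i=\sum_m\Gamma^m_{ki}\partial_m$ in the coordinate frame. Then $(\nabla_{\partial_k}\partial_i)\cdot\partial_j=\Gamma^j_{ki}\partial_j$ and $\partial_i\cdot(\nabla_{\partial_k}\partial_j)=\Gamma^i_{kj}\partial_i$, so for $i\neq j$ the left side is zero and we obtain $\Gamma^j_{ki}=0$ and $\Gamma^i_{kj}=0$ whenever $i\neq j$; that is, $\nabla_{\partial_k}\partial_i$ has no $\partial_m$-component for $m\neq i$, so $\nabla_{\partial_k}\partial_i=\Gamma^i_{ki}\partial_i$ for every $k,i$. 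For the case $i=j$ the identity becomes $\nabla_{\partial_k}\partial_i=2\,(\nabla_{\partial_k}\partial_i)\cdot\partial_i=2\Gamma^i_{ki}\partial_i$, forcing $\Gamma^i_{ki}=0$ as well. Hence $\nabla_{\partial_k}\partial_i=0$ for all $k,i$, which is the claim.

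The only real subtlety — and the step I would treat most carefully — is making sure which of the two legs of the pre-Lie product plays the role of the connection in the PreLie-Com algebroid axioms, i.e. that condition (i) of Definition~\ref{defi:left-symmetric algebroid} (the Leibniz rule in the second slot) is the one that identifies $\ast_A$ with $\nabla$ and hence makes each $\nabla_X$ a derivation of $\bullet$ via \eqref{eq:prelie-com relation}; condition (ii) then guarantees $C^\infty(M)$-linearity in the first slot, which is implicitly used when we evaluate everything on the coordinate frame. Once the bookkeeping of slots is fixed, the computation is the short linear-algebra argument above and uses semi-simplicity only through the normal form $\partial_i\cdot\partial_j=\delta_{ij}\partial_i$. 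I would also note that flatness (torsion-freeness) is not strictly needed for the conclusion $\nabla_{\partial_i}\partial_j=0$, since the argument already yields this for all index pairs; it is recorded only because it is part of the PreLie-Com manifold structure.
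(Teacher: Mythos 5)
Your proof is correct and follows essentially the same route as the paper's: apply the PreLie-Com identity $\nabla_X(Y\bullet Z)=(\nabla_X Y)\bullet Z+Y\bullet(\nabla_X Z)$ to the canonical coordinate frame, expand in the symbols $\Gamma^k_{ij}$, and compare components, treating the cases $i\neq j$ and $i=j$ separately exactly as in the paper. Your additional remarks (identifying $\ast$ with $\nabla$ via the pre-Lie algebroid Leibniz rules, and noting that flatness is not actually used) are accurate but do not change the argument.
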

\begin{proof}
 Set $ \nabla_{\frac{\partial}{\partial u^i}}{\frac{\partial}{\partial u^j}}=\sum_{k}\Gamma_{ij}^k\frac{\partial}{\partial x_k}$. By \eqref{eq:prelie-com relation}, for any $i,j,k\in \{1,2,\cdots,n\}$, we have
\begin{eqnarray}
\nonumber0&=&\nabla_{\frac{\partial}{\partial u^i}}(\frac{\partial}{\partial u^j}\bullet \frac{\partial}{\partial u^k})-(\nabla_{\frac{\partial}{\partial u^i}}\frac{\partial}{\partial u^j})\bullet \frac{\partial}{\partial u^k}-\frac{\partial}{\partial u^j}\bullet (\nabla_{\frac{\partial}{\partial u^i}}\frac{\partial}{\partial u^k})\\
\label{eq:prelie-com eq1}&=&\sum_l\delta_{jk}\Gamma^l_{ik}\frac{\partial}{\partial x_l}-\Gamma^k_{ij}\frac{\partial}{\partial u^k}-\Gamma^j_{ik}\frac{\partial}{\partial u^j}.
\end{eqnarray}
For $j\neq k$ in \eqref{eq:prelie-com eq1}, we have $\Gamma_{ij}^k=0~(j\neq k)$.  For $j=k$ in \eqref{eq:prelie-com eq1}, we have $\Gamma_{ij}^j=0$. Thus for any $i,j,k\in \{1,2,\cdots,n\}$, we have
 $\Gamma_{ij}^k=0$.
\end{proof}

We give some useful formulas that will be frequently used in the sequel.

\begin{lem}
Let $(A,\ast_A,\cdot_A,a_A)$ be a \preF. Then $\Psi(X,Y,Z)$ defined by \eqref{eq:Com-Prelie relation2} is a tensor field  of type $(3,1)$ and symmetric in all arguments. Furthermore, $ \Psi$ satisfies
  \begin{eqnarray}
   \label{eq:HM pre-Lie function2} \Psi(X\cdot_A Y,Z,W)-\Psi(X,Z,W)\cdot_A Y&=&\Psi(X\cdot_A Z,Y,W)- \Psi(X,Y,W)\cdot_A Z,\\
   \label{eq:HM pre-Lie function3}\Psi(X\cdot_A Y,Z,W)-\Psi(X\cdot_A Z,Y,W)&=&\Psi(W\cdot_A Y, X,Z)-\Psi(W\cdot_A Z, X,Y)
  \end{eqnarray}
 for all $X,Y,Z,W\in \Gamma(A)$.
\end{lem}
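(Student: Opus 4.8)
The plan is to prove this lemma in three stages: first the tensoriality, then the total symmetry, and finally the two identities \eqref{eq:HM pre-Lie function2} and \eqref{eq:HM pre-Lie function3}.

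\textbf{Tensoriality of $\Psi$.} First I would check that $\Psi$ is $C^\infty(M)$-linear in each of its three arguments. For the first slot, using the pre-Lie algebroid axioms $(fX)\ast_A Y=f(X\ast_A Y)$ and $X\ast_A(fY)=f(X\ast_A Y)+a_A(X)(f)Y$ together with the $C^\infty(M)$-bilinearity of $\cdot_A$, the anchor terms $a_A(X)(f)$ appearing in $X\ast_A(f(Y\cdot_A Z))$ and the analogous expansions cancel in pairs, exactly as in the proof of Proposition~\ref{pro:HM relation-tensor}; this gives $\Psi(fX,Y,Z)=f\Psi(X,Y,Z)$. Linearity in the second (equivalently third) slot is similar: expanding $\Psi(X,fY,Z)$ one sees the derivation terms from $X\ast_A(f(Y\cdot_A Z))$ and from $-(X\ast_A(fY))\cdot_A Z$ cancel. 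Hence $\Psi$ is a tensor field of type $(3,1)$. (This step is routine and I would present it tersely.)

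\textbf{Total symmetry.} Symmetry in the first two arguments, $\Psi(X,Y,Z)=\Psi(Y,X,Z)$, is precisely the defining relation \eqref{eq:pseudo-pre-HM1} of a pre-$F$-manifold algebra. For symmetry in the last two arguments, I would expand the difference $\Psi(X,Y,Z)-\Psi(X,Z,Y)$ using the commutativity of $\cdot_A$: the terms $-(X\ast_A Y)\cdot_A Z$ and $-Z\cdot_A(X\ast_A Y)$ coincide, likewise for the $Y\leftrightarrow Z$ pair, so the difference reduces to $X\ast_A(Y\cdot_A Z) - X\ast_A(Z\cdot_A Y)=0$. Wait — that shows $\Psi(X,Y,Z)=\Psi(X,Z,Y)$ directly, so in fact $\Psi$ is already manifestly symmetric in the last two slots from the definition; combined with \eqref{eq:pseudo-pre-HM1} and transpositions generating $S_3$, $\Psi$ is symmetric in all three arguments.

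\textbf{The two identities.} For \eqref{eq:HM pre-Lie function2} I would compute $\Psi(X\cdot_A Y,Z,W)$ and $\Psi(X,Z,W)\cdot_A Y$ and take the difference. Writing $Q:=\Psi(X\cdot_A Y,Z,W)-\Psi(X,Z,W)\cdot_A Y$, one expands both $\Psi$'s by \eqref{eq:Com-Prelie relation2} and uses the Leibniz-type identity $(X\cdot_A Y)\ast_A Z = \Psi(X,Y,Z)+\ldots$; more cleanly, I would use that $\Psi(X\cdot_A Y,Z,W)$ describes the failure of $(X\cdot_A Y)\ast_A -$ to be a derivation, reorganize using associativity and commutativity of $\cdot_A$, and express $Q$ symmetrically in $Y$ and $Z$. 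The key algebraic input is that the term $(X\cdot_A Y)\ast_A W$ can be reexpressed via $X\ast_A(Y\cdot_A W)$ and $\Psi(X,Y,W)$; doing this for both sides and cancelling, one finds $Q$ depends symmetrically on the roles of $Y$ and $Z$ attached to $X$, which is exactly the content of \eqref{eq:HM pre-Lie function2}. Identity \eqref{eq:HM pre-Lie function3} then follows by applying \eqref{eq:HM pre-Lie function2} twice (swapping $Y,Z$) and then using the already-established total symmetry of $\Psi$ to move $X$ into and $W$ out of the distinguished first slot, or alternatively by a direct expansion exploiting that both sides measure the same associator-type obstruction evaluated with $\cdot_A$ reshuffled. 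The main obstacle I anticipate is purely bookkeeping: keeping track of the six-or-so $\Psi$-terms and the Leibniz corrections without sign errors; there is no conceptual difficulty once tensoriality and symmetry are in hand. I expect the cleanest route to \eqref{eq:HM pre-Lie function3} is to derive it formally from \eqref{eq:HM pre-Lie function2} plus symmetry rather than by a fresh computation.
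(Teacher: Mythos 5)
Your tensoriality and symmetry arguments are fine (modulo a small slip: in the first slot there are no anchor terms at all, since $(fX)\ast_A Y=f(X\ast_A Y)$ makes $\Psi(fX,Y,Z)=f\Psi(X,Y,Z)$ immediate; the cancellation of $a_A(X)(f)$-terms is what happens in the second and third slots), and your plan of deducing \eqref{eq:HM pre-Lie function3} from \eqref{eq:HM pre-Lie function2} together with the total symmetry of $\Psi$ is exactly how the paper proceeds. The problem is the central step, the proof of \eqref{eq:HM pre-Lie function2}. Your ``key algebraic input'' --- that $(X\cdot_A Y)\ast_A W$ can be re-expressed through $X\ast_A(Y\cdot_A W)$ and $\Psi(X,Y,W)$, or the ``Leibniz-type identity'' $(X\cdot_A Y)\ast_A Z=\Psi(X,Y,Z)+\ldots$ --- is false. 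By the definition \eqref{eq:Com-Prelie relation2}, $\Psi(X,Y,W)=X\ast_A(Y\cdot_A W)-(X\ast_A Y)\cdot_A W-Y\cdot_A(X\ast_A W)$: it constrains only expressions in which a \emph{single section} acts by $\ast_A$ on a $\cdot_A$-product, and neither \eqref{eq:pseudo-pre-HM1} nor anything else in a \preF\ lets you expand a term of the form $(X\cdot_A Y)\ast_A W$ (left $\ast_A$-multiplication by a $\cdot_A$-product is completely unconstrained). Consequently, if you expand $\Psi(X\cdot_A Y,Z,W)$ literally by the definition, the three terms $(X\cdot_A Y)\ast_A(Z\cdot_A W)$, $(X\cdot_A Y)\ast_A Z$ and $(X\cdot_A Y)\ast_A W$ cannot be simplified, and the computation you sketch stalls.

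The missing idea is to invoke the first-two-slot symmetry \emph{before} expanding: write $\Psi(X\cdot_A Y,Z,W)=\Psi(Z,X\cdot_A Y,W)$ (this is \eqref{eq:pseudo-pre-HM1} applied with the product in one of the first two slots), so that every occurrence of $\ast_A$ has the single section $Z$ acting. Then expand $Z\ast_A(X\cdot_A Y\cdot_A W)$ by regrouping $X\cdot_A Y\cdot_A W$ as $(X\cdot_A W)\cdot_A Y$, and expand $Z\ast_A(X\cdot_A Y)$, each time via the definition of $\Psi$ with $Z$ in the acting slot; collecting terms with the commutativity and associativity of $\cdot_A$ yields $\Psi(X\cdot_A Y,Z,W)-\Psi(X,Z,W)\cdot_A Y=\Psi(X\cdot_A W,Y,Z)-\Psi(X,Y,Z)\cdot_A W$, and \eqref{eq:HM pre-Lie function2} follows by interchanging $Z$ and $W$ and using the symmetry of $\Psi$. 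This is the paper's argument; without that initial swap, the route you describe for \eqref{eq:HM pre-Lie function2} does not go through.
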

\begin{proof}
It is straightforward to check that  $\Psi(X,Y,Z)$ is a tensor field  of type $(3,1)$.  The symmetry of $\Psi(X,Y,Z)$ in the first two arguments is the consequence of \eqref{eq:pseudo-pre-HM1} and in the last two arguments is the consequence of the commutativity of $\cdot_A$.

By the symmetry of $\Psi$, we have
  \begin{eqnarray*}
    \Psi(X\cdot_A Y,Z,W)&=&\Psi(Z,X\cdot_A Y,W)\\
    &=&Z\ast_A ((X\cdot_A Y)\cdot_A W)-(Z\ast_A (X\cdot_A Y))\cdot_A W-(X\cdot_A Y)\cdot_A (Z\ast_A W)\\
    &=&\Psi(Z,X\cdot_A W,Y)+(Z\ast_A(X\cdot_A W))\cdot_A Y+(X\cdot_A W)\cdot_A(Z\ast Y)\\
    &&-\Psi(Z,X,Y)\cdot_A W-(Z\ast_A X)\cdot_A Y\cdot W-X\cdot_A(Z\ast_A Y)\cdot_A W\\
    &&-(X\cdot_A Y)\cdot_A (Z\ast_A W)\\
    &=&\big(Z\ast_A (X\cdot_A W)-(Z\ast_A X)\cdot_A W-X\cdot_A (Z\ast_A W)\big)\cdot_A Y\\
    &&+\Psi(Z,X\cdot_A W,Y)-\Psi(Z,X,Y)\cdot_A W\\
    &=&\Psi(X,Z,W)\cdot_A Y+\Psi(X\cdot_A W,Y,Z)- \Psi(X,Y,Z)\cdot_A W.
  \end{eqnarray*}
  Thus we have
  \begin{equation}\label{eq:F-compatible pre-Lie relation1}
     \Psi(X\cdot_A Y,Z,W)-\Psi(X,Z,W)\cdot_A Y=\Psi(X\cdot_A W,Y,Z)- \Psi(X,Y,Z)\cdot_A W.
  \end{equation}
  Interchanging $Z$ and $W$ in \eqref{eq:F-compatible pre-Lie relation1}, we have
  $$ \Psi(X\cdot_A Y,W,Z)-\Psi(X,W,Z)\cdot_A Y=\Psi(X\cdot_A Z,Y,W)- \Psi(X,Y,W)\cdot_A Z.$$
 By the symmetry of $\Psi$, \eqref{eq:HM pre-Lie function2} follows.

 By \eqref{eq:HM pre-Lie function2}, we have
 \begin{eqnarray*}
  \Psi(X\cdot_A Y,Z,W)-\Psi(X\cdot_A Z,Y,W)&=&\Psi(X,Z,W)\cdot_A Y-\Psi(X,Y,W)\cdot_A Z,\\
  \Psi(W\cdot_A Y,X,Z)-\Psi(W\cdot_A Z,X,Y)&=&\Psi(W,X,Z)\cdot_A Y-\Psi(W,X,Y)\cdot_A Z.
 \end{eqnarray*}
By the symmetry of $\Psi$,  we have
$$\Psi(X,Z,W)\cdot_A Y-\Psi(X,Y,W)\cdot_A Z=\Psi(W,X,Z)\cdot_A Y-\Psi(W,X,Y)\cdot_A Z.$$
Thus \eqref{eq:HM pre-Lie function3} holds.
\end{proof}

\begin{lem}
Let $(A,\ast_A,\cdot_A,a_A)$ be a \preF~with an identity $e$. Then we have
\begin{eqnarray}
\label{eq:identity property2}\Psi(e,X,Y)&=&-(X\ast_A e)\cdot_A Y,\\
\label{eq:identity property1}(X\ast_A e)\cdot_A Y&=&(Y\ast_A e)\cdot_A X,\quad\forall~X,Y\in \Gamma(A).
\end{eqnarray}
\end{lem}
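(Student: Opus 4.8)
The plan is to obtain both identities as immediate consequences of the symmetry properties of $\Psi$ established in the preceding lemma together with the defining relation $e\cdot_A Z = Z$ of the identity. For \eqref{eq:identity property2}, I would first use that $\Psi$ is symmetric in its first two arguments (this is exactly \eqref{eq:pseudo-pre-HM1}) to rewrite $\Psi(e,X,Y) = \Psi(X,e,Y)$, and then expand the latter directly from the definition \eqref{eq:Com-Prelie relation2}:
\[
\Psi(X,e,Y) = X\ast_A(e\cdot_A Y) - (X\ast_A e)\cdot_A Y - e\cdot_A(X\ast_A Y).
\]
Since $e\cdot_A Y = Y$ and $e\cdot_A(X\ast_A Y) = X\ast_A Y$, the first and third terms cancel, leaving $\Psi(e,X,Y) = -(X\ast_A e)\cdot_A Y$, which is \eqref{eq:identity property2}.

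For \eqref{eq:identity property1}, I would invoke the symmetry of $\Psi$ in its last two arguments (the consequence of the commutativity of $\cdot_A$ noted in the previous lemma), so that $\Psi(e,X,Y) = \Psi(e,Y,X)$; applying the identity \eqref{eq:identity property2} just established to each side, with the roles of $X$ and $Y$ interchanged on the right, yields $-(X\ast_A e)\cdot_A Y = -(Y\ast_A e)\cdot_A X$, hence \eqref{eq:identity property1}. (Alternatively, one could read off both sides of this equation as the two expansions of $\Psi(e,X,Y)$ coming from the orderings $X,e,Y$ and $Y,e,X$, bypassing an explicit second use of \eqref{eq:identity property2}.)

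I do not expect a genuine obstacle here. The only point deserving attention is that $e$ is an identity only for the commutative associative multiplication $\cdot_A$ and a priori has no special behaviour with respect to $\ast_A$; accordingly the simplification $e\cdot_A(-)=(-)$ must be applied only to the $\cdot_A$-factors appearing in the expansion of $\Psi(X,e,Y)$, namely $e\cdot_A Y$ and $e\cdot_A(X\ast_A Y)$, and not to any expression of the form $e\ast_A(-)$. Everything else is a one-line computation, so the proof is essentially a direct verification.
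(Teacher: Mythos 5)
Your proof is correct and follows essentially the same route as the paper's (very terse) argument: \eqref{eq:identity property2} by expanding $\Psi(X,e,Y)$ via the symmetry \eqref{eq:pseudo-pre-HM1} and the identity property of $e$ for $\cdot_A$, and \eqref{eq:identity property1} from the symmetry of $\Psi$ in its last two arguments combined with \eqref{eq:identity property2}. Your remark that $e$ has no special behaviour with respect to $\ast_A$ is exactly the right point of care.
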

\begin{proof}
\eqref{eq:identity property2} follows by a direct calculation. By the symmetry of $\Psi$ and \eqref{eq:identity property2}, \eqref{eq:identity property1} follows.
\end{proof}

\begin{lem}\label{lem:preLie-com}
	 Let $(A,\ast_A,\cdot_A,a_A)$ be a  PreLie-Com algebroid with an identity $e$. Then we have
	\begin{equation}\label{eq:main equation1}
		X\ast_A e=0,\quad\forall~X\in\Gamma(A).	
	\end{equation}
\end{lem}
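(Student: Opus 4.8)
The plan is to read \eqref{eq:main equation1} off the identity \eqref{eq:identity property2} that has just been established. Recall that \eqref{eq:identity property2} asserts
$$\Psi(e,X,Y)=-(X\ast_A e)\cdot_A Y,\qquad\forall~X,Y\in\Gamma(A).$$
Its proof uses only that $e$ is the unit for $\cdot_A$ together with the symmetry of $\Psi$ in its first two arguments (so that $\Psi(e,X,Y)=\Psi(X,e,Y)$); both ingredients are available in the present setting because a PreLie-Com algebroid is in particular a \preF, and the symmetry of $\Psi$ was proved for all \preFs\ in the preceding lemma.

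The key observation is then immediate: by the defining condition \eqref{eq:prelie-com relation}, a PreLie-Com algebroid satisfies $\Psi(X,Y,Z)=0$ for all $X,Y,Z\in\Gamma(A)$; in particular $\Psi(e,X,Y)=0$. Substituting this into \eqref{eq:identity property2} gives
$$(X\ast_A e)\cdot_A Y=0,\qquad\forall~X,Y\in\Gamma(A).$$

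It remains only to specialize $Y=e$. Since $e$ is the identity for the commutative associative multiplication $\cdot_A$, we have $(X\ast_A e)\cdot_A e=X\ast_A e$, whence $X\ast_A e=0$, which is \eqref{eq:main equation1}. There is essentially no obstacle here, as the content has been packaged into \eqref{eq:identity property2} and the vanishing of $\Psi$; the only point worth verifying is that the derivation of \eqref{eq:identity property2} is not itself circular, i.e.\ does not secretly invoke \eqref{eq:main equation1} — and indeed it uses only unitality of $\cdot_A$ and the symmetry of $\Psi$, so the argument is sound.
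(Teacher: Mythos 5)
Your proof is correct and is essentially the paper's argument: both exploit the vanishing of $\Psi$ at the unit, the paper by expanding $\Psi(X,e,e)=0$ directly (i.e.\ $X\ast_A(e\cdot_A e)-(X\ast_A e)\cdot_A e-(X\ast_A e)\cdot_A e=0$), you by combining $\Psi\equiv 0$ with \eqref{eq:identity property2} and then specializing $Y=e$, which amounts to the same computation. Your check that \eqref{eq:identity property2} rests only on unitality and the symmetry of $\Psi$, hence is not circular, is also accurate.
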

\begin{proof}
	The conclusion follows from the following relation
 $$X\ast_A (e\cdot_A e)-(X\ast_A e)\cdot_A e-(X\ast_A e)\cdot_A e=0.\qedhere$$
\end{proof}

\begin{ex}
Let $\{u\}$ be a coordinate system of $\Real$. Define an anchor map $a:T\Real\longrightarrow T\Real$, a multiplication $\cdot:\frkX(\Real)\times\frkX(\Real)\longrightarrow\frkX(\Real)$ and a multiplication  $\ast:\frkX(\Real)\times \frkX(\Real)\longrightarrow \frkX(\Real)$   by
\begin{eqnarray*}
a(f\frac{\partial}{\partial u})=uf\frac{\partial}{\partial u},\quad
{f\frac{\partial}{\partial u}\cdot g\frac{\partial}{\partial u}}=fg\frac{\partial}{\partial u}, \quad
{f\frac{\partial}{\partial u}\ast g\frac{\partial}{\partial u}}=uf\frac{\partial g}{\partial u} \frac{\partial}{\partial u}, 
\end{eqnarray*}	
for all $f,g\in C^{\infty}(\Real)$. Then $(T\Real,\ast,\cdot,a)$ is a PreLie-Com algebroid with the identity $\frac{\partial}{\partial u}$. Furthermore, $(T\Real,[-,-],\cdot,a)$ is an $F$-algebroid with the identity $\frac{\partial}{\partial u}$, where $[-,-]$ is given by
$$[{f\frac{\partial}{\partial u},g\frac{\partial}{\partial u}}]=u(f\frac{\partial g}{\partial u}-g\frac{\partial f}{\partial u})\frac{\partial }{\partial u}.$$
\end{ex}

\begin{defi}
Let $(\frkg,\ast,\cdot)$ be a pre-$F$-manifold algebra (PreLie-Com algebra). An {\bf action}
of $\frkg$ on a manifold $M$ is a linear map $\rho:\frkg\longrightarrow\frkX(M)$  from $\g$ to the space of vector fields on $M$,
such that for all $x,y\in\frkg$, we have
 $$
 \rho(x\ast y-y\ast x)=[\rho(x),\rho(y)]_{\frkX(M)}.
 $$
\end{defi}

 Given an action of a pre-$F$-manifold algebra (PreLie-Com algebra) $\frkg$ on $M$, let $A=M\times
\g$ be the trivial bundle. Define an anchor map $a_\rho:A\longrightarrow TM$, a multiplication $\cdot_\rho:\Gamma(A)\times \Gamma(A)\longrightarrow \Gamma(A)$ and a bracket $\ast_\rho:\Gamma(A)\times \Gamma(A)\longrightarrow \Gamma(A)$   by
\begin{eqnarray}
a_\rho(m,u)&=&\rho(u)_m,\quad \forall ~m\in M, u\in\g,\label{action P1}\\
{X\cdot_\rho Y}&=&X\cdot Y, \label{action P2}\\
{X\ast_\rho Y}&=&\huaL_{\rho(X)}Y+X\ast Y,\quad \forall~X,Y\in\Gamma(A), \label{action P3}
\end{eqnarray}
where  $X\cdot Y$ and $X\ast Y$ are the pointwise $C^{\infty}(M)$-bilinear multiplication and bracket, respectively.

\begin{pro}
With the above notations, $(A=M\times\frkg,\ast_\rho,\cdot_\rho,a_\rho)$ is a \preF~(PreLie-Com algebroid), which we call  an {\bf action \preF}~(action PreLie-Com algebroid),
where  $\ast_\rho$, $\cdot_\rho $ and $a_\rho$ are given by $(\ref{action P3})$, $(\ref{action P2})$
and $(\ref{action P1})$, respectively. 
\end{pro}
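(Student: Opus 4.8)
The plan is to mirror the proof of Proposition~\ref{pro:action F-algebroid}: verify the three defining ingredients (a commutative associative algebroid, a pre-Lie algebroid, and the pre-$F$-manifold algebra condition) and, in each case, reduce the identity to be checked to the constant sections $u\in\g\subset\Gamma(A)$ by a tensoriality argument, where it collapses to the corresponding pointwise identity in $\g$. First, $\cdot_\rho$ is simply the pointwise multiplication, so $(A,\cdot_\rho)$ is a commutative associative algebroid.

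Second, I would check that $(A,\ast_\rho,a_\rho)$ is a pre-Lie algebroid. The two Leibniz rules (i) and (ii) of Definition~\ref{defi:left-symmetric algebroid} are immediate: $\huaL_{\rho(X)}$ acts on the $\g$-valued function $Y$ as a derivation over the vector field $a_\rho(X)=\rho(X)$ and is $\CWM$-linear in its vector-field argument, while $\ast$ is pointwise $\CWM$-bilinear, whence $X\ast_\rho(fY)=f(X\ast_\rho Y)+a_\rho(X)(f)Y$ and $(fX)\ast_\rho Y=f(X\ast_\rho Y)$. For the left-symmetry axiom, put
$$T(X,Y,Z):=\big(X\ast_\rho(Y\ast_\rho Z)-(X\ast_\rho Y)\ast_\rho Z\big)-\big(Y\ast_\rho(X\ast_\rho Z)-(Y\ast_\rho X)\ast_\rho Z\big).$$
Using the Leibniz rules just established, $T$ is $\CWM$-linear in its first two slots; moreover the skew-symmetrization $X\ast_\rho Y-Y\ast_\rho X$ coincides with the action $F$-algebroid bracket $[X,Y]_\rho$ of Proposition~\ref{pro:action F-algebroid} (the hypothesis $\rho(x\ast y-y\ast x)=[\rho(x),\rho(y)]_{\frkX(M)}$ being precisely the statement that $\rho$ is an action of the sub-adjacent $F$-manifold algebra of $\g$), so by that proposition $(A,[-,-]_\rho,a_\rho)$ is a Lie algebroid and hence $a_\rho$ is a morphism of brackets; this forces $T$ to be $\CWM$-linear in the third slot as well. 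Therefore $T$ is a tensor field, and it suffices to evaluate it on constant sections $u_1,u_2,u_3\in\g$. There $\huaL_{\rho(u_i)}$ annihilates constant $\g$-valued functions, so $u_i\ast_\rho u_j=u_i\ast u_j$ and $T(u_1,u_2,u_3)$ is precisely the left-symmetrizer of the pre-Lie algebra $(\g,\ast)$, which vanishes. Hence $T\equiv 0$ and $(A,\ast_\rho,a_\rho)$ is a pre-Lie algebroid.

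Third, for the pre-$F$-manifold algebra condition: since $(A,\ast_\rho,a_\rho)$ is a pre-Lie algebroid and $(A,\cdot_\rho)$ a commutative associative algebroid, the map $\Psi_\rho$ defined by \eqref{eq:Com-Prelie relation2} is $\CWM$-multilinear, i.e.\ a tensor field of type $(3,1)$ (this uses only those two structures, exactly as in the tensoriality part of the lemma on $\Psi$, or as in Proposition~\ref{pro:HM relation-tensor}). Evaluating on constants again kills the $\huaL_{\rho(\cdot)}$-terms, so $\Psi_\rho(u_1,u_2,u_3)$ equals $\Psi(u_1,u_2,u_3)$ computed in $\g$, which is symmetric in $u_1,u_2$ because $\g$ is a pre-$F$-manifold algebra (and vanishes identically when $\g$ is a PreLie-Com algebra). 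By tensoriality the same holds for arbitrary sections, so $(\Gamma(A),\ast_\rho,\cdot_\rho)$ is a pre-$F$-manifold algebra (resp.\ a PreLie-Com algebra), and therefore $(A=M\times\g,\ast_\rho,\cdot_\rho,a_\rho)$ is a \preF\ (resp.\ a PreLie-Com algebroid). The main obstacle is the tensoriality of $T$ — equivalently the $\CWM$-linearity of the left-symmetrizer in its third argument — which is exactly the point where the action identity $\rho(x\ast y-y\ast x)=[\rho(x),\rho(y)]_{\frkX(M)}$ is used; once that is in hand, everything reduces to the pointwise algebra of $\g$, just as in Proposition~\ref{pro:action F-algebroid}.
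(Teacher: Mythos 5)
Your proof is correct and follows essentially the same route as the paper, which simply says the statement "follows by a similar proof of Proposition \ref{pro:action F-algebroid}": reduce everything to tensoriality plus evaluation on constant sections of $M\times\g$, where the identities collapse to the pre-Lie and pre-$F$-manifold algebra axioms of $\g$. You in fact supply more detail than the paper does — in particular pinpointing that the action identity is what makes the left-symmetrizer tensorial in its third argument — but this is an elaboration of, not a departure from, the paper's argument.
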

\begin{proof}
It follows by a similar proof of Proposition \ref{pro:action F-algebroid}.
\end{proof}

It is obvious that the sub-adjacent $F$-algebroid of the action \preF~is an action $F$-algebroid.

\begin{ex}{\rm
Consider the PreLie-Com algebra $(\frkD_n,\cdot,\ast)$ given by Example \ref{ex:poly}. Let $(t_1,\cdots,t_n)$ be the canonical coordinate systems on $\Real^n$. Define a map $\rho:\frkD_n\longrightarrow \frkX(\Real^n)$ by
$$\rho(p(u^1,\cdots,u^n)\partial_{u^i})=p(t_1,\cdots,t_n)\frac{\partial}{\partial t_i},\quad i\in\{1,2,\cdots,n\}.$$
It is straightforward to check that $\rho$ is an action of the PreLie-Com algebra $\frkD_n$ on $\Real^n$. Thus $(A=\Real^n\times\frkD_n,\ast_\rho,\cdot_\rho,a_\rho)$ is a PreLie-Com algebroid,
where  $\ast_\rho$, $\cdot_\rho $ and $a_\rho$ are given by
\begin{eqnarray*}
a_\rho(m,p(u^1,u^2,\cdots,u^n)\partial_{u^i})&=&p(m)\frac{\partial}{\partial t_i}\mid_m,\quad \forall ~m\in \Real^n,\\
{(f\otimes (p\partial_{u^i}))\cdot_\rho (g\otimes (q\partial_{u^j}))}&=&(fg)\otimes (pq\delta_{ij}\partial_{u^i} ), \\
{(f\otimes (p\partial_{u^i}))\ast_\rho (g\otimes (q\partial_{u^j}))}&=&fp\frac{\partial g}{\partial t_i}\otimes (q\partial_{u^j})+(fg)\otimes p\partial_{u^i}(q)\partial_{u^j},
\end{eqnarray*}
where $f,g\in C^{\infty}(\Real^n),p,q\in\Real[u^1,\cdots,u^n]$.}
 \end{ex}

\emptycomment{\begin{ex}{\rm
Consider the PreLie-Com algebra $(\frkD_2,\cdot,\ast)$ given by Example \ref{ex:poly2}. Let $(t_1,t_2)$ be the canonical coordinate systems on $\Real^2$. There is an action $\rho:\frkD_2\longrightarrow \frkX(\Real^2)$ of the PreLie-Com algebra $\frkD_2$ on $\Real^2$ given by
$$\rho(p(u^1,x_2)\partial_{u^i})=p(t_1,t_2)\frac{\partial}{\partial t_i},\quad i\in\{1,2\}.$$
Then $(A=\Real^2\times\frkD_2,\ast_\rho,\cdot_\rho,a_\rho)$ is a PreLie-Com algebroid,
where  $\ast_\rho$, $\cdot_\rho $ and $a_\rho$ are given by
\begin{eqnarray*}
a_\rho(m,p(u^1,x_2)\partial_{u^i})&=&p(m)\frac{\partial}{\partial t_i}\mid_m,\quad \forall ~m\in \Real^2,\\
{(f\otimes p\partial_{u^1})\cdot_\rho (g\otimes q\partial_{u^i})}&=&(fg)\otimes (pq\partial_{u^i} ), \\
{(f\otimes p\partial_{x_2})\cdot_\rho (g\otimes q\partial_{x_2})}&=&0,\\
{(f\otimes p\partial_{u^i})\ast_\rho (g\otimes q\partial_{u^j})}&=&fp\frac{\partial g}{\partial t_i}\otimes q\partial_{u^j}+(fg)\otimes p\partial_{u^i}(q)\partial_{u^j},
\end{eqnarray*}
where $f,g\in C^{\infty}(\Real^2),p,q\in\Real[u^1,x_2],~i,j\in\{1,2\}$.}
 \end{ex}}

\subsection{Eventual identities of \preFs}
\begin{defi}
 Let $(A,\ast_A,\cdot_A,a_A)$ be a \preF~with an identity $e$. A section $\huaE\in\Gamma(A)$  is called a {\bf pseudo-eventual identity} on $A$ if the following equalities hold:
 \begin{eqnarray}
  \label{eq:Eventua31} \Psi(\huaE, X,Y)&=& -(\huaE\ast_A e)\cdot_A X\cdot_A Y,\\
\label{eq:Eventua32} (X\ast_A \huaE)\cdot_A Y&=& (Y\ast_A \huaE)\cdot_A X,\quad\forall~X,Y\in \Gamma(A).
 \end{eqnarray}

 A pseudo-eventual identity $\huaE$ on the \preF~with an identity $e$ is called an {\bf eventual identity} if it is invertible.
\end{defi}




\begin{pro}\label{pro:eventual identity F-pre}
  Let $(A,\ast_A,\cdot_A,e,a_A)$ be a \preF~with an identity $e$. If $\huaE\in\Gamma(A)$ is a pseudo-eventual identity on $A$, then $\huaE\in\Gamma(A)$ is a pseudo-eventual identity on its sub-adjacent $F$-algebroid $A^c$.
\end{pro}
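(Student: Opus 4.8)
The plan is to compute $P_{\huaE}(X,Y)$ directly from its definition on the sub-adjacent $F$-algebroid $A^c=(A,[-,-]_A,\cdot_A,a_A)$ of Proposition \ref{pro:preLie-Com construction} and to reorganize it so that the tensor $\Psi$ appears. Recall that $e$ is still the identity of $A^c$ (the multiplication $\cdot_A$ is unchanged), and that on $A^c$ the bracket is $[X,Y]_A=X\ast_A Y-Y\ast_A X$. Expanding
$$P_{\huaE}(X,Y)=[\huaE,X\cdot_A Y]_A-[\huaE,X]_A\cdot_A Y-X\cdot_A[\huaE,Y]_A$$
in terms of $\ast_A$ and $\cdot_A$ produces six terms. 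The three in which $\huaE$ sits on the left of $\ast_A$ assemble, by the definition \eqref{eq:Com-Prelie relation2}, into $\Psi(\huaE,X,Y)$; the remaining three give
$$R(X,Y):=(X\ast_A\huaE)\cdot_A Y+X\cdot_A(Y\ast_A\huaE)-(X\cdot_A Y)\ast_A\huaE,$$
so that $P_{\huaE}(X,Y)=\Psi(\huaE,X,Y)+R(X,Y)$.

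Next I would simplify $R$ using the two defining identities of a pseudo-eventual identity on the \preF. By \eqref{eq:Eventua32} and commutativity of $\cdot_A$ we have $X\cdot_A(Y\ast_A\huaE)=(Y\ast_A\huaE)\cdot_A X=(X\ast_A\huaE)\cdot_A Y$, hence $R(X,Y)=2(X\ast_A\huaE)\cdot_A Y-(X\cdot_A Y)\ast_A\huaE$. The key manoeuvre is to feed $e$ into \eqref{eq:Eventua32}: taking its second slot to be $e$ and using $e\cdot_A Z=Z$ gives $X\ast_A\huaE=(e\ast_A\huaE)\cdot_A X$, and applying this with $X$ replaced by $X\cdot_A Y$ gives $(X\cdot_A Y)\ast_A\huaE=(e\ast_A\huaE)\cdot_A X\cdot_A Y$. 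Substituting both, $R(X,Y)=2(e\ast_A\huaE)\cdot_A X\cdot_A Y-(e\ast_A\huaE)\cdot_A X\cdot_A Y=(e\ast_A\huaE)\cdot_A X\cdot_A Y$.

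Finally I would invoke \eqref{eq:Eventua31} to rewrite $\Psi(\huaE,X,Y)=-(\huaE\ast_A e)\cdot_A X\cdot_A Y$, whence
$$P_{\huaE}(X,Y)=(e\ast_A\huaE-\huaE\ast_A e)\cdot_A X\cdot_A Y=[e,\huaE]_A\cdot_A X\cdot_A Y,$$
which is exactly the condition \eqref{eq:Eventual1} for $\huaE$ to be a pseudo-eventual identity on $A^c$. There is no genuine obstacle: the statement is a bookkeeping computation, and the only two points needing a little care are recognizing the $\Psi(\huaE,X,Y)$ pattern hidden inside $P_{\huaE}$, and the "evaluate at $e$" trick that upgrades \eqref{eq:Eventua32} to the explicit formula $X\ast_A\huaE=(e\ast_A\huaE)\cdot_A X$.
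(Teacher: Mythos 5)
Your argument is correct and is essentially the paper's own proof: both expand $P_{\huaE}(X,Y)$ via $[X,Y]_A=X\ast_A Y-Y\ast_A X$, isolate the pattern $\Psi(\huaE,X,Y)$ and dispose of it with \eqref{eq:Eventua31}, and then kill the remaining terms using \eqref{eq:Eventua32} together with its specialization at $e$ (which gives $X\ast_A\huaE=(e\ast_A\huaE)\cdot_A X$ and hence $(X\cdot_A Y)\ast_A\huaE=(e\ast_A\huaE)\cdot_A X\cdot_A Y$). The only cosmetic difference is that you show $P_{\huaE}(X,Y)=[e,\huaE]_A\cdot_A X\cdot_A Y$ directly, while the paper verifies that the difference $P_{\huaE}(X,Y)-[e,\huaE]_A\cdot_A X\cdot_A Y$ vanishes.
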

\begin{proof}
  By a direct calculation, for $X,Y\in\Gamma(A)$, we have
  \begin{eqnarray*}
  &&P_\huaE(X,Y)-[e,\huaE]_A\cdot_A X\cdot_A Y\\
  &=&\huaE\ast_A (X\cdot_A Y)-(X\cdot_A Y)\ast_A \huaE-(\huaE\ast_A X)\cdot_A Y+( X\ast_A\huaE)\cdot_A Y\\
  &&-(\huaE\ast_A Y)\cdot_A X+( Y\ast_A\huaE)\cdot_A X-(e\ast_A \huaE)\cdot_A X\cdot_A Y+(\huaE\ast_A e)\cdot_A X\cdot_A Y\\
  &=&\Psi(\huaE, X,Y)+(\huaE\ast_A e)\cdot_A X\cdot_A Y-(X\cdot_A Y)\ast_A \huaE+( X\ast_A\huaE)\cdot_A Y\\
  &&+( Y\ast_A\huaE)\cdot_A X-(e\ast_A \huaE)\cdot_A X\cdot_A Y.
  \end{eqnarray*}
  By \eqref{eq:Eventua31} and \eqref{eq:Eventua32}, we have
 \begin{eqnarray*}\label{eq:eventual property}
&&P_\huaE(X,Y)-[e,\huaE]_A\cdot_A X\cdot_A Y\\
&=&-(X\cdot_A Y)\ast_A \huaE+( X\ast_A\huaE)\cdot_A Y+( Y\ast_A\huaE)\cdot_A X-(e\ast_A \huaE)\cdot_A X\cdot_A Y\\
 &=&-(e\ast_A \huaE)\cdot_A X\cdot_A Y+( X\ast_A\huaE)\cdot_A Y+( X\ast_A\huaE)\cdot_A Y-(e\ast_A \huaE)\cdot_A X\cdot_A Y\\
 &=&2( X\ast_A\huaE)\cdot_A Y-2(e\ast_A \huaE)\cdot_A X\cdot_A Y\\
 &=&2( X\ast_A\huaE)\cdot_A Y-2( X\ast_A\huaE)\cdot_A Y
 \\&=&0.    \end{eqnarray*}
  Thus $\huaE\in\Gamma(A)$ is a pseudo-eventual identity on its sub-adjacent $F$-algebroid $A^c$.
\end{proof}

By Lemma \ref{lem:preLie-com}, we have
\begin{pro}\label{pro:pre-Com identities}
   Let $(A,\ast_A,\cdot_A,a_A)$ be a \preF~with an identity $e$ and $\huaE$ an invertible element in $\Gamma(A)$. If $(A,\ast_A,\cdot_A,a_A)$ is a PreLie-Com algebroid, then $\huaE$ is an eventual identity on $A$ if and only if \eqref{eq:Eventua32} holds.
\end{pro}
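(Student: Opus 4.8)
The plan is to reduce the two defining conditions of a pseudo-eventual identity, namely \eqref{eq:Eventua31} and \eqref{eq:Eventua32}, to just the second one in the PreLie-Com case, and then use invertibility to upgrade ``pseudo-eventual identity'' to ``eventual identity''. So the whole statement should come out as bookkeeping once two earlier facts are in place.

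First I would observe that in a PreLie-Com algebroid the tensor $\Psi$ vanishes identically by \eqref{eq:prelie-com relation}, so the left-hand side of \eqref{eq:Eventua31} is automatically zero. Next I would invoke Lemma \ref{lem:preLie-com}, which gives $X\ast_A e=0$ for every $X\in\Gamma(A)$; applying it with $X=\huaE$ yields $\huaE\ast_A e=0$, so the right-hand side $-(\huaE\ast_A e)\cdot_A X\cdot_A Y$ of \eqref{eq:Eventua31} is also zero. Hence \eqref{eq:Eventua31} holds for \emph{any} section $\huaE$ in a PreLie-Com algebroid with identity $e$, and therefore $\huaE$ is a pseudo-eventual identity on $A$ if and only if \eqref{eq:Eventua32} holds.

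Finally, since $\huaE$ is assumed to be invertible, the notions of pseudo-eventual identity and eventual identity coincide for $\huaE$ by definition. Combining this with the previous paragraph gives the claimed equivalence: $\huaE$ is an eventual identity on $A$ if and only if \eqref{eq:Eventua32} holds. I do not anticipate any genuine obstacle here; the only point requiring a little care is checking that Lemma \ref{lem:preLie-com} is being applied to the PreLie-Com algebroid $A$ with its given identity $e$, which is exactly the hypothesis, and that the invertibility of $\huaE$ is what makes the ``pseudo'' prefix disappear.
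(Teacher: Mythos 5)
Your proposal is correct and follows essentially the same route as the paper, whose proof of this proposition is precisely the appeal to Lemma \ref{lem:preLie-com}: in a PreLie-Com algebroid $\Psi\equiv 0$ by \eqref{eq:prelie-com relation} and $\huaE\ast_A e=0$ by that lemma, so \eqref{eq:Eventua31} is automatic and the pseudo-eventual-identity condition reduces to \eqref{eq:Eventua32}, with invertibility of $\huaE$ turning ``pseudo-eventual'' into ``eventual''. No gaps; you have merely written out explicitly what the paper leaves implicit.
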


\begin{lem}\label{lem:important}
 Let $(A,\ast_A,\cdot_A,e,a_A)$ be a \preF. Then for $\huaE\in\Gamma(A)$, \eqref{eq:Eventua31} holds if and only if 	
 \begin{equation}\label{eq:pre-Lie eventual1}
 \Psi(X,\huaE\cdot_A Y,Z)=\Psi(Y,\huaE\cdot_A X,Z),\quad\forall~X,Y,Z\in \Gamma(A).\end{equation}
\end{lem}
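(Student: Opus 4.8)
The plan is to prove the equivalence by writing both sides in terms of the defining trilinear map $\Psi$ and the commutative multiplication $\cdot_A$, and exploiting the symmetry of $\Psi$ in all three arguments (established in the preceding lemma). First I would observe that equation \eqref{eq:Eventua31}, namely $\Psi(\huaE,X,Y)=-(\huaE\ast_A e)\cdot_A X\cdot_A Y$, already involves a product of three sections, so it is natural to compare it with the identity $\Psi(e,X,Y)=-(X\ast_A e)\cdot_A Y$ from \eqref{eq:identity property2}; combined with \eqref{eq:identity property1} this controls how $\Psi$ interacts with the identity $e$. The guiding computation is to expand $\Psi(X,\huaE\cdot_A Y,Z)$ using \eqref{eq:HM pre-Lie function2} (or \eqref{eq:F-compatible pre-Lie relation1}) with the roles of the arguments chosen so that the factor $\huaE$ gets peeled off and replaced by a $\Psi$-term evaluated at $\huaE$ together with correction terms that are products with $\cdot_A$.

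Concretely, the key step is: by the symmetry of $\Psi$ and \eqref{eq:HM pre-Lie function2} applied to $\Psi((\huaE\cdot_A Y),X,Z)$, one gets
\[
\Psi(\huaE\cdot_A Y, X, Z)-\Psi(\huaE,X,Z)\cdot_A Y=\Psi(\huaE\cdot_A X,Y,Z)-\Psi(\huaE,Y,Z)\cdot_A X.
\]
Using \eqref{eq:Eventua31} to substitute $\Psi(\huaE,X,Z)=-(\huaE\ast_A e)\cdot_A X\cdot_A Z$ and $\Psi(\huaE,Y,Z)=-(\huaE\ast_A e)\cdot_A Y\cdot_A Z$, the two correction terms $\Psi(\huaE,X,Z)\cdot_A Y$ and $\Psi(\huaE,Y,Z)\cdot_A X$ become equal — both are $-(\huaE\ast_A e)\cdot_A X\cdot_A Y\cdot_A Z$ — and therefore cancel, leaving exactly $\Psi(\huaE\cdot_A Y,X,Z)=\Psi(\huaE\cdot_A X,Y,Z)$, which after applying symmetry of $\Psi$ (swapping the first two slots) is precisely \eqref{eq:pre-Lie eventual1}. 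For the converse, I would run the same chain of identities in reverse: assume \eqref{eq:pre-Lie eventual1}, feed it back into the displayed consequence of \eqref{eq:HM pre-Lie function2}, and conclude that $\Psi(\huaE,X,Z)\cdot_A Y=\Psi(\huaE,Y,Z)\cdot_A X$ for all $X,Y,Z$; then specializing (e.g. taking $Y=e$ and using \eqref{eq:identity property2}, \eqref{eq:identity property1}) pins down $\Psi(\huaE,X,Z)=-(\huaE\ast_A e)\cdot_A X\cdot_A Z$, i.e. \eqref{eq:Eventua31}.

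The main obstacle I anticipate is bookkeeping in the converse direction: knowing $\Psi(\huaE,X,Z)\cdot_A Y$ is symmetric under $X\leftrightarrow Y$ is weaker than \eqref{eq:Eventua31} itself, so one must use the identity section $e$ and the formulas \eqref{eq:identity property2}–\eqref{eq:identity property1} to actually solve for $\Psi(\huaE,X,Z)$, and care is needed because $\cdot_A$ need not have cancellation. The right move is to set $Y=e$, giving $\Psi(\huaE,X,Z)=\Psi(\huaE,e,Z)\cdot_A X=\Psi(e,\huaE,Z)\cdot_A X$ by symmetry, and then evaluate $\Psi(e,\huaE,Z)$ via \eqref{eq:identity property2} as $-(\huaE\ast_A e)\cdot_A Z$; multiplying by $X$ recovers \eqref{eq:Eventua31}. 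Everything else is a routine expansion of the definition of $\Psi$ together with commutativity and associativity of $\cdot_A$, so I would present the forward direction in full and indicate the converse as "reversing the argument, using $Y=e$."
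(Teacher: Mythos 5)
Your proposal is correct and follows essentially the same route as the paper: both directions rest on the instance of \eqref{eq:HM pre-Lie function2} with one argument equal to $\huaE$, the full symmetry of $\Psi$, and the specialization $Y=e$ combined with \eqref{eq:identity property2} to evaluate $\Psi(e,\huaE,Z)=-(\huaE\ast_A e)\cdot_A Z$. The only cosmetic difference is that you make the use of \eqref{eq:identity property2} explicit where the paper leaves it implicit.
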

\begin{proof}
Assume that \eqref{eq:pre-Lie eventual1} holds. By \eqref{eq:HM pre-Lie function2}, we have
\begin{eqnarray}\label{eq:pre-Lie eventual2}
 \Psi(\huaE,X,Z)\cdot_A Y- \Psi(\huaE,Y,Z)\cdot_A X=\Psi(X,\huaE\cdot_A Y,Z)-\Psi(Y,\huaE\cdot_A X,Z)=0.
\end{eqnarray}
Taking $Y=e$ in \eqref{eq:pre-Lie eventual2}, we have
$$\Psi(\huaE,X,Z)=-(\huaE\ast_A e)\cdot_A X\cdot_A Z.$$
This implies that \eqref{eq:Eventua31} holds.	

Conversely, if \eqref{eq:Eventua31} holds, then we have
\begin{eqnarray*}
 \Psi(\huaE,X,Z)\cdot_A Y- \Psi(\huaE,Y,Z)\cdot_A X=-(\huaE\ast_A e)\cdot_A X\cdot_A Z\cdot_A Y+(\huaE\ast_A e)\cdot_A Y\cdot_A Z\cdot_A X=0.\end{eqnarray*}
By \eqref{eq:HM pre-Lie function2}, we have
$$\Psi(X,\huaE\cdot_A Y,Z)=\Psi(Y,\huaE\cdot_A X,Z).$$
This implies that \eqref{eq:pre-Lie eventual1} holds.
	\end{proof}

Denote the set of all pseudo-eventual identities on a \preF~ $(A,\ast_A,\cdot_A,a_A)$ with an identity $e$ by $\frkE(A)$.
\begin{pro}\label{lem:Eventual1}
Let $(A,\ast_A,\cdot_A,a_A)$ be a \preF~ with an identity $e$. Then for any   $\huaE_1,\huaE_2\in\frkE(A)$, $\huaE_1\cdot_A \huaE_2\in\frkE(A).$ Furthermore, if $\huaE$ is an eventual identity on $A$, then $\huaE^{-1}$ is also an eventual identity on $A$.
\end{pro}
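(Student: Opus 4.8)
The statement is the pre-$F$-algebroid analogue of the first part of Proposition \ref{pro:property of eventual identity}, so I would mirror that proof, but now working with the two defining conditions \eqref{eq:Eventua31} and \eqref{eq:Eventua32} of a pseudo-eventual identity rather than the single Hertling-Manin-type identity \eqref{eq:Eventual1}. The plan is to treat the two claims separately: first the closure of $\frkE(A)$ under $\cdot_A$, then the statement that $\huaE$ invertible implies $\huaE^{-1}\in\frkE(A)$.

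For the product $\huaE_1\cdot_A\huaE_2$ I would verify the two conditions in turn. For \eqref{eq:Eventua31} the natural tool is the symmetry and tensoriality of $\Psi$ together with the "Leibniz-type" relations \eqref{eq:HM pre-Lie function2}--\eqref{eq:HM pre-Lie function3}; alternatively, and probably more cleanly, I would use Lemma \ref{lem:important}, which says \eqref{eq:Eventua31} for a section $\huaF$ is equivalent to $\Psi(X,\huaF\cdot_A Y,Z)=\Psi(Y,\huaF\cdot_A X,Z)$ for all $X,Y,Z$. Writing this equivalent form for $\huaF=\huaE_1$ and $\huaF=\huaE_2$, one wants to deduce it for $\huaF=\huaE_1\cdot_A\huaE_2$: since $\Psi$ is a $(3,1)$-tensor, one can push $\huaE_2$ (say) through as a $C^\infty(M)$-linear factor, apply the $\huaE_1$-identity, then push $\huaE_1$ through and apply the $\huaE_2$-identity, using associativity and commutativity of $\cdot_A$ to match terms. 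For \eqref{eq:Eventua32}, I would compute $(X\ast_A(\huaE_1\cdot_A\huaE_2))\cdot_A Y$ by expanding via the pre-Lie algebroid Leibniz rule $X\ast_A(\huaE_1\cdot_A\huaE_2)=\Psi(X,\huaE_1,\huaE_2)+(X\ast_A\huaE_1)\cdot_A\huaE_2+\huaE_1\cdot_A(X\ast_A\huaE_2)$; the $\Psi$-term is symmetric in all arguments hence contributes symmetrically in $X,Y$, and the remaining two terms are handled by applying \eqref{eq:Eventua32} for $\huaE_1$ and for $\huaE_2$ to swap $X$ and $Y$, again using commutativity and associativity of $\cdot_A$.

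For the inverse, I would imitate the last paragraph of the proof of Proposition \ref{pro:property of eventual identity}. From $\huaE\cdot_A\huaE^{-1}=e$ and the fact (Proposition \ref{pro:preLie-Com construction} plus Proposition \ref{pro:eventual identity F-pre}, or directly from \eqref{eq:identity property2}) that $e$ satisfies the pseudo-eventual identity conditions trivially, I would apply the bilinear, tensorial, Leibniz-type identities to the product $\huaE\cdot_A\huaE^{-1}$ to solve for the $\huaE^{-1}$-quantities in terms of the $\huaE$-quantities and powers of $\huaE^{-1}$. Concretely: using \eqref{eq:HM pre-Lie function2} or the closure argument just established applied to $e=\huaE\cdot_A\huaE^{-1}$, one gets an expression for $\Psi(\huaE^{-1},X,Y)$, which after multiplying by appropriate powers of $\huaE^{-1}$ and invoking the relations for $\huaE$ reduces to $-(\huaE^{-1}\ast_A e)\cdot_A X\cdot_A Y$; and similarly one checks \eqref{eq:Eventua32} for $\huaE^{-1}$ by expanding $(X\ast_A(\huaE^{-1}\cdot_A\huaE))\cdot_A Y=(X\ast_A e)\cdot_A Y$ and using \eqref{eq:identity property1} together with \eqref{eq:Eventua32} for $\huaE$. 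The identity $\huaE^{-1}$ is then clearly invertible with inverse $\huaE$, so it is again an eventual identity.

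**Main obstacle.** The routine part is the bookkeeping in the product computations; the genuinely delicate step is the inverse: one must be careful that \eqref{eq:Eventua31} is \emph{two} conditions packaged via Lemma \ref{lem:important}, and extracting the $\huaE^{-1}$-version requires simultaneously manipulating $\Psi(\huaE^{-1},-,-)$ and $\huaE^{-1}\ast_A e$, with the auxiliary relation $\huaE^{-1}\ast_A e = -\huaE^{-2}\cdot_A(\huaE\ast_A e)$ (obtained from $0=e\ast_A e$ and the Leibniz rule, paralleling the $[e,\huaE]_A\cdot_A\huaE^{-2}=-[e,\huaE^{-1}]_A$ computation in the $F$-algebroid case) playing the role that $[e,\huaE]_A\cdot_A\huaE^{-2}=-[e,\huaE^{-1}]_A$ played before. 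I expect that once this auxiliary identity is in hand, both conditions for $\huaE^{-1}$ fall out by the same substitution trick as in Proposition \ref{pro:property of eventual identity}, and the proof can be written in a few lines per claim, with the bulk of the verification described as "a straightforward calculation using \eqref{eq:HM pre-Lie function2}, Lemma \ref{lem:important}, and the commutativity and associativity of $\cdot_A$."
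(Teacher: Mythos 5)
Your treatment of the product $\huaE_1\cdot_A\huaE_2$ follows the paper's route (Lemma \ref{lem:important} plus the symmetry of $\Psi$ for \eqref{eq:Eventua31}; expansion of $X\ast_A(\huaE_1\cdot_A\huaE_2)$ for \eqref{eq:Eventua32}), but one justification is wrong as stated: you dispose of the cross-term by saying the "$\Psi$-term is symmetric in all arguments hence contributes symmetrically in $X,Y$". Symmetry of $\Psi$ only permutes its three slots; it cannot exchange the argument $X$ sitting \emph{inside} $\Psi(X,\huaE_1,\huaE_2)$ with the factor $Y$ multiplying it from \emph{outside}. The needed identity $\Psi(X,\huaE_1,\huaE_2)\cdot_A Y=\Psi(Y,\huaE_1,\huaE_2)\cdot_A X$ is not automatic: in the paper it is obtained from the exchange relation \eqref{eq:HM pre-Lie function2} combined with \eqref{eq:pre-Lie eventual1} for $\huaE_1$, i.e.\ it is precisely where the pseudo-eventual-identity hypothesis enters. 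The same oversight recurs in your sketch for \eqref{eq:Eventua32} of $\huaE^{-1}$, where the analogous cross-term $\Psi(X,\huaE,\huaE^{-1})\cdot_A Y-\Psi(Y,\huaE,\huaE^{-1})\cdot_A X$ (the paper's \eqref{eq:Eventual property2}) again needs \eqref{eq:HM pre-Lie function2} together with \eqref{eq:pre-Lie eventual1}, not just symmetry, \eqref{eq:identity property1} and \eqref{eq:Eventua32} for $\huaE$.

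The more serious problem is your mechanism for the inverse. The auxiliary identity you propose, $\huaE^{-1}\ast_A e=-\huaE^{-2}\cdot_A(\huaE\ast_A e)$, is false in general and its derivation does not go through: $e\ast_A e=0$ holds in a PreLie-Com algebroid (Lemma \ref{lem:preLie-com}) but not in a general \preF, and $\ast_A$ is not a derivation of $\cdot_A$ (the failure is exactly $\Psi$), so the "Leibniz rule" step is unavailable. In fact \eqref{eq:identity property1} with $X=\huaE^{-1}$, $Y=\huaE$ gives $(\huaE^{-1}\ast_A e)\cdot_A\huaE=(\huaE\ast_A e)\cdot_A\huaE^{-1}$, hence $\huaE^{-1}\ast_A e=+(\huaE\ast_A e)\cdot_A\huaE^{-2}$, the opposite sign to your formula; the skew-symmetry that produced the minus sign in the $F$-algebroid computation $[e,\huaE^{-1}]_A=-[e,\huaE]_A\cdot_A\huaE^{-2}$ has no counterpart here. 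Also, "the closure argument just established applied to $e=\huaE\cdot_A\huaE^{-1}$" is circular, since closure presupposes that both factors are already pseudo-eventual identities. The paper needs none of this: it obtains \eqref{eq:Eventua31} for $\huaE^{-1}$ simply by substituting $X\mapsto\huaE^{-1}\cdot_A X$, $Y\mapsto\huaE^{-1}\cdot_A Y$ into \eqref{eq:pre-Lie eventual1} for $\huaE$ and invoking Lemma \ref{lem:important}, and it obtains \eqref{eq:Eventua32} for $\huaE^{-1}$ by expressing $(X\ast_A\huaE^{-1})\cdot_A Y\cdot_A\huaE$ through $\Psi(X,\huaE,\huaE^{-1})$, $X\ast_A e$ and $(X\ast_A\huaE)\cdot_A\huaE^{-1}$, using \eqref{eq:Eventual property2}, \eqref{eq:identity property1} and \eqref{eq:Eventua32} for $\huaE$, and then cancelling the invertible factor $\huaE$. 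So while your overall skeleton points in the right direction, the concrete steps you rely on for the inverse would fail and need to be replaced by the substitution-and-cancellation argument above.
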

\begin{proof}
For the first claim, let $\huaE_1,\huaE_2$ be two pseudo-eventual identities on the \preF~ $A$. For all $X,Y,Z\in\Gamma(A)$, since $\huaE_1$ and $\huaE_2$ are pseudo-eventual identities, by \eqref{eq:pre-Lie eventual1}, we have
\begin{eqnarray*}
	\Psi(X,\huaE_1\cdot_A \huaE_2\cdot_A Y,Z)&=&\Psi(\huaE_2\cdot_A Y,\huaE_1\cdot_A X ,Z);\\
	\Psi(Y,\huaE_2\cdot_A \huaE_1\cdot_A X,Z)&=&\Psi(\huaE_1\cdot_A X ,\huaE_2\cdot_A Y,Z).	\end{eqnarray*}
Thus by the symmetry of $\Psi$, we have
\begin{eqnarray*}
	\Psi(X,\huaE_1\cdot_A \huaE_2\cdot_A Y,Z)=\Psi(Y,\huaE_1\cdot_A \huaE_2\cdot_A X,Z).
\end{eqnarray*}
By Lemma \ref{lem:important}, we have
$$\Psi(\huaE_1\cdot_A \huaE_2, X,Y)=-((\huaE_1\cdot_A \huaE_2)\ast_A e)\cdot_A X\cdot_A Y.$$
For all $X,Y\in\Gamma(A)$, by \eqref{eq:pseudo-pre-HM1}, we have
\begin{eqnarray*}
	&&(X\ast_A(\huaE_1\cdot_A \huaE_2))\cdot_A Y-	(Y\ast_A(\huaE_1\cdot_A \huaE_2))\cdot_A X\\
	&=&\Psi(\huaE_1,X,\huaE_2)\cdot_A Y+(X\ast_A \huaE_1)\cdot_A \huaE_2\cdot_A Y+(X\ast_A \huaE_2)\cdot_A \huaE_1\cdot_A Y\\
	&&-\Psi(\huaE_1,Y,\huaE_2)\cdot_A X-(Y\ast_A \huaE_1)\cdot_A \huaE_2\cdot_A X-(Y\ast_A \huaE_2)\cdot_A \huaE_1\cdot_A X.
	\end{eqnarray*}
By \eqref{eq:HM pre-Lie function2} and \eqref{eq:pre-Lie eventual1}, we have
\begin{eqnarray*}
\Psi(\huaE_1,X,\huaE_2)\cdot_A Y-\Psi(\huaE_1,Y,\huaE_2)\cdot_A X=	\Psi(\huaE_1\cdot_A Y,X,\huaE_2)-\Psi(\huaE_1\cdot_A X,Y,\huaE_2)=0.
\end{eqnarray*}
Using the above relation and by \eqref{eq:Eventua32}, we have
\begin{eqnarray*}
	&&(X\ast_A(\huaE_1\cdot_A \huaE_2))\cdot_A Y-	(Y\ast_A(\huaE_1\cdot_A \huaE_2))\cdot_A X\\
	&=&(X\ast_A \huaE_1)\cdot_A \huaE_2\cdot_A Y+(X\ast_A \huaE_2)\cdot_A \huaE_1\cdot_A Y-(Y\ast_A \huaE_1)\cdot_A \huaE_2\cdot_A X-(Y\ast_A \huaE_2)\cdot_A \huaE_1\cdot_A X\\
	&=&(Y\ast_A \huaE_1)\cdot_A \huaE_2\cdot_A X+(Y\ast_A \huaE_2)\cdot_A \huaE_1\cdot_A X-(Y\ast_A \huaE_1)\cdot_A \huaE_2\cdot_A X-(Y\ast_A \huaE_2)\cdot_A \huaE_1\cdot_A X\\
&=&0,
\end{eqnarray*}
which implies that
$$(X\ast_A(\huaE_1\cdot_A \huaE_2))\cdot_A Y=(Y\ast_A(\huaE_1\cdot_A \huaE_2))\cdot_A X.$$
Thus $\huaE_1\cdot_A \huaE_2\in\frkE(A).$

For the second claim, using relation \eqref{eq:pre-Lie eventual1} with $X$ and $Y$ replaced by $\huaE^{-1}\cdot_A X$ and $\huaE^{-1}\cdot_A Y$ respectively, we have
\begin{eqnarray*}
 0&=& \Psi(\huaE^{-1}\cdot_A X,\huaE\cdot_A \huaE^{-1}\cdot_A Y,Z)- \Psi(\huaE^{-1}\cdot_A Y,\huaE\cdot_A \huaE^{-1}\cdot_A X,Z)\\
 &=&\Psi(\huaE^{-1}\cdot_A X,Y,Z)- \Psi(\huaE^{-1}\cdot_A Y,X,Z).
\end{eqnarray*}
By the symmetry of $\Psi$ and Lemma \ref{lem:important}, we have
$$\Psi(\huaE^{-1}, X,Y)= -(\huaE^{-1}\ast_A e)\cdot_A X\cdot_A Y.$$

By \eqref{eq:HM pre-Lie function2} and \eqref{eq:pre-Lie eventual1}, we have
\begin{equation}\label{eq:Eventual property2}
\Psi(X,\huaE,\huaE^{-1})\cdot_A Y=\Psi(Y,\huaE,\huaE^{-1})\cdot_A X.
\end{equation}
Furthermore, by a direct calculation, we have
\begin{eqnarray*}
  (X\ast_A \huaE^{-1})\cdot_A Y\cdot_A \huaE&=&\Psi(X,\huaE,\huaE^{-1})\cdot_A Y-(X\ast_A e)\cdot_A Y+(X\ast_A \huaE)\cdot_A Y\cdot_A \huaE^{-1},\\
   (Y\ast_A \huaE^{-1})\cdot_A X\cdot_A \huaE&=&\Psi(Y,\huaE,\huaE^{-1})\cdot_A X-(Y\ast_A e)\cdot_A X+(Y\ast_A \huaE)\cdot_A X\cdot_A \huaE^{-1}.
\end{eqnarray*}
By \eqref{eq:identity property1}, \eqref{eq:Eventua32} and \eqref{eq:Eventual property2}, we have
$$ (X\ast_A \huaE^{-1})\cdot_A Y\cdot_A \huaE= (Y\ast_A \huaE^{-1})\cdot_A X\cdot_A \huaE.$$
Because $\huaE$ is invertible, we have
$$(X\ast_A \huaE^{-1})\cdot_A Y= (Y\ast_A \huaE^{-1})\cdot_A X.$$
Thus $\huaE^{-1}$ is an eventual identity on $A$.
\end{proof}


\emptycomment{\begin{pro}\label{lem:Eventual1}
 Let $(A,\ast_A,\cdot_A,e,a_A)$ be \preF~ and $\huaE$ an invertible element in $\Gamma(A)$. Then $\huaE$ is an eventual identity on $A$ if and only if $\huaE^{-1}$ is an eventual identity on $A$.
  \emptycomment{\begin{eqnarray}
  \label{eq:Eventua31-invert} \Psi(\huaE^{-1}, X,Y)&=& -(\huaE^{-1}\ast_A e)\cdot_A X\cdot_A Y,\\
\label{eq:Eventua32-invert} (X\ast_A \huaE^{-1})\cdot_A Y&=& (Y\ast_A \huaE^{-1})\cdot_A X,\quad\forall~X,Y\in \Gamma(A).
 \end{eqnarray}}
\end{pro}
\begin{proof}
Using relation \eqref{eq:pre-Lie eventual1} with $X$ and $Y$ replaced by $\huaE^{-1}\cdot_A X$ and $\huaE^{-1}\cdot_A Y$ respectively, we have
\begin{eqnarray*}
 0&=& \Psi(\huaE^{-1}\cdot_A X,\huaE\cdot_A \huaE^{-1}\cdot_A Y,Z)- \Psi(\huaE^{-1}\cdot_A Y,\huaE\cdot_A \huaE^{-1}\cdot_A X,Z)\\
 &=&\Psi(\huaE^{-1}\cdot_A X,Y,Z)- \Psi(\huaE^{-1}\cdot_A Y,X,Z)
\end{eqnarray*}
By the symmetry of $\Psi$ and Lemma \ref{lem:important}, we have
$$\Psi(\huaE^{-1}, X,Y)= -(\huaE^{-1}\ast_A e)\cdot_A X\cdot_A Y.$$

By \eqref{eq:HM pre-Lie function2} and \eqref{eq:pre-Lie eventual1}, we have
\begin{equation}\label{eq:Eventual property2}
\Psi(X,\huaE,\huaE^{-1})\cdot_A Y=\Psi(Y,\huaE,\huaE^{-1})\cdot_A X.
\end{equation}
Furthermore, by a direct calculation, we have
\begin{eqnarray*}
  (X\ast_A \huaE^{-1})\cdot_A Y\cdot_A \huaE&=&\Psi(X,\huaE,\huaE^{-1})\cdot_A Y-(X\ast_A e)\cdot_A Y+(X\ast_A \huaE)\cdot_A Y\cdot_A \huaE^{-1},\\
   (Y\ast_A \huaE^{-1})\cdot_A X\cdot_A \huaE&=&\Psi(Y,\huaE,\huaE^{-1})\cdot_A X-(Y\ast_A e)\cdot_A X+(Y\ast_A \huaE)\cdot_A X\cdot_A \huaE^{-1}.
\end{eqnarray*}
By \eqref{eq:identity property1}, \eqref{eq:Eventua32} and \eqref{eq:Eventual property2}, we have
$$ (X\ast_A \huaE^{-1})\cdot_A Y\cdot_A \huaE= (Y\ast_A \huaE^{-1})\cdot_A X\cdot_A \huaE.$$
Because $\huaE$ is invertible, we have
$$(X\ast_A \huaE^{-1})\cdot_A Y= (Y\ast_A \huaE^{-1})\cdot_A X.$$

The converse can be proved similarly. We omit the details.
\end{proof}}

\begin{pro}\label{thm:construction F-algebroid3}
   Let $(A,\ast_A,\cdot_A,a_A)$ be a \preF~with an identity $e$. Then $\huaE$ is a pseudo-eventual identity on $A$ if and only if $(A,\ast_A,\cdot_{\huaE},a_A)$ is a \preF, where $\cdot_{\huaE}:\Gamma(A)\times \Gamma(A)\longrightarrow \Gamma(A)$ is given by \eqref{eq:New associative mult}.
\end{pro}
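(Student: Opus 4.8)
The statement has two directions, and in both the only substantive point concerns the pre-$F$-manifold-algebra identity, since $\cdot_{\huaE}$ is $C^\infty(M)$-bilinear, commutative and associative for \emph{any} $\huaE$ (exactly as in Theorem~\ref{thm:construction F-algebroid}), and the pre-Lie algebroid $(A,\ast_A,a_A)$ is left untouched. So $(A,\ast_A,\cdot_{\huaE},a_A)$ is a \preF{} if and only if $(\Gamma(A),\ast_A,\cdot_{\huaE})$ is a pre-$F$-manifold algebra, i.e. if and only if the trilinear map $\Psi^{\huaE}(X,Y,Z):=X\ast_A(Y\cdot_{\huaE}Z)-(X\ast_A Y)\cdot_{\huaE}Z-Y\cdot_{\huaE}(X\ast_A Z)$ is symmetric in $X,Y$. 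The plan is to first express $\Psi^{\huaE}$ in terms of the original $\Psi$ of \eqref{eq:Com-Prelie relation2} and of $\huaE$, and then read off both implications. Expanding $X\ast_A\big((Y\cdot_A Z)\cdot_A\huaE\big)$ with \eqref{eq:Com-Prelie relation2} and then expanding $X\ast_A(Y\cdot_A Z)$ inside, the bilinear terms cancel against $(X\ast_A Y)\cdot_{\huaE}Z$ and $Y\cdot_{\huaE}(X\ast_A Z)$ and one is left with the key identity
\[
\Psi^{\huaE}(X,Y,Z)=\Psi(X,Y\cdot_A Z,\huaE)+\Psi(X,Y,Z)\cdot_A\huaE+(Y\cdot_A Z)\cdot_A(X\ast_A\huaE),
\]
valid for arbitrary $\huaE\in\Gamma(A)$.

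For the forward direction, assume \eqref{eq:Eventua31} and \eqref{eq:Eventua32} hold. In the first term of the key identity I use the total symmetry of $\Psi$ to rewrite $\Psi(X,Y\cdot_A Z,\huaE)=\Psi(\huaE,X,Y\cdot_A Z)$ and apply \eqref{eq:Eventua31} with its second slot filled by $Y\cdot_A Z$, obtaining $-(\huaE\ast_A e)\cdot_A X\cdot_A Y\cdot_A Z$, which is symmetric in $X,Y$. The middle term is symmetric in $X,Y$ because $(\Gamma(A),\ast_A,\cdot_A)$ is a pre-$F$-manifold algebra, so $\Psi(X,Y,Z)=\Psi(Y,X,Z)$. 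For the last term, \eqref{eq:Eventua32} gives $Y\cdot_A(X\ast_A\huaE)=X\cdot_A(Y\ast_A\huaE)$, hence $(Y\cdot_A Z)\cdot_A(X\ast_A\huaE)=(X\cdot_A Z)\cdot_A(Y\ast_A\huaE)$ is symmetric in $X,Y$ as well. Thus $\Psi^{\huaE}$ is symmetric in $X,Y$ and $(A,\ast_A,\cdot_{\huaE},a_A)$ is a \preF{}.

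For the converse, assume $(A,\ast_A,\cdot_{\huaE},a_A)$ is a \preF{}, so $\Psi^{\huaE}(X,Y,Z)=\Psi^{\huaE}(Y,X,Z)$. Using the key identity and cancelling the (already symmetric) middle term, this is equivalent to
\[
\Psi(X,Y\cdot_A Z,\huaE)-\Psi(Y,X\cdot_A Z,\huaE)=Z\cdot_A\big(X\cdot_A(Y\ast_A\huaE)-Y\cdot_A(X\ast_A\huaE)\big)\quad\text{for all }X,Y,Z.
\]
Setting $Z=e$ and noting that the left-hand side vanishes by the symmetry of $\Psi$, one reads off $X\cdot_A(Y\ast_A\huaE)=Y\cdot_A(X\ast_A\huaE)$, which is precisely \eqref{eq:Eventua32}; feeding this back makes the right-hand side identically zero, leaving $\Psi(X,Y\cdot_A Z,\huaE)=\Psi(Y,X\cdot_A Z,\huaE)$ for all $X,Y,Z$. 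Finally I transform this last identity by splitting the product in its second slot via \eqref{eq:HM pre-Lie function2} and using the total symmetry of $\Psi$: the terms not involving $\Psi(\huaE,-,-)$ cancel in pairs, and what remains is $\Psi(\huaE,X,Z)\cdot_A Y=\Psi(\huaE,Y,Z)\cdot_A X$, which by another application of \eqref{eq:HM pre-Lie function2} is exactly \eqref{eq:pre-Lie eventual1}, hence \eqref{eq:Eventua31} by Lemma~\ref{lem:important}. Therefore $\huaE$ is a pseudo-eventual identity.

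The routine parts ($C^\infty(M)$-bilinearity, commutativity and associativity of $\cdot_{\huaE}$, and the clean derivation of the key identity) are modeled on Theorem~\ref{thm:construction F-algebroid} and Proposition~\ref{pro:eventual identity F-pre}. The main obstacle is the final step of the converse: organizing the repeated uses of \eqref{eq:HM pre-Lie function2} and the permutation symmetry of $\Psi$ so that the spurious terms genuinely cancel and the equation collapses to \eqref{eq:pre-Lie eventual1}; this is the same kind of bookkeeping that already appears in the proof of Lemma~\ref{lem:important}, so it should go through, but it is where care is needed.
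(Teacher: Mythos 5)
Your proof is correct and follows essentially the same route as the paper's: reduce the statement to the symmetry in the first two arguments of the deformed tensor $\tilde\Psi$, expand it in terms of $\Psi$ and $\huaE$, set $Z=e$ to extract \eqref{eq:Eventua32}, and then pass to \eqref{eq:pre-Lie eventual1} and hence \eqref{eq:Eventua31} via Lemma~\ref{lem:important}. The only difference is cosmetic: the paper's expansion \eqref{eq:UHM-pre 1} absorbs $\huaE$ into a slot of $\Psi$ and invokes \eqref{eq:HM pre-Lie function3} at the $Z=e$ step, whereas your key identity keeps $\huaE$ in a bare slot, making the $Z=e$ step immediate at the cost of one extra application of \eqref{eq:HM pre-Lie function2} at the end.
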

\begin{proof}
  Define
  $$\tilde{\Psi}(X,Y,Z)=X\ast_A(Y\cdot_\huaE Z)-(X\ast_A Y)\cdot_\huaE Z-Y\cdot_\huaE(X\ast_A Z),\quad\forall~X,Y,Z\in \Gamma(A).$$
  By a straightforward computation, we have
  \begin{eqnarray}
    \label{eq:UHM-pre 1}\tilde{\Psi}(X,Y,Z)=\Psi(X,\huaE\cdot_A Y,Z)+\Psi(X,\huaE,Y)\cdot_A Z+(X\ast_A \huaE)\cdot_A Y\cdot_A Z,\\
    \label{eq:UHM-pre 2}\tilde{\Psi}(Y,X,Z)=\Psi(Y,\huaE\cdot_A X,Z)+\Psi(Y,\huaE,X)\cdot_A Z+(Y\ast_A \huaE)\cdot_A X\cdot_A Z.
  \end{eqnarray}
 By the symmetry of $\Psi$, $(A,\ast_A,\cdot_\huaE,a_A)$ is a \preF~ if and only if
 \begin{equation}\label{eq:HM pre-Lie relation}
 \Psi(X,\huaE\cdot_A Y,Z)-\Psi(Y,\huaE\cdot_A X,Z)=(Y\ast_A \huaE)\cdot_A X\cdot_A Z-(X\ast_A \huaE)\cdot_A Y\cdot_A Z.
 \end{equation}
By the symmetry of $\Psi$ and \eqref{eq:HM pre-Lie function3}, we have
  \begin{eqnarray*}
   \Psi(X,\huaE\cdot_A Y,e)-\Psi(Y,\huaE\cdot_A X,e)= \Psi(e\cdot_A Y,\huaE,X)-\Psi(e\cdot_A X,\huaE,Y)=0.
  \end{eqnarray*}
 Taking $Z=e$ in \eqref{eq:HM pre-Lie relation}, we have
 \begin{equation*}
 (X\ast_A \huaE)\cdot_A Y= (Y\ast_A \huaE)\cdot_A X.
 \end{equation*}
This implies that \eqref{eq:Eventua32} holds. Furthermore, by \eqref{eq:Eventua32},  \eqref{eq:HM pre-Lie relation} implies that \eqref{eq:pre-Lie eventual1} holds. By Lemma \ref{lem:important}, \eqref{eq:pre-Lie eventual1} is equivalent to \eqref{eq:Eventua31}. Thus $\huaE$ is a pseudo-eventual identity on $(A,\ast_A,\cdot_A,e,a_A)$.

Conversely, if $\huaE$ is a pseudo-eventual identity on $(A,\ast_A,\cdot_A,e,a_A)$, by Lemma \ref{lem:important}, we have
$$\Psi(X,\huaE\cdot_A Y,Z)=\Psi(Y,\huaE\cdot_A X,Z).$$
Furthermore, by \eqref{eq:Eventua32}, \eqref{eq:HM pre-Lie relation} follows. Thus $(A,\ast_A,\cdot_\huaE,a_A)$ is a \preF.
\end{proof}

\begin{cor}\label{cor:Nij-pre-F-algebroid}
Let $(M,\nabla,\bullet)$ be an $F$-manifold with a compatible flat connection and $\huaE$ a pseudo-eventual identity on $M$. Then $(M,\nabla,\bullet_\huaE)$ is also an $F$-manifold with a compatible flat connection, where $\bullet_\huaE$ is given by
\begin{eqnarray}
X\bullet_\huaE Y&=&X\bullet Y\bullet \huaE,\quad\forall~X,Y\in\frkX(M).
\end{eqnarray}

\end{cor}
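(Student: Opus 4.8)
The plan is to obtain this as an immediate application of Proposition~\ref{thm:construction F-algebroid3} to the tangent bundle. By definition, saying that $(M,\nabla,\bullet)$ is an $F$-manifold with a compatible flat connection amounts to saying that $(TM,\nabla,\bullet,\Id)$ is a \preF; and since the hypothesis that $\huaE$ is a pseudo-eventual identity on $M$ presupposes that $\bullet$ has an identity $e\in\frkX(M)$, we are really dealing with a \preF~with an identity, namely $(TM,\nabla,\bullet,e,\Id)$. Unwinding definitions, ``$\huaE$ is a pseudo-eventual identity on $M$'' means exactly that $\huaE\in\frkX(M)$ satisfies \eqref{eq:Eventua31} and \eqref{eq:Eventua32} relative to this \preF.

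Granting this dictionary, I would apply Proposition~\ref{thm:construction F-algebroid3} with $A=TM$, $\ast_A=\nabla$, $\cdot_A=\bullet$ and $a_A=\Id$. Since $\huaE$ is a pseudo-eventual identity, the proposition tells us that $(TM,\nabla,\cdot_{\huaE},\Id)$ is again a \preF, where $\cdot_{\huaE}$ is the product defined by \eqref{eq:New associative mult}, that is,
\begin{equation*}
X\bullet_\huaE Y=X\bullet Y\bullet\huaE,\qquad\forall~X,Y\in\frkX(M).
\end{equation*}
Observe that this construction leaves both the anchor and the pre-Lie operation untouched: the anchor remains $\Id$ and the pre-Lie algebroid structure remains the original flat connection $\nabla$.

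It then only remains to check that $(M,\nabla,\bullet_\huaE)$ meets the definition of an $F$-manifold with a compatible flat connection. The connection $\nabla$ is unchanged, hence still flat; the product $\bullet_\huaE$ is $\CWM$-bilinear because $\bullet$ is and $\huaE$ is a fixed section, and it is commutative and associative because $\bullet$ is, e.g. $X\bullet_\huaE(Y\bullet_\huaE Z)=X\bullet Y\bullet Z\bullet\huaE\bullet\huaE=(X\bullet_\huaE Y)\bullet_\huaE Z$; finally, we have just shown that $(TM,\nabla,\bullet_\huaE,\Id)$ is a \preF. Thus $(M,\nabla,\bullet_\huaE)$ is an $F$-manifold with a compatible flat connection, as claimed. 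All the substantive content — namely the symmetry \eqref{eq:pseudo-pre-HM1} of the deformed tensor $\Psi$ — is already contained in Proposition~\ref{thm:construction F-algebroid3}, so there is no real obstacle here; the only thing requiring care is the bookkeeping identifying the pseudo-eventual-identity condition on the $F$-manifold $M$ with the pseudo-eventual-identity condition on the \preF~$(TM,\nabla,\bullet,e,\Id)$, and recognizing $\cdot_\huaE$ as $\bullet_\huaE$.
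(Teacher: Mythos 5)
Your proposal is correct and matches the paper's (implicit) argument: the corollary is exactly Proposition~\ref{thm:construction F-algebroid3} specialized to the \preF~$(TM,\nabla,\bullet,e,\Id)$, with $\cdot_\huaE$ read as $\bullet_\huaE$. Your extra checks (flatness of the unchanged $\nabla$, commutativity and associativity of $\bullet_\huaE$) are already contained in the proposition's conclusion, so they are harmless bookkeeping rather than new content.
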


\begin{thm}
  Let $(A,\ast_A,\cdot_A,a_A)$ be a \preF~with an identity $e$.  Then $\huaE$ is an eventual identity on $A$ if and only if $(A,\ast_A,\cdot_{\huaE},a_A)$ is a \preF~ with the identity $\huaE^{-1}$, which is called the Dubrovin's dual of $(A,\ast_A,\cdot_A,a_A)$, where $\cdot_{\huaE}$ is given by \eqref{eq:New associative mult}.  Moreover, $e$ is an eventual identity on the \preF~ $(A,\ast_A,\cdot_{\huaE},\huaE^{-1},a_A)$  and the map
  \begin{equation}\label{eq:involution2}
  (A,\ast_A,\cdot_A,e,a_A,\huaE)\longrightarrow(A,\ast_A,\cdot_{\huaE},\huaE^{-1},a_A,e^{\dag})
  \end{equation}
  is an involution of the set of pre-$F$-algebroids  with eventual identities, where $e^{\dag}=\huaE^{-2}$ is the inverse of $e$ with respect to the multiplication $\cdot_\huaE$.
\end{thm}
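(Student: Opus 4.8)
The plan is to follow the pattern of Theorem \ref{thm:involution1}, using Proposition \ref{thm:construction F-algebroid3}, Lemma \ref{lem:important} and Proposition \ref{lem:Eventual1} as the pre-Lie replacements for the Hertling--Manin ingredients. First I would observe that $X\cdot_{\huaE}\huaE^{-1}=X\cdot_A\huaE^{-1}\cdot_A\huaE=X$, so whenever $\huaE$ is invertible, $\huaE^{-1}$ is the identity for $\cdot_{\huaE}$. Hence if $\huaE$ is an eventual identity, Proposition \ref{thm:construction F-algebroid3} gives that $(A,\ast_A,\cdot_{\huaE},a_A)$ is a \preF~ and $\huaE^{-1}$ is its identity; conversely, if $(A,\ast_A,\cdot_{\huaE},a_A)$ is a \preF~ with identity $\huaE^{-1}$, then $\huaE$ is invertible and Proposition \ref{thm:construction F-algebroid3} makes $\huaE$ a pseudo-eventual identity, hence an eventual identity.

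Next I would show that $e$ is an eventual identity on the Dubrovin dual $(A,\ast_A,\cdot_{\huaE},\huaE^{-1},a_A)$. Let $\tilde{\Psi}$ be the tensor \eqref{eq:Com-Prelie relation2} built from $\ast_A$ and $\cdot_{\huaE}$; formula \eqref{eq:UHM-pre 1} from the proof of Proposition \ref{thm:construction F-algebroid3} already expresses $\tilde{\Psi}$ in terms of $\Psi$. For the analogue of \eqref{eq:Eventua32} with $e$ in the role of the eventual identity, right-multiply \eqref{eq:identity property1} by $\huaE$: $(X\ast_A e)\cdot_{\huaE}Y=(X\ast_A e)\cdot_A Y\cdot_A\huaE=(Y\ast_A e)\cdot_A X\cdot_A\huaE=(Y\ast_A e)\cdot_{\huaE}X$. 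For the analogue of \eqref{eq:Eventua31}, Lemma \ref{lem:important} applied to $(A,\ast_A,\cdot_{\huaE},\huaE^{-1},a_A)$ reduces it to $\tilde{\Psi}(X,\huaE\cdot_A Y,Z)=\tilde{\Psi}(Y,\huaE\cdot_A X,Z)$, using that $e\cdot_{\huaE}Y=\huaE\cdot_A Y$. Expanding both sides by \eqref{eq:UHM-pre 1}, the three resulting families of terms match: the $\Psi(X,\huaE^2\cdot_A Y,Z)$-terms agree by two applications of \eqref{eq:pre-Lie eventual1} and the symmetry of $\Psi$ in its first two arguments; the $\Psi(X,\huaE,\huaE\cdot_A Y)\cdot_A Z$-terms agree by the full symmetry of $\Psi$ together with \eqref{eq:pre-Lie eventual1}; and the $(X\ast_A\huaE)\cdot_A\huaE\cdot_A Y\cdot_A Z$-terms agree after invoking \eqref{eq:Eventua32} for $\huaE$. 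Here we use that $\huaE$ is a pseudo-eventual identity on $A$, so by Lemma \ref{lem:important} both \eqref{eq:pre-Lie eventual1} and \eqref{eq:Eventua32} are available. This shows $e$ is a pseudo-eventual identity on the dual; since $e\cdot_{\huaE}\huaE^{-2}=e\cdot_A\huaE^{-2}\cdot_A\huaE=\huaE^{-1}$ is the $\cdot_{\huaE}$-identity, $e$ is invertible in $(\Gamma(A),\cdot_{\huaE})$ with inverse $e^{\dag}=\huaE^{-2}$, hence an eventual identity, and by Proposition \ref{lem:Eventual1} so is $e^{\dag}$.

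Finally I would check the involution by applying the construction once more to $(A,\ast_A,\cdot_{\huaE},\huaE^{-1},a_A,e^{\dag})$ with $e^{\dag}=\huaE^{-2}$. The new product $X,Y\mapsto X\cdot_{\huaE}Y\cdot_{\huaE}e^{\dag}$ equals $\cdot_A$, because $X\cdot_{\huaE}Y\cdot_{\huaE}\huaE^{-2}=X\cdot_A Y\cdot_A\huaE\cdot_A\huaE^{-2}\cdot_A\huaE=X\cdot_A Y$; the new identity is the $\cdot_{\huaE}$-inverse of $e^{\dag}$, which is $e$; and the new eventual identity is the $\cdot_A$-inverse of $\huaE^{-1}$, namely $\huaE$. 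Hence the map returns $(A,\ast_A,\cdot_A,e,a_A,\huaE)$, so it is an involution of the set of \preFs~ with eventual identities. The step I expect to require genuine care is the identity $\tilde{\Psi}(X,\huaE\cdot_A Y,Z)=\tilde{\Psi}(Y,\huaE\cdot_A X,Z)$: each of its three term-families needs the right combination of the symmetry of $\Psi$ with the correct instance of \eqref{eq:pre-Lie eventual1}, \eqref{eq:HM pre-Lie function2} or \eqref{eq:Eventua32}; everything else transcribes the $F$-algebroid case of Theorem \ref{thm:involution1}.
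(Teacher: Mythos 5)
Your proposal is correct, and its skeleton coincides with the paper's: Proposition \ref{thm:construction F-algebroid3} for the equivalence, \eqref{eq:identity property1} multiplied by $\huaE$ for the condition \eqref{eq:Eventua222}, and Proposition \ref{lem:Eventual1} plus the computation $X\cdot_{\huaE}Y\cdot_{\huaE}\huaE^{-2}=X\cdot_A Y$ for the involution. Where you genuinely diverge is the verification that $e$ satisfies the first eventual-identity condition \eqref{eq:Eventua211} on the dual. The paper computes $\tilde{\Psi}(e,X,Y)$ directly from \eqref{eq:UHM-pre 1}, obtaining $-2(\huaE\ast_A e)\cdot_A X\cdot_A Y+(e\ast_A\huaE)\cdot_A X\cdot_A Y$, and then separately derives the auxiliary identity $(e\ast_A\huaE^{-1})\cdot_A\huaE^{2}=2\,\huaE\ast_A e-e\ast_A\huaE$ (from \eqref{eq:Eventua31} with $X=\huaE$, $Y=\huaE^{-1}$ and \eqref{eq:identity property1}) to recognize the result as $-(e\ast_A\huaE^{-1})\cdot_{\huaE}X\cdot_{\huaE}Y$. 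You instead apply Lemma \ref{lem:important} to the dual \preF~ $(A,\ast_A,\cdot_{\huaE},\huaE^{-1},a_A)$ (legitimate, since the first claim already makes it a \preF~ with identity $\huaE^{-1}$), reducing \eqref{eq:Eventua211} to $\tilde{\Psi}(X,\huaE\cdot_A Y,Z)=\tilde{\Psi}(Y,\huaE\cdot_A X,Z)$, and check this termwise via \eqref{eq:UHM-pre 1}: the $\Psi(X,\huaE^{2}\cdot_A Y,Z)$-terms match (e.g.\ because $\huaE^{2}$ is again pseudo-eventual by Proposition \ref{lem:Eventual1}, or by your two uses of \eqref{eq:pre-Lie eventual1} and symmetry), the $\Psi(X,\huaE,\huaE\cdot_A Y)\cdot_A Z$-terms match by full symmetry plus \eqref{eq:pre-Lie eventual1}, and the $(X\ast_A\huaE)\cdot_A\huaE\cdot_A Y\cdot_A Z$-terms match by \eqref{eq:Eventua32}; I checked all three and they go through. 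The trade-off: your route never has to compute the right-hand side $(e\ast_A\huaE^{-1})\cdot_A\huaE^{2}$ explicitly, at the cost of leaning on Lemma \ref{lem:important} for the dual structure; the paper's route is a self-contained calculation. Both are complete proofs of the statement.
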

\begin{proof}
By Proposition \ref{thm:construction F-algebroid3}, the first claim follows immediately. For the second claim, assume that $\huaE$ is an eventual identity on $(A,\ast_A,\cdot_A,e,a_A)$. We need to show that $e$ is an eventual identity on the \preF~ $(A,\ast_A,\cdot_{\huaE},\huaE^{-1},a_A)$, i.e.
\begin{eqnarray}
  \label{eq:Eventua211} \tilde{\Psi}(e,X,Y)&=& -(e\ast_A \huaE^{-1})\cdot_\huaE X\cdot_\huaE Y,\\
\label{eq:Eventua222} (X\ast_A e)\cdot_\huaE Y&=& (Y\ast_A e)\cdot_\huaE X.
 \end{eqnarray}
By \eqref{eq:identity property1}, we have
\begin{eqnarray*}
  (X\ast_A e)\cdot_\huaE Y-(Y\ast_A e)\cdot_\huaE X=( (X\ast_A e)\cdot_A Y-(Y\ast_A e)\cdot_A X)\cdot_A \huaE=0,
\end{eqnarray*}
which implies that \eqref{eq:Eventua222} holds.

On the one hand, by \eqref{eq:Eventua31} and \eqref{eq:pre-Lie eventual1}, we have
\begin{eqnarray*}
	\tilde{\Psi}(e,X,Y)&=&\Psi(e,\huaE\cdot_A X,Y)+\Psi(e,\huaE,X)\cdot_A Y+(e\ast_A \huaE)\cdot_A X\cdot_A Y\\
	&=&\Psi(\huaE,X,Y)+\Psi(\huaE,e,X)\cdot_A Y+(e\ast_A \huaE)\cdot_A X\cdot_A Y\\
	&=&-(\huaE\ast_A e)\cdot_A X\cdot_A Y-(\huaE\ast_A e)\cdot_A X\cdot_A Y+(e\ast_A \huaE)\cdot_A X\cdot_A Y\\
	&=&-2(\huaE\ast_A e)\cdot_A X\cdot_A Y+(e\ast_A \huaE)\cdot_A X\cdot_A Y.
\end{eqnarray*}
On the other hand, taking $X=\huaE$ and $Y=\huaE^{-1}$ in \eqref{eq:Eventua31}, by the symmetry of $\Psi$, we have
\begin{eqnarray*}
	\Psi(e,\huaE,\huaE^{-1})=\Psi(\huaE,e,\huaE^{-1})=-(\huaE\ast_A e)\cdot_A\huaE^{-1},
\end{eqnarray*}
which implies that
\begin{eqnarray*}
e\ast_A e-(e\ast_A \huaE)\cdot_A \huaE^{-1}-(e\ast_A \huaE^{-1})\cdot_A \huaE=	-(\huaE\ast_A e)\cdot_A\huaE^{-1}.
\end{eqnarray*}
Furthermore, by \eqref{eq:identity property1}, we have
\begin{eqnarray*}
	(e\ast_A \huaE^{-1})\cdot_A \huaE^2&=&(e\ast_A e)\cdot_A \huaE-e\ast_A \huaE+\huaE\ast_A e=2\huaE\ast_A e-e\ast_A \huaE.
\end{eqnarray*}
Thus we have
$$\tilde{\Psi}(e,X,Y)=-(e\ast_A \huaE^{-1})\cdot_A \huaE^2\cdot_A X\cdot_A Y=-(e\ast_A \huaE^{-1})\cdot_\huaE X\cdot_\huaE Y.$$
Therefore, $e$ is an eventual identity on the $F$-algebroid with a compatible pre-Lie algebroid $(A,\ast_A,\cdot_{\huaE},\huaE^{-1},a_A)$.

By Proposition \ref{lem:Eventual1},  $e^{\dag}=\huaE^{-2}$ is an eventual identity on the \preF~ $(A,\ast_A,\cdot_{\huaE},\huaE^{-1},a_A)$. Then similar to the proof of Theorem \ref{thm:involution1}, the map given by \eqref{eq:involution2} is an involution of the set of \preFs~  with eventual identities.
\end{proof}

\begin{ex}
  Consider the PreLie-Com algebra $(\g,\ast,\cdot)$ with an identity $e$ given by Example \ref{ex:pre-Lie com algebra1}. By a direct calculation, for any $\huaE\in\g$, we have
  \begin{eqnarray*}
    (x\ast \huaE)\cdot y-(y\ast \huaE)\cdot x=x\cdot D(\huaE)\cdot y-y\cdot D(\huaE)\cdot x=0,\quad\forall~x,y\in\g.
  \end{eqnarray*}
 By Proposition \ref{pro:pre-Com identities}, $\huaE$ is a pseudo-eventual identity on $\g$. Thus any element of $\g$ is a pseudo-eventual identity on $\g$. Furthermore, for any $\huaE\in\g$, there is a new pre-$F$-manifold algebra structure on $\g$ given by
  \begin{eqnarray*}
    x\cdot_\huaE y=x\cdot y\cdot \huaE,\quad x\ast y=x\cdot D(y),\quad \forall~x,y\in\g.
  \end{eqnarray*}
\end{ex}

\begin{ex}{\rm
 Let $(M,\nabla,\bullet,e)$ be a semi-simple PreLie-Com manifold with local coordinate systems $(u^1,\cdots,u^n)$. Then any  pseudo-eventual identity on $TM$ is of the form
 $$\huaE=f_1(u^1)\frac{\partial}{\partial u^1}+\cdots+f_n(u^n)\frac{\partial}{\partial u^n},$$
where $f_i(u^i)\in\CWM$ depends only on $u^i$ for $i=1,2,\cdots,n$. Furthermore, if all $f_i(u^i)$ are  non-vanishing everywhere, then $\huaE\in \XM$ is an  eventual identity.}
\end{ex}

\begin{ex}
 Let $(u^1,u^2)$ be a local coordinate systems on $\Real^2$. Define two multiplications by
\begin{eqnarray*} \frac{\partial}{\partial u^1}\bullet \frac{\partial}{\partial u^i}=\frac{\partial}{\partial u^i},\quad \frac{\partial}{\partial u^2}\bullet \frac{\partial}{\partial u^2}=0,\quad
 \frac{\partial}{\partial u^i}\ast \frac{\partial}{\partial u^j}=0, \quad i,j\in\{1,2\}.\end{eqnarray*}
 Then $(T\Real^2,\ast,\bullet,\Id)$ is a PreLie-Com algebroid with the identity $\frac{\partial}{\partial u^1}$ and thus $(T\Real^2,\ast,\bullet,\Id)$ is a \preF~ with the identity $\frac{\partial}{\partial u^1}$.

Furthermore, any pseudo-eventual identity on   $(T\Real^2,\ast,\bullet,\Id)$ is of the form
$$ \huaE=f_1(u^1)\frac{\partial}{\partial u^1}+f_2(u^1,u^2)\frac{\partial}{\partial u^2}$$
with
$\frac{\partial f_1}{\partial u^1}=\frac{\partial f_2}{\partial u^2},$
where $f_1 \in C^{\infty}(\Real^2)$ depends only on $u^1$ and $f_2 $ is any smooth function. Furthermore,
any pseudo-eventual identity on the sub-adjacent $F$-algebroid of $(T\Real^2,\ast,\bullet,\Id)$ is of the form
$$ \huaE=f_1(u^1)\frac{\partial}{\partial u^1}+f_2(u^1,u^2)\frac{\partial}{\partial u^2}.$$
In particular, if $f_1(u^1)$ is  non-vanishing everywhere, then $\huaE$ is an  eventual identity on the sub-adjacent $F$-algebroid of $(T\Real^2,\ast,\bullet,\Id)$.
\end{ex}

\subsection{Nijenhuis operators and deformed \preFs}
Recall from \cite{LSBC} that a Nijenhuis operator on a pre-Lie algebroid $(A,\ast_A,a_A)$ is a bundle map $N:A\lon A$ such that
\begin{equation}\label{eq:Nij-Struc-pre}
 N(X)\ast_A N(Y)=N\big(N(X)\ast_A Y+X\ast_A N(Y)-N(X\ast_A Y)\big),\quad\forall~X,Y\in\Gamma(A).
\end{equation}

\begin{defi}
  Let $(A,\ast_A,\cdot_A,a_A)$ be a \preF. A bundle map $N:A\lon A$ is called a {\bf Nijenhuis operator} on  $(A,\ast_A,\cdot_A,a_A)$ if $N$ is both a Nijenhuis operator on the commutative associative algebra $(\Gamma(A),\cdot_A)$ and a Nijenhuis operator on the pre-Lie algebroid $(A,\ast_A,a_A)$.
\end{defi}

\begin{thm}\label{pro:pre-Lie Nijenhuis operator}
  Let $N:A\longrightarrow A$ be a Nijenhuis operator on a \preF~ $(A,\ast_A,\cdot_A,a_A)$. Then $(A,\ast_N,\cdot_N,a_N=a_A\circ N)$
 is a \preF~and $N$ is a homomorphism from the \preF~$(A,\ast_N,\cdot_N,a_N=a_A\circ N)$ to $(A,\ast_A,\cdot_A,a_A)$, where the operation $\cdot_N$ is given by \eqref{eq:deformCA} and the operation $\ast_N:\Gamma(A)\times\Gamma(A)\longrightarrow \Gamma(A)$ is given by
\begin{eqnarray}
\label{eq:deformPLA}  X\ast_N Y&=&N(X)\ast_A Y+X\ast_A N(Y)-N(X\ast_A Y),\quad\forall~X,Y\in \Gamma(A).
\end{eqnarray}
\end{thm}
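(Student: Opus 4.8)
The plan is to imitate the proof of Theorem~\ref{pro:Nijenhuis operator}. Since $N$ is a Nijenhuis operator on the commutative associative algebra $(\Gamma(A),\cdot_A)$, the deformed multiplication $\cdot_N$ of \eqref{eq:deformCA} is again $C^\infty(M)$-bilinear, commutative and associative (\cite{CGM}), so $(A,\cdot_N)$ is a commutative associative algebroid; and since $N$ is a Nijenhuis operator on the pre-Lie algebroid $(A,\ast_A,a_A)$, the pair $(A,\ast_N,a_N=a_A\circ N)$ with $\ast_N$ given by \eqref{eq:deformPLA} is a pre-Lie algebroid (\cite{LSBC}). Thus it only remains to verify the pre-$F$-manifold algebra relation \eqref{eq:pseudo-pre-HM1} for $(\Gamma(A),\ast_N,\cdot_N)$, and then the homomorphism property of $N$.

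First I would record the two identities
$$N(X\ast_N Y)=N(X)\ast_A N(Y),\qquad N(X\cdot_N Y)=N(X)\cdot_A N(Y),\qquad\forall~X,Y\in\Gamma(A),$$
which follow immediately from the Nijenhuis conditions \eqref{eq:Nij-Struc-pre}, \eqref{eq:Nij-Struc1} together with the defining formulas \eqref{eq:deformPLA}, \eqref{eq:deformCA}. Let $\Psi_N$ denote the trilinear map \eqref{eq:Com-Prelie relation2} associated to the pair $(\ast_N,\cdot_N)$, namely $\Psi_N(X,Y,Z)=X\ast_N(Y\cdot_N Z)-(X\ast_N Y)\cdot_N Z-Y\cdot_N(X\ast_N Z)$. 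Expanding each of the three summands by \eqref{eq:deformPLA} and \eqref{eq:deformCA} and, every time an outer $N$ meets a $\ast_N$- or $\cdot_N$-product, collapsing it by the two identities above, I expect the result to reduce to the closed expression
\begin{eqnarray*}
\Psi_N(X,Y,Z)&=&\Psi(N(X),N(Y),Z)+\Psi(N(X),Y,N(Z))+\Psi(X,N(Y),N(Z))\\
&&-N\big(\Psi(N(X),Y,Z)+\Psi(X,N(Y),Z)+\Psi(X,Y,N(Z))\big)+N^2\Psi(X,Y,Z),
\end{eqnarray*}
in exact analogy with the formula for $\Phi_N$ in the proof of Theorem~\ref{pro:Nijenhuis operator}.

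Granting this formula, the conclusion is immediate: since $(A,\ast_A,\cdot_A,a_A)$ is a \preF, the map $\Psi$ is symmetric in its first two arguments by \eqref{eq:pseudo-pre-HM1} (indeed symmetric in all three arguments by the lemma asserting that $\Psi$ is a symmetric $(3,1)$-tensor field), so each of the five groups $\Psi(N(X),N(Y),Z)$, $\Psi(N(X),Y,N(Z))+\Psi(X,N(Y),N(Z))$, $\Psi(N(X),Y,Z)+\Psi(X,N(Y),Z)$, $\Psi(X,Y,N(Z))$ and $\Psi(X,Y,Z)$ is unchanged under the exchange $X\leftrightarrow Y$; hence $\Psi_N(X,Y,Z)=\Psi_N(Y,X,Z)$, which is \eqref{eq:pseudo-pre-HM1} for $(\Gamma(A),\ast_N,\cdot_N)$. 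Therefore $(A,\ast_N,\cdot_N,a_N)$ is a \preF. That $N$ is a homomorphism of \preFs~from $(A,\ast_N,\cdot_N,a_N)$ to $(A,\ast_A,\cdot_A,a_A)$ is then nothing but the two identities recorded above, together with $a_A\circ N=a_N$, which holds by the definition of $a_N$.

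The main obstacle is the middle step, the derivation of the closed formula for $\Psi_N$: it requires carefully collecting the roughly twenty terms produced by the three expansions and checking that the "unwanted" terms, in which $N$ lands directly on a $\cdot_A$- or $\ast_A$-product instead of on one of $X,Y,Z$, cancel in pairs by virtue of the identities $N(X\ast_N Y)=N(X)\ast_A N(Y)$ and $N(X\cdot_N Y)=N(X)\cdot_A N(Y)$. This is routine but tedious; everything after it is a one-line symmetry argument, and the whole scheme mirrors the $F$-algebroid case of Theorem~\ref{pro:Nijenhuis operator}, the only difference being that here $\Psi$ need not vanish, so one appeals to its symmetry rather than to $\Phi\equiv0$.
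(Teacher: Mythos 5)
Your proposal is correct and follows essentially the same route as the paper: cite \cite{CGM} and \cite{LSBC} for the deformed commutative associative and pre-Lie algebroid structures, establish the closed formula expressing $\Psi_N$ in terms of $\Psi$ evaluated on $N$-shifted arguments (which is exactly the identity the paper obtains ``by a direct calculation''), and conclude the symmetry $\Psi_N(X,Y,Z)=\Psi_N(Y,X,Z)$ from \eqref{eq:pseudo-pre-HM1}, with the homomorphism property coming from $N(X\ast_N Y)=N(X)\ast_A N(Y)$, $N(X\cdot_N Y)=N(X)\cdot_A N(Y)$ and $a_N=a_A\circ N$. The only difference is that you hedge on the expansion of $\Psi_N$, but the cancellation mechanism you describe (collapsing outer $N$'s via the two Nijenhuis identities) is precisely how that formula is verified, so there is no gap in substance.
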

\begin{proof}
Since $N$ is a Nijenhuis operator on the commutative associative algebra $(\Gamma(A),\cdot_A)$, it follows that $(\Gamma(A),\cdot_N)$ is a commutative associative algebra. Since $N$ is a Nijenhuis operator on  the pre-Lie algebroid $(A,\ast_A,a_A)$, $(A,\ast_N,a_N)$ is a pre-Lie algebroid (\cite{LSBC}).	

Define
\begin{equation}\label{eq:pre-Lie deformed equation}
\Psi_N(X,Y,Z):=X\ast_N(Y\cdot_N Z)-(X\ast_N Y)\cdot_N Z-(X\ast_N Z)\cdot_N Y,\quad\forall~X,Y,Z\in\Gamma(A).
\end{equation}
By a direct calculation, we have
\begin{eqnarray*}
  \Psi_N(X,Y,Z)&=&\Psi(NX,NY,Z)+\Psi(NX,Y,NZ)+\Psi(X,NY,NZ)\\
  &&-N\big(\Psi(NX,Y,Z)+\Psi(X,NY,Z)+\Psi(X,Y,NZ)\big)+N^2(\Psi(X,Y,Z)).
\end{eqnarray*}
Thus by \eqref{eq:pseudo-pre-HM1}, we have
$$\Psi_N(X,Y,Z)=\Psi_N(Y,X,Z).$$
This implies that $(A,\ast_N,\cdot_N,a_N=a_A\circ N)$ is a \preF. It is not hard to see that $N$ is a homomorphism from the \preF~$(A,\ast_N,\cdot_N,a_N=a_A\circ N)$ to $(A,\ast_A,\cdot_A,a_A)$.
\end{proof}

\emptycomment{\begin{pro}\label{pro:eventual identity Nij}
Let $(A,\ast_A,\cdot_A,e,a_A)$ be a \preF~with an identity $e$ and	$\huaE$   a pseudo-eventual identity on $A$. Then the endomorphism $N=\huaE\cdot_A$ is a Nijenhuis operator on the pre-Lie algebroid $(A,\ast_A,a_A)$.
\end{pro}
\begin{proof}

\end{proof}}

\begin{pro}\label{pro:eventual identity Nij}
Let $(A,\ast_A,\cdot_A,a_A)$ be a \preF~ with an identity $e$ and $\huaE$ a pseudo-eventual identity on $A$. Then the endomorphism $N=\huaE\cdot_A$ is a Nijenhuis operator on the \preF~ $(A,\ast_A,\cdot_A,a_A)$.
Furthermore, $(A,\ast_\huaE,\cdot_\huaE,a_\huaE)$ is a \preF, where the multiplication $\ast_\huaE$ is given by
\begin{equation}
X\ast_\huaE Y=(\huaE\cdot_A X)\ast_A Y+X\ast_A (\huaE\cdot_A Y)-\huaE\cdot_A(X\ast_A Y),\quad \forall~X,Y\in\Gamma(A),\end{equation}
the multiplication $\cdot_\huaE$ is given by \eqref{eq:New associative mult} and $a_\huaE(X)=a_A(\huaE\cdot_A X)$.
\end{pro}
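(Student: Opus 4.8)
The plan is to verify that $N=\huaE\cdot_A$ satisfies the two defining conditions of a Nijenhuis operator on the pre-$F$-algebroid $(A,\ast_A,\cdot_A,a_A)$ — one on the commutative associative algebra $(\Gamma(A),\cdot_A)$ and one on the pre-Lie algebroid $(A,\ast_A,a_A)$ — and then to read off the deformed structure from Theorem \ref{pro:pre-Lie Nijenhuis operator}. First I would note that $N=\huaE\cdot_A$ is $C^\infty(M)$-linear, hence a genuine bundle map $A\to A$. The associative Nijenhuis identity \eqref{eq:Nij-Struc1} is immediate and identical to the computation already used for $F$-algebroids: the three inner terms $N(X)\cdot_A Y$, $X\cdot_A N(Y)$ and $N(X\cdot_A Y)$ all equal $\huaE\cdot_A X\cdot_A Y$, so both sides reduce to $\huaE\cdot_A\huaE\cdot_A X\cdot_A Y$.

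The substantive step is the pre-Lie Nijenhuis identity \eqref{eq:Nij-Struc-pre}, i.e. $(\huaE\cdot_A X)\ast_A(\huaE\cdot_A Y)=\huaE\cdot_A\big((\huaE\cdot_A X)\ast_A Y+X\ast_A(\huaE\cdot_A Y)-\huaE\cdot_A(X\ast_A Y)\big)$. The idea is to rewrite each product of the form $Z\ast_A(\huaE\cdot_A W)$ using the tensor $\Psi$ from \eqref{eq:Com-Prelie relation2}. After expansion and cancellation of the $\huaE\cdot_A\big((\huaE\cdot_A X)\ast_A Y\big)$ terms, the difference of the two sides becomes $\Psi(\huaE\cdot_A X,\huaE,Y)+\big((\huaE\cdot_A X)\ast_A\huaE\big)\cdot_A Y-\huaE\cdot_A\Psi(X,\huaE,Y)-\huaE\cdot_A\big((X\ast_A\huaE)\cdot_A Y\big)$. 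The two $\ast_A\huaE$ terms cancel against each other by \eqref{eq:Eventua32} (used once with first argument $\huaE\cdot_A X$ and once with first argument $X$), leaving $\Psi(\huaE\cdot_A X,\huaE,Y)-\huaE\cdot_A\Psi(X,\huaE,Y)$. For the second term, the symmetry of $\Psi$ together with \eqref{eq:Eventua31} gives $\huaE\cdot_A\Psi(X,\huaE,Y)=-(\huaE\ast_A e)\cdot_A\huaE\cdot_A X\cdot_A Y$. For the first term, applying \eqref{eq:HM pre-Lie function2} shows that the correction terms cancel (both equal $-(\huaE\ast_A e)\cdot_A\huaE\cdot_A X\cdot_A Y$ by \eqref{eq:Eventua31}), so $\Psi(\huaE\cdot_A X,\huaE,Y)=\Psi(\huaE\cdot_A\huaE,X,Y)$; since $\huaE\cdot_A\huaE\in\frkE(A)$ by Proposition \ref{lem:Eventual1}, \eqref{eq:Eventua31} then gives $\Psi(\huaE\cdot_A\huaE,X,Y)=-\big((\huaE\cdot_A\huaE)\ast_A e\big)\cdot_A X\cdot_A Y$. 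Finally, specializing \eqref{eq:pre-Lie eventual1} at $Y=e$ and combining with \eqref{eq:identity property2} yields the scalar identity $\big((\huaE\cdot_A X)\ast_A e\big)\cdot_A Z=(\huaE\ast_A e)\cdot_A X\cdot_A Z$; putting $X=\huaE$ gives $(\huaE\cdot_A\huaE)\ast_A e=(\huaE\ast_A e)\cdot_A\huaE$, so the two surviving $\Psi$-terms agree and \eqref{eq:Nij-Struc-pre} holds.

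Once $N=\huaE\cdot_A$ is known to be a Nijenhuis operator on the pre-$F$-algebroid, the second assertion is formal: one checks that, for $N=\huaE\cdot_A$, the deformed operations of Theorem \ref{pro:pre-Lie Nijenhuis operator} specialize to $\cdot_N=\cdot_\huaE$ (as in \eqref{eq:New associative mult}, the three inner terms reducing to $X\cdot_A Y\cdot_A\huaE$), $\ast_N=\ast_\huaE$ (this is literally the definition of $\ast_\huaE$ in the statement), and $a_N=a_A\circ N$, i.e. $a_N(X)=a_A(\huaE\cdot_A X)=a_\huaE(X)$; hence Theorem \ref{pro:pre-Lie Nijenhuis operator} applies and $(A,\ast_\huaE,\cdot_\huaE,a_\huaE)$ is a pre-$F$-algebroid. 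The main obstacle I anticipate is purely organizational — keeping the bookkeeping of the many fourfold $\cdot_A$-products under control while reducing \eqref{eq:Nij-Struc-pre} to the two scalar relations $\Psi(\huaE\cdot_A X,\huaE,Y)=-\big((\huaE\cdot_A\huaE)\ast_A e\big)\cdot_A X\cdot_A Y$ and $\big((\huaE\cdot_A X)\ast_A e\big)\cdot_A Z=(\huaE\ast_A e)\cdot_A X\cdot_A Z$; no conceptual ingredient beyond \eqref{eq:Eventua31}, \eqref{eq:Eventua32}, \eqref{eq:pre-Lie eventual1}, \eqref{eq:HM pre-Lie function2} and Proposition \ref{lem:Eventual1} is needed.
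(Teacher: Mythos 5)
Your proposal is correct and follows essentially the paper's route: both verify directly that $N=\huaE\cdot_A$ satisfies the associative Nijenhuis identity (trivially) and the pre-Lie Nijenhuis identity \eqref{eq:Nij-Struc-pre} by expanding the mixed products through $\Psi$ and invoking \eqref{eq:Eventua31}, \eqref{eq:Eventua32} and the symmetry of $\Psi$, and then deduce the second claim from Theorem \ref{pro:pre-Lie Nijenhuis operator}. The only difference is that your detour through \eqref{eq:HM pre-Lie function2}, Proposition \ref{lem:Eventual1} and the extra identity $(\huaE\cdot_A\huaE)\ast_A e=(\huaE\ast_A e)\cdot_A\huaE$ is avoidable: by the full symmetry of $\Psi$ one has $\Psi(\huaE\cdot_A X,\huaE,Y)=\Psi(\huaE,\huaE\cdot_A X,Y)$, which \eqref{eq:Eventua31} evaluates at once to $-(\huaE\ast_A e)\cdot_A\huaE\cdot_A X\cdot_A Y$, exactly how the paper closes the computation.
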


\begin{proof}
First, we show that $N$ is a Nijenhuis operator on the pre-Lie algebroid $(A,\ast_A,a_A)$. By \eqref{eq:pseudo-pre-HM1}, we have
 $$\Psi(\huaE\cdot_A X,\huaE,Y)=\Psi( Y,\huaE\cdot_AX,\huaE),\quad \forall~X,Y\in\Gamma(A),$$
 which implies that
 \begin{equation}\label{eq:pre-Lie Nij1}
   (\huaE\cdot_A X)\ast_A   (\huaE\cdot_A Y)=Y\ast_A (X\cdot_A \huaE\cdot_A \huaE)-(Y\ast_A (\huaE\cdot_A X))\cdot_A\huaE+((\huaE\cdot X)\ast_A Y)\cdot_A \huaE.
 \end{equation}
 Since $\huaE$ is  a pseudo-eventual identity on $A$, by \eqref{eq:Eventua31} and the symmetry of $\Psi$, we have
 $$\Psi(X,\huaE,Y)=-(\huaE\ast_A e)\cdot_A X\cdot_A Y.$$
 which implies that
 \begin{equation}\label{eq:pre-Lie Nij2}
   X\ast_A   (\huaE\cdot_A Y)=-(\huaE\ast_A e)\cdot_A X\cdot_A Y-(X\ast_A \huaE)\cdot_A Y-(X\ast_A Y)\cdot_A \huaE.
 \end{equation}
 By  \eqref{eq:Eventua31}, \eqref{eq:Eventua32}, \eqref{eq:pre-Lie Nij1}, \eqref{eq:pre-Lie Nij2} and the symmetry of $\Psi$, we have
 \begin{eqnarray*}
&&N(X)\ast_A N(Y)-N\big(N(X)\ast_A Y+X\ast_A N(Y)-N(X\ast_A Y)\big)\\
&=&(\huaE\cdot_A X)\ast_A(\huaE\cdot_A Y)-\huaE\cdot_A\big((\huaE\cdot_A X)\ast_A Y+X\ast_A (\huaE\cdot_A Y)-\huaE\cdot_A (X\ast_A Y)\big)\\
&=&Y\ast_A (X\cdot_A \huaE\cdot_A \huaE)-(Y\ast_A (X\cdot_A \huaE))\cdot_A \huaE-(X\ast_A \huaE)\cdot_A Y\cdot_A \huaE+(\huaE\ast_A e)\cdot_A X\cdot_A Y\cdot_A \huaE\\
  &=&Y\ast_A (X\cdot_A \huaE\cdot_A \huaE)-(Y\ast_A (X\cdot_A \huaE))\cdot_A \huaE-(Y\ast_A \huaE)\cdot_A X\cdot_A \huaE+(\huaE\ast_A e)\cdot_A X\cdot_A Y\cdot_A \huaE\\
  &=&\Psi(Y,X\cdot_A \huaE,\huaE)+(\huaE\ast_A e)\cdot_A X\cdot_A Y\cdot_A \huaE\\
  &=&-(\huaE\ast_A e)\cdot_A X\cdot_A Y\cdot_A \huaE+(\huaE\ast_A e)\cdot_A X\cdot_A Y\cdot_A \huaE\\
  &=&0.
\end{eqnarray*}
Thus $N=\huaE\cdot_A$ is a Nijenhuis operator on the pre-Lie algebroid $(A,\ast_A,a_A)$.

Also, $N=\huaE\cdot_A$ is a Nijenhuis operator on the commutative associative algebra $(\Gamma(A),\cdot_A)$. Therefore, $N=\huaE\cdot_A$ is a Nijenhuis operator on the \preF~$(A,\ast_A,\cdot_A,a_A)$. The second claim follows.
\end{proof}

\begin{cor}\label{cor:Nij-F-algebroid-pre}
Let $(M,\nabla,\bullet)$ be an $F$-manifold with a compatible flat connection and $\huaE$ a pseudo-eventual identity on $M$. Then there is a new \preF~ structure on $TM$ given by
\begin{eqnarray*}
X\bullet_\huaE Y&=&X\bullet Y\bullet \huaE,\\	
X\ast_\huaE Y &=&\nabla_{\huaE\bullet X}Y+\nabla_{\huaE\bullet Y}X-\huaE\bullet(\nabla_X Y),\\
a_\huaE(X)&=&\huaE\bullet X,\quad\forall~X,Y\in\frkX(M).
\end{eqnarray*}

\end{cor}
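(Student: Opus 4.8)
The plan is to obtain this corollary as a direct specialization of Proposition \ref{pro:eventual identity Nij}. First I would observe that the data of an $F$-manifold with a compatible flat connection $(M,\nabla,\bullet)$ is, by the definition recalled above, precisely a pre-$F$-algebroid structure $(TM,\nabla,\bullet,\Id)$ on the tangent bundle: its pre-Lie product is $X\ast_A Y=\nabla_XY$, its commutative associative multiplication is $\cdot_A=\bullet$, and its anchor is $a_A=\Id$; moreover it is a PreLie-Com algebroid, so $\Psi\equiv 0$. Since we speak of a pseudo-eventual identity $\huaE$ on $M$, this pre-$F$-algebroid is implicitly assumed to carry an identity $e$ (the unit of $\bullet$), and $\huaE$ is by definition a pseudo-eventual identity on it. Hence Proposition \ref{pro:eventual identity Nij}, applied to $A=TM$, applies verbatim: $N=\huaE\bullet(-)$ is a Nijenhuis operator on $(TM,\nabla,\bullet,\Id)$, and consequently $(TM,\ast_\huaE,\cdot_\huaE,a_\huaE)$ is again a pre-$F$-algebroid.

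It then remains only to translate the abstract deformed operations of that proposition into the concrete formulas displayed in the statement. Substituting $\cdot_A=\bullet$ into \eqref{eq:New associative mult} gives $X\cdot_\huaE Y=X\bullet Y\bullet\huaE$, i.e. the stated $\bullet_\huaE$; substituting $a_A=\Id$ into $a_\huaE(X)=a_A(\huaE\cdot_A X)$ gives $a_\huaE(X)=\huaE\bullet X$. For the pre-Lie operation, the defining formula in Proposition \ref{pro:eventual identity Nij} (equivalently \eqref{eq:deformPLA} with $N=\huaE\bullet(-)$) becomes $X\ast_\huaE Y=\nabla_{\huaE\bullet X}Y+\nabla_X(\huaE\bullet Y)-\huaE\bullet\nabla_XY$, and the only real point is to rewrite the middle term; this is a short direct computation using that $(TM,\nabla,\bullet,\Id)$ is a PreLie-Com algebroid (so that \eqref{eq:Com-Prelie relation2} gives $\nabla_X(\huaE\bullet Y)=(\nabla_X\huaE)\bullet Y+\huaE\bullet\nabla_XY$) together with the torsion-free identity $\nabla_XY-\nabla_YX=[X,Y]$ for the flat connection $\nabla$, which puts $X\ast_\huaE Y$ into the form recorded in the corollary.

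The main point is therefore essentially bookkeeping: there is no genuine obstacle, because all the structural content — that $N=\huaE\bullet(-)$ is a Nijenhuis operator and that the deformed triple satisfies the pre-$F$-algebroid axioms (pre-Lie algebroid in $\ast_\huaE$, commutative associative in $\bullet_\huaE$, and the symmetry $\Psi_{\huaE}\equiv 0$), even though the new anchor $a_\huaE=\huaE\bullet(-)$ is no longer the identity — is already established in Theorem \ref{pro:pre-Lie Nijenhuis operator} and Proposition \ref{pro:eventual identity Nij}. The only thing one verifies by hand is the identification of the abstract deformed operations with the explicit tangent-bundle formulas in the statement.
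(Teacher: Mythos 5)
Your overall route is the intended one: an $F$-manifold with a compatible flat connection is precisely a \preF\ structure $(TM,\nabla,\bullet,\Id)$, and the corollary is meant as the specialization of Proposition \ref{pro:eventual identity Nij} (via Theorem \ref{pro:pre-Lie Nijenhuis operator}) with $N=\huaE\,\bullet$; your identifications $X\cdot_\huaE Y=X\bullet Y\bullet\huaE$ and $a_\huaE(X)=\huaE\bullet X$ are correct. But there are two problems with your execution. First, the hypothesis does not give a PreLie-Com algebroid: in the paper's terminology, ``$F$-manifold with a compatible flat connection'' corresponds to a \preF\ (only the symmetry \eqref{eq:pseudo-pre-HM1} of $\Psi$), while ``PreLie-Com manifold'' is the strictly stronger case $\Psi\equiv 0$ of \eqref{eq:prelie-com relation}. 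So the identity $\nabla_X(\huaE\bullet Y)=(\nabla_X\huaE)\bullet Y+\huaE\bullet\nabla_XY$ that you invoke is not available under the stated hypotheses.

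Second, and more seriously, the final rewriting fails even if one grants that identity. Specializing \eqref{eq:deformPLA} with $N=\huaE\,\bullet$ gives $X\ast_\huaE Y=\nabla_{\huaE\bullet X}Y+\nabla_X(\huaE\bullet Y)-\huaE\bullet\nabla_XY$, and torsion-freeness only yields $\nabla_X(\huaE\bullet Y)=\nabla_{\huaE\bullet Y}X+[X,\huaE\bullet Y]$; the bracket term is not cancelled by anything in your computation, so you do not reach the displayed middle term $\nabla_{\huaE\bullet Y}X$. In fact no computation can reach it: the Nijenhuis-deformed product satisfies the pre-Lie algebroid Leibniz rule $X\ast_\huaE(fY)=f(X\ast_\huaE Y)+\big((\huaE\bullet X)f\big)Y$ by Theorem \ref{pro:pre-Lie Nijenhuis operator}, whereas the operation $\nabla_{\huaE\bullet X}Y+\nabla_{\huaE\bullet Y}X-\huaE\bullet\nabla_XY$ produces the extra non-tensorial term $-(Xf)\,\huaE\bullet Y$ under $Y\mapsto fY$, so the two candidate products differ by $[X,\huaE\bullet Y]$ and cannot coincide. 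What the specialization of Proposition \ref{pro:eventual identity Nij} actually yields is the formula with $\nabla_X(\huaE\bullet Y)$ in the middle (the printed statement appears to carry a slip at exactly this spot); your proposal glosses over this with ``a short direct computation,'' and that step is a genuine gap.
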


\subsection{Applications to integral systems}
\emptycomment{Let $(M,\nabla,\bullet)$ be an $F$-manifold with a compatible flat connection. Let $(u_1,u_2,\cdots,u_n)$ be the canonical coordinate systems on $M$. Denote by $$\frac{\partial}{\partial{u^i}}\bullet\frac{\partial}{\partial{u^j}}=c_{ij}^k\frac{\partial}{\partial{u^k}}.$$}

\begin{thm}{\rm(\cite{LPR11})}\label{thm:integra2}
 Let $(M,\nabla,\bullet)$ be an $F$-manifold with a compatible flat connection. Let $(u^1,u^2,\cdots,u^n)$ be the canonical coordinate systems on $M$. If $X$ and $Y$ in $\frkX(M)$ satisfy
 $$(\nabla_Z X)\bullet W= (\nabla_W X)\bullet Z,\quad (\nabla_Z Y)\bullet W= (\nabla_W Y)\bullet Z,\quad\forall~W,Z\in \frkX(M),$$
 then
 the associated flows
 \begin{equation}
   u_t^i=c_{jk}^iX^ku^i_x \quad \mbox{ and } \quad u_\tau^i=c_{jk}^iY^ku^j_x
 \end{equation}
 commute, where $\frac{\partial}{\partial{u^i}}\bullet \frac{\partial}{\partial{u^j}}=c_{ij}^k\frac{\partial}{\partial{u^k}}$, $X=X^i\frac{\partial}{\partial{u^i}}$ and $Y=Y^i\frac{\partial}{\partial{u^i}}$.
 \emptycomment{\item[\rm(ii)] the $(1,1)$-tensor field $(V_X)_j^i=c_{jk}^iX^k$ satisfies the condition
 $$\nabla_{\frac{\partial}{\partial{u^k}}}(V_X)_j^i=\nabla_{\frac{\partial}{\partial{u^j}}}(V_X)_k^i,$$
 which is the Hamiltonian theory of systems of hydrodynamic type \cite{DN89}.}

\end{thm}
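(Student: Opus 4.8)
This result is due to \cite{LPR11}; the plan is to verify the commutativity of the two evolutionary flows directly. Commutativity means that $\partial_\tau(u^i_t)=\partial_t(u^i_\tau)$ holds identically on solutions once each system is substituted into the $\tau$- or $t$-derivative of the other; since this is an intrinsic property of the two flows and $\nabla$ is flat, I would first reduce to flat coordinates of $\nabla$, in which the Christoffel symbols vanish. This reduction is harmless because the $F$-manifold-with-flat-connection data and the two systems $u^i_t=c^i_{jk}X^ku^j_x$, $u^i_\tau=c^i_{jk}Y^ku^j_x$ transform tensorially, so commutativity may be checked in whatever coordinates trivialize $\nabla$. Abbreviating $(V_X)^i_j:=c^i_{jk}X^k$ and $(V_Y)^i_j:=c^i_{jk}Y^k$, a substitution of one system into the other and a collection of terms writes $\partial_\tau(u^i_t)-\partial_t(u^i_\tau)$ as a term proportional to $u^l_{xx}$ with coefficient $(V_XV_Y-V_YV_X)^i_l$, plus a term quadratic in the first derivatives $u^p_xu^q_x$; the theorem then amounts to the vanishing of both coefficients.

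For the $u_{xx}$-term I would observe that $V_X$ and $V_Y$ are nothing but the fibrewise operators of $\bullet$-multiplication by $X$ and by $Y$; commutativity and associativity of $\bullet$ give $V_XV_Y(Z)=X\bullet(Y\bullet Z)=(Y\bullet X)\bullet Z=V_YV_X(Z)$, so $[V_X,V_Y]=0$ and this term disappears. For the quadratic term I would feed the two remaining hypotheses in. First, plugging coordinate vector fields into the identity $\Psi=0$ defining the PreLie-Com structure and using $\nabla_{\partial_k}\partial_i=0$ forces $\partial_kc^m_{il}=0$, so the structure functions are constant in flat coordinates and $\partial_m(V_X)^i_l=c^i_{lk}\partial_mX^k$. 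Second, the hypothesis $(\nabla_ZX)\bullet W=(\nabla_WX)\bullet Z$ translates, in these coordinates, into the symmetry $\partial_m(V_X)^i_l=\partial_l(V_X)^i_m$, and likewise for $Y$. Substituting these relations into the quadratic coefficient and splitting it into the part carrying a derivative $\partial_p$ and the part carrying $\partial_q$, one finds — after using the two symmetry relations to move the derivative off the contracted index — that each part telescopes into $\partial_p$ (resp. $\partial_q$) applied to $(V_XV_Y-V_YV_X)$, which vanishes by the previous step. Hence the quadratic coefficient vanishes as well and the two flows commute; the same computation, read differently, also yields $\nabla_{\partial_k}(V_X)^i_j=\nabla_{\partial_j}(V_X)^i_k$, the classical integrability condition for systems of hydrodynamic type.

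I expect the main obstacle to be purely organizational rather than conceptual: the index bookkeeping in the quadratic term — tracking which unordered pair of factors $u^p_xu^q_x$ each of the eight monomials contributes to, symmetrizing them correctly, and recognizing that the leftover combinations reassemble as derivatives of the matrix commutator $V_XV_Y-V_YV_X$ — is the step where an error is easiest to make. The only other point to record carefully is the reduction to flat coordinates, namely the remark that commutativity of the two flows is coordinate-independent, so that it suffices to prove it after trivializing $\nabla$, where the structure functions are constant.
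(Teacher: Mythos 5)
The paper itself offers no proof of this theorem -- it is quoted from \cite{LPR11} -- so the relevant comparison is with the standard argument there, whose skeleton you reproduce correctly: pass to flat coordinates of $\nabla$, write the flows as $u^i_t=(V_X)^i_ju^j_x$, $u^i_\tau=(V_Y)^i_ju^j_x$ with $(V_X)^i_j=c^i_{jk}X^k$, kill the $u_{xx}$-coefficient by $[V_X,V_Y]=0$ (commutativity and associativity of $\bullet$), and observe that the coefficient of $u^j_xu^m_x$ collapses to $\partial_j\big((V_XV_Y-V_YV_X)^i_m\big)=0$ once one knows the symmetry $\partial_l(V_X)^i_j=\partial_j(V_X)^i_l$ and the same for $Y$.

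There is, however, a genuine error in how you obtain that symmetry. You take the defining identity of the hypothesis to be $\Psi=0$ and conclude that in flat coordinates $\partial_kc^m_{il}=0$, i.e.\ that the structure functions are constant. But the theorem only assumes an $F$-manifold with a compatible flat connection, i.e.\ that $(TM,\nabla,\bullet,\Id)$ is a pre-$F$-algebroid, which means $\Psi(X,Y,Z)=\Psi(Y,X,Z)$; the identity $\Psi=0$ defines the strictly smaller class of PreLie-Com manifolds. In flat coordinates the correct consequence is only the total symmetry $\partial_kc^m_{il}=\partial_ic^m_{kl}$ (Manin's condition that $\nabla c$ is totally symmetric); the structure functions are in general not constant (Frobenius manifolds, and the $F$-manifolds to which the paper's subsequent propositions apply, are typical examples), and if they were constant the theorem would be nearly vacuous. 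The gap is local and repairable: from $\partial_l(V_X)^i_j=(\partial_lc^i_{jk})X^k+c^i_{jk}\partial_lX^k$, the first summand is symmetric in $l,j$ by total symmetry of $\nabla c$, and the second is symmetric in $l,j$ because the hypothesis $(\nabla_{\partial_l}X)\bullet\partial_j=(\nabla_{\partial_j}X)\bullet\partial_l$ reads $c^i_{jk}\partial_lX^k=c^i_{lk}\partial_jX^k$ in flat coordinates. With this corrected derivation of $\partial_l(V_X)^i_j=\partial_j(V_X)^i_l$ (and likewise for $Y$), your telescoping computation of the quadratic term goes through verbatim, and this is precisely the argument of \cite{LPR11}. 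One further small point: the coordinates in which the verification is carried out are the flat coordinates of $\nabla$, which is how the statement's phrase ``canonical coordinate systems'' should be read here.
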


\begin{pro}
 Let $(M,\nabla,\bullet)$ be an $F$-manifold with a compatible flat connection and an identity $e$. Assume that $\huaE_1,\huaE_2\in\frkX(M)$ are pseudo-eventual identities. Then the flows
 \begin{equation}
   u_t^i=c_{jk}^iX^ku^i_x, \quad  u_\tau^i=c_{jk}^iY^ku^j_x,\quad  u_s^i=X^{p}Y^qc_{jk}^ic_{pq}^ku^i_x
 \end{equation}
 commute, where $\frac{\partial}{\partial{u^i}}\bullet \frac{\partial}{\partial{u^j}}=c_{ij}^k\frac{\partial}{\partial{u^k}}$, $\huaE_1=X^i\frac{\partial}{\partial{u^i}}$ and $\huaE_2=Y^i\frac{\partial}{\partial{u^i}}$.
\end{pro}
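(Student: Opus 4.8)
The plan is to deduce the statement directly from Theorem \ref{thm:integra2}, so that the work reduces to checking that its hypotheses are met by the three vector fields underlying the three flows. Throughout I regard $TM$ as the pre-$F$-algebroid $(TM,\nabla,\bullet,\Id)$ attached to $(M,\nabla,\bullet)$, whose pre-Lie operation is $\ast_A=\nabla$ and whose commutative associative product is $\cdot_A=\bullet$. The first key point is that, under this identification, equation \eqref{eq:Eventua32} in the definition of a pseudo-eventual identity $\huaE$ reads
$$(\nabla_Z\huaE)\bullet W=(\nabla_W\huaE)\bullet Z,\qquad\forall~W,Z\in\frkX(M),$$
which is exactly the condition imposed on the vector fields $X$ and $Y$ in Theorem \ref{thm:integra2}. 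Hence both $\huaE_1$ and $\huaE_2$, being pseudo-eventual identities by hypothesis, satisfy that condition.

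Next I would recognize the three flows in the statement as the flows associated, in the sense of Theorem \ref{thm:integra2}, to the three vector fields $\huaE_1$, $\huaE_2$ and $\huaE_1\bullet\huaE_2$. For $\huaE_1=X^i\frac{\partial}{\partial u^i}$ and $\huaE_2=Y^i\frac{\partial}{\partial u^i}$ the first two identifications are immediate from the definition of the flow; for the third, one computes $\huaE_1\bullet\huaE_2=X^pY^qc_{pq}^k\frac{\partial}{\partial u^k}$, so the flow associated to it is $u_s^i=c_{jk}^i\bigl(X^pY^qc_{pq}^k\bigr)u^j_x$, which is the third flow written in the statement. In order to apply Theorem \ref{thm:integra2} to the pairs involving $\huaE_1\bullet\huaE_2$ I would invoke Proposition \ref{lem:Eventual1}: the $\cdot_A$-product of two pseudo-eventual identities is again a pseudo-eventual identity, so $\huaE_1\bullet\huaE_2$ also satisfies \eqref{eq:Eventua32}, i.e. $(\nabla_Z(\huaE_1\bullet\huaE_2))\bullet W=(\nabla_W(\huaE_1\bullet\huaE_2))\bullet Z$ for all $W,Z\in\frkX(M)$.

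With all three vector fields $\huaE_1$, $\huaE_2$ and $\huaE_1\bullet\huaE_2$ satisfying the hypothesis of Theorem \ref{thm:integra2}, I would finish by applying that theorem to each of the three pairs $(\huaE_1,\huaE_2)$, $(\huaE_1,\huaE_1\bullet\huaE_2)$ and $(\huaE_2,\huaE_1\bullet\huaE_2)$; this yields the pairwise commutativity of the corresponding flows, which is the assertion. There is no real obstacle here: the proof only assembles Proposition \ref{lem:Eventual1} (closure of the set of pseudo-eventual identities under the multiplication $\cdot_A$) with Theorem \ref{thm:integra2}. The only points that need a little care are the two bookkeeping identifications — reading \eqref{eq:Eventua32} as the gradient-type condition appearing in Theorem \ref{thm:integra2}, and verifying that the displayed $u_s^i$ is precisely the flow attached to $\huaE_1\bullet\huaE_2$ — and both are routine index computations; in particular invertibility of $\huaE_1$ and $\huaE_2$ is not needed.
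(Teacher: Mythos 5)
Your proposal is correct and follows essentially the same route as the paper: recognize \eqref{eq:Eventua32} (with $\ast_A=\nabla$, $\cdot_A=\bullet$) as the hypothesis of Theorem \ref{thm:integra2}, use Proposition \ref{lem:Eventual1} to see that $\huaE_1\bullet\huaE_2=X^pY^qc^k_{pq}\frac{\partial}{\partial u^k}$ is again a pseudo-eventual identity, and then apply Theorem \ref{thm:integra2} pairwise. Your write-up in fact spells out the bookkeeping identifications slightly more explicitly than the paper does, but there is no difference in substance.
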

\begin{proof}
Since $\huaE_1\in\frkX(M)$ and $\huaE_2\in\frkX(M)$ are pseudo-eventual identities on $(M,\nabla,\bullet)$, by Proposition \ref{lem:Eventual1}, $\huaE_1\bullet \huaE_2$ is also a pseudo-eventual identity. Thus $\huaE_1,\huaE_2$ and $\huaE_1\bullet \huaE_2$ satisfy \eqref{eq:Eventua32}. Furthermore, we have
$$\huaE_1\bullet \huaE_2=X^{p}Y^qc^k_{pq}\frac{\partial}{\partial{u^k}}.$$
By Theorem \ref{thm:integra2}, the claim follows.
\end{proof}

 \begin{thm}{\rm(\cite{LPR11})}\label{thm:interal 1}
 Let $(M,\nabla,\bullet)$ be an $F$-manifold with a compatible flat connection. Let $(u^1,u^2,\cdots,u^n)$ be the canonical coordinate systems on $M$ and $(X_{(1,0)},\cdots,X_{(n,0)})$ a basis of flat vector fields. Define the primary flows by
 \begin{equation}\label{eq:primary flow}
   u_{t_{(p,0)}}^i=c_{jk}^iX_{(p,0)}^ku^j_x.
 \end{equation}
 Then there is a well-defined  higher flows of the hierarchy defined as
 \begin{equation}\label{eq:principal hierarchy}
   u_{t_{(p,\alpha)}}^i=c_{jk}^iX_{(p,\alpha)}^ku^j_x.
 \end{equation}
 by means of the following recursive relations:
 \begin{equation}
   \nabla_{\frac{\partial}{\partial{u^j}}}X^i_{(p,\alpha)}=c_{jk}^iX_{(p,\alpha-1)}^ku^k_x.
 \end{equation}
 Furthermore, the flows of the principal hierarchy \eqref{eq:principal hierarchy} commute.
 \end{thm}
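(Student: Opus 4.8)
The plan is to prove the two assertions of the theorem in turn: that the vector fields $X_{(p,\alpha)}$ furnished by the recursion exist and are well defined, and that the flows \eqref{eq:principal hierarchy} built from them commute pairwise. The first I would establish by induction on $\alpha$; the second then becomes an immediate application of Theorem \ref{thm:integra2}.

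For the existence, I would use that an $F$-manifold with a compatible flat connection is a \preF~$(TM,\nabla,\bullet,\Id)$, so that $\Psi(X,Y,Z)=\nabla_X(Y\bullet Z)-(\nabla_X Y)\bullet Z-Y\bullet(\nabla_X Z)$ is a $(3,1)$-tensor field, totally symmetric in its three arguments. The induction carries the strengthened hypothesis that $X_{(p,\alpha)}$ exists and satisfies
$$(\nabla_Z X_{(p,\alpha)})\bullet W=(\nabla_W X_{(p,\alpha)})\bullet Z,\qquad\forall~W,Z\in\frkX(M).$$
The base case $\alpha=0$ holds because $X_{(p,0)}$ is flat, so both sides vanish. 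For the inductive step, set $X:=X_{(p,\alpha-1)}$ and let $\omega$ be the $TM$-valued $1$-form $\omega(Z)=X\bullet Z$; solving the recursion $\nabla_{\partial/\partial u^j}X^i_{(p,\alpha)}=c^i_{jk}X^k_{(p,\alpha-1)}$ is the same as finding a $\nabla$-primitive $X_{(p,\alpha)}$ of $\omega$, which — $\nabla$ being flat — exists (locally, and globally when $M$ is simply connected) precisely when $\mathrm{d}^{\nabla}\omega=0$, where $(\mathrm{d}^{\nabla}\omega)(Z,W)=\nabla_Z\omega(W)-\nabla_W\omega(Z)-\omega([Z,W])$. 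Since $\nabla$ is torsion-free, $[Z,W]=\nabla_Z W-\nabla_W Z$, and rewriting $\nabla_Z(X\bullet W)$ and $\nabla_W(X\bullet Z)$ via the definition of $\Psi$ collapses the three terms to
$$(\mathrm{d}^{\nabla}\omega)(Z,W)=\Psi(Z,X,W)-\Psi(W,X,Z)+(\nabla_Z X)\bullet W-(\nabla_W X)\bullet Z.$$
The first difference vanishes by total symmetry of $\Psi$ and the second by the inductive hypothesis on $X=X_{(p,\alpha-1)}$, so $\mathrm{d}^{\nabla}\omega=0$ and $X_{(p,\alpha)}$ exists. Finally $(\nabla_Z X_{(p,\alpha)})\bullet W=(X\bullet Z)\bullet W$ is symmetric in $Z$ and $W$ by commutativity and associativity of $\bullet$, which restores the strengthened hypothesis at level $\alpha$ and closes the induction.

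With every $X_{(p,\alpha)}$ now in hand and verified to satisfy the displayed symmetry relation, any two of them fulfil the hypotheses of Theorem \ref{thm:integra2}; applying that theorem to each pair $X_{(p,\alpha)},X_{(q,\beta)}$ yields the pairwise commutativity of the flows \eqref{eq:principal hierarchy}.

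The step I expect to be the main obstacle is the expansion in the inductive step: one must check carefully that the closedness obstruction $\mathrm{d}^{\nabla}\omega$ reduces to exactly a ``$\Psi$-symmetry'' contribution plus the previous-level symmetry contribution, so that the only structural inputs used are the \preF~axiom (the symmetry of $\Psi$) and the commutativity–associativity of $\bullet$; once this identification is made the argument is purely formal. I would also note that the factor $u^k_x$ in the displayed recursion should be omitted — the recursion is the tensorial identity $\nabla_{\partial/\partial u^j}X^i_{(p,\alpha)}=c^i_{jk}X^k_{(p,\alpha-1)}$ — and that ``well-defined'' is meant modulo the freedom of adding flat vector fields.
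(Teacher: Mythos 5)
Your argument is correct, but note that the paper itself gives no proof of this statement: it is quoted (with citation) from \cite{LPR11}, so there is no internal proof to compare against. Your reconstruction is essentially the standard argument of \cite{LPR11}, recast invariantly in the language of this paper: the compatibility condition used there (symmetry of the covariant derivative of the structure constants) is precisely the total symmetry of $\Psi$ for the \preF~$(TM,\nabla,\bullet,\Id)$, and your identity $(\mathrm{d}^{\nabla}\omega)(Z,W)=\Psi(Z,X,W)-\Psi(W,X,Z)+(\nabla_Z X)\bullet W-(\nabla_W X)\bullet Z$ for $\omega(Z)=X\bullet Z$ is the coordinate-free form of the closedness check done there in canonical coordinates; flatness of $\nabla$ (which in this paper includes vanishing torsion, as you use when writing $[Z,W]=\nabla_Z W-\nabla_W Z$) then yields a primitive, locally always and globally on a simply connected base, unique up to adding a flat vector field. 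Your strengthened induction hypothesis $(\nabla_Z X_{(p,\alpha)})\bullet W=(\nabla_W X_{(p,\alpha)})\bullet Z$, restored at each step from the recursion together with commutativity and associativity of $\bullet$, is exactly what is needed both to close the induction and to feed every pair $X_{(p,\alpha)},X_{(q,\beta)}$ (including the flat generators at level $\alpha=0$) into Theorem \ref{thm:integra2}, giving pairwise commutativity of all flows of the hierarchy. You are also right that the factor $u^k_x$ in the displayed recursion is a typo (the recursion is the tensorial relation $\nabla_{\partial/\partial u^j}X^i_{(p,\alpha)}=c^i_{jk}X^k_{(p,\alpha-1)}$) and that ``well-defined'' should be read modulo the freedom of adding flat vector fields.
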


\begin{pro}
 Let $(M,\nabla,\bullet)$ be an $F$-manifold with a compatible flat connection and an identity $e$. Let $(X_{(1,0)},\cdots,X_{(n,0)})$ be a basis of flat vector fields. Assume that $\huaE\in\frkX(M)$ is a pseudo-eventual identity. Define the primary flows by
 \begin{equation}\label{eq:primary flow2}
   u_{t_{(p,0)}}^i=c_{jk}^mc_{ml}^i\huaE^lX_{(p,0)}^ku^j_x,
 \end{equation}
 where $\huaE=\huaE^i\frac{\partial}{\partial{u^i}}$. Then there is a well-defined  higher flows of the hierarchy defined as
 \begin{equation}\label{eq:principal hierarchy2}
   u_{t_{(p,\alpha)}}^i=c_{jk}^mc_{ml}^i\huaE^lX_{(p,\alpha)}^ku^j_x.
 \end{equation}
 by means of the following recursive relations:
 \begin{equation}
   \nabla_{\frac{\partial}{\partial{u^j}}}X^i_{(p,\alpha)}=c_{jk}^mc_{ml}^i\huaE^lX_{(p,\alpha-1)}^ku^k_x.
 \end{equation}
 Furthermore, the flows of the principal hierarchy \eqref{eq:principal hierarchy2} commute.
 \end{pro}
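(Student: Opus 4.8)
The plan is to deduce the statement from Theorem~\ref{thm:interal 1} applied to the $\huaE$-twisted multiplication. Since $\huaE\in\frkX(M)$ is a pseudo-eventual identity on the $F$-manifold with a compatible flat connection $(M,\nabla,\bullet)$, Corollary~\ref{cor:Nij-pre-F-algebroid} guarantees that $(M,\nabla,\bullet_\huaE)$ is again an $F$-manifold with a compatible flat connection, where $X\bullet_\huaE Y=X\bullet Y\bullet\huaE$. The crucial observation is that the flat connection $\nabla$ is \emph{not} altered by this deformation; hence the canonical coordinates $(u^1,\dots,u^n)$ and the chosen basis $(X_{(1,0)},\dots,X_{(n,0)})$ of flat vector fields remain valid data for $(M,\nabla,\bullet_\huaE)$ as well. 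Note that no identity is needed here: even if $\huaE$ fails to be invertible (so that $\bullet_\huaE$ has no unit), Theorem~\ref{thm:interal 1} only requires an $F$-manifold with a compatible flat connection together with flat coordinates and a basis of flat fields.

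Next I would compute the structure constants $\tilde c_{jk}^i$ of $\bullet_\huaE$ in the coordinates $(u^1,\dots,u^n)$. Writing $\huaE=\huaE^l\frac{\partial}{\partial u^l}$ and using the $C^\infty(M)$-bilinearity of $\bullet$, one obtains
$$
\frac{\partial}{\partial u^j}\bullet_\huaE\frac{\partial}{\partial u^k}
=\Big(c_{jk}^m\frac{\partial}{\partial u^m}\Big)\bullet\huaE
=c_{jk}^m\,\huaE^l\,c_{ml}^i\,\frac{\partial}{\partial u^i},
$$
so that $\tilde c_{jk}^i=c_{jk}^m c_{ml}^i\huaE^l$. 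Substituting $\tilde c$ for $c$ in the formulas of Theorem~\ref{thm:interal 1}, the primary flows $u_{t_{(p,0)}}^i=\tilde c_{jk}^i X_{(p,0)}^k u^j_x$, the higher flows $u_{t_{(p,\alpha)}}^i=\tilde c_{jk}^i X_{(p,\alpha)}^k u^j_x$, and the recursive relations $\nabla_{\frac{\partial}{\partial u^j}}X^i_{(p,\alpha)}=\tilde c_{jk}^i X^k_{(p,\alpha-1)}u^k_x$ become exactly \eqref{eq:primary flow2}, \eqref{eq:principal hierarchy2}, and the recursion in the statement.

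Finally, Theorem~\ref{thm:interal 1} applied verbatim to $(M,\nabla,\bullet_\huaE)$ yields both the well-definedness of the higher flows through the stated recursion and the commutativity of the resulting principal hierarchy, which is precisely the claim. I do not expect a genuine obstacle: the only content beyond routine index bookkeeping is recognizing that $\bullet_\huaE$ is again an $F$-manifold with a compatible flat connection, which is supplied by Corollary~\ref{cor:Nij-pre-F-algebroid}. The one point worth stating carefully is that the canonical coordinates of $(M,\nabla,\bullet)$ serve as canonical coordinates for $(M,\nabla,\bullet_\huaE)$, and this holds because those coordinates are determined by $\nabla$, which is left untouched by the deformation.
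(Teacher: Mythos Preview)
Your proposal is correct and follows essentially the same route as the paper: one observes that $(M,\nabla,\bullet_\huaE)$ is again an $F$-manifold with a compatible flat connection (the paper cites Proposition~\ref{thm:construction F-algebroid3}, of which your Corollary~\ref{cor:Nij-pre-F-algebroid} is the tangent-bundle specialization), computes the new structure constants $\tilde c_{jk}^i=c_{jk}^m c_{ml}^i\huaE^l$, and then invokes Theorem~\ref{thm:interal 1} verbatim. Your additional remarks about the flat connection being unchanged and the identity being unnecessary are correct but not spelled out in the paper's (terser) proof.
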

\begin{proof}
Since $\huaE\in\frkX(M)$ is a pseudo-eventual identity on $(M,\nabla,\bullet)$, by Proposition \ref{thm:construction F-algebroid3}, $(M,\nabla,\bullet_\huaE)$ is also an $F$-manifold with a compatible flat connection, where
$$X\bullet_\huaE Y=X\bullet Y\bullet \huaE,\quad \forall~X,Y\in\frkX(M).$$
Furthermore, we have $$\frac{\partial}{\partial{u^i}}\bullet_\huaE\frac{\partial}{\partial{u^j}}=c_{ij}^mc_{ml}^k\huaE^l\frac{\partial}{\partial{u^k}}.$$
By Theorem \ref{thm:interal 1}, the claim follows.
\end{proof}

 \end{document}